\renewcommand{\paragraph}{\roman{paragraph}}
\renewcommand\title[1]{\gdef\@title{\reset@font\Large\bfseries #1}}
\renewcommand\section{\@startsection {section}{1}{\z@}%
                                   {-3.5ex \@plus -1ex \@minus -.2ex}%
                                   {2.3ex \@plus.2ex}%
                                   {\normalfont\large\bfseries}}
\renewcommand\subsection{\@startsection{subsection}{2}{\z@}%
                                     {-3ex\@plus -1ex \@minus -.2ex}%
                                     {1.5ex \@plus .2ex}%
                                     {\normalfont\normalsize\bfseries}}
\renewcommand\subsubsection{\@startsection{subsubsection}{3}{\z@}%
                                     {-2.5ex\@plus -1ex \@minus -.2ex}%
                                     {1.5ex \@plus .2ex}%
                                     {\normalfont\normalsize\bfseries}}
\def\@runningauthor{}\newcommand{\runningauthor}[1]{\def\runningauthor{#1}}
\def\@runningtitle{}\newcommand{\runningtitle}[1]{\def\runningtitle{#1}}
\renewcommand{\ps@plain}{%
\renewcommand{\@evenhead}{\footnotesize\scshape \hfill\runningauthor\hfill}
\renewcommand{\@oddhead}{\footnotesize\scshape \hfill\runningtitle\hfill}}
\newcommand{\F}{\mathbb{F}}
\newcommand {\ccc}{{\mathbf{c}}}
\g@addto@macro\bfseries{\boldmath}
\theoremstyle{plain}
\newtheorem{theorem}{Theorem}[section]
\newtheorem{lem}[theorem]{Lemma}
\newtheorem{cor}[theorem]{Corollary}
\newtheorem{prop}[theorem]{Proposition}
\theoremstyle{definition}
\newtheorem{definition}[theorem]{Definition}
\newtheorem{example}[theorem]{Example}
\theoremstyle{remark}
\newtheorem{remark}[theorem]{Remark}
\runningauthor{}
\date{}
\begin{document}
\begin{sloppypar}

\title{A further study on the mass formula for linear codes with prescribed hull dimension\thanks{The research of Shitao Li and Minjia Shi is supported by the National Natural Science Foundation of China under Grant 12071001. The research of San Ling is supported by Nanyang Technological University Research Grant 04INS000047C230GRT01.}
}
\author{Shitao Li, Minjia Shi\thanks{Shitao Li and Minjia Shi are with the Key Laboratory of Intelligent Computing and Signal Processing, Ministry of Education, School of Mathematical Sciences, Anhui University, Hefei, 230601, China (email: lishitao0216@163.com, smjwcl.good@163.com).}, Yang Li\thanks{Yang Li is with the School of Mathematics, Hefei University of Technology, Hefei, 230601, China (email: yanglimath@163.com).}, San Ling\thanks{San Ling is with the School of Physical and Mathematical Sciences, Nanyang Technological University, Singapore 637371 (email: lingsan@ntu.edu.sg).}
}

\maketitle

\begin{abstract}
Finding a mass formula for a given class of linear codes is a fundamental problem in combinatorics and coding theory.
In this paper, we consider the action of the unitary (resp. symplectic) group on the set of all Hermitian (resp. symplectic) linear complementary dual (LCD) codes, prove that all Hermitian (resp. symplectic) LCD codes are on a unique orbit under this action, and determine the formula for the size of the orbit. Based on this, we develop a general technique to obtain a closed mass formula for linear codes with prescribed Hermitian (resp. symplectic) hull dimension, and further obtain some asymptotic results.
\end{abstract}
{\bf Keywords:} Hull, mass formula, unitary group, symplectic group\vspace{3mm}\\
\noindent{\bf Mathematics Subject Classification 2010} 05E18 20H30 11T71

\section{Introduction}

Finding a mass formula for a given class of linear codes is in fact counting the number of distinct codes in this class, and it is also a fundamental problem in combinatorics and coding theory. Mass formulas are a very important tool to classify linear codes based on some criterion (e.g. up to equivalence). Specifically, mass formulas can check whether the orbit-stabilizer theorem applied to a set of classified codes leads to the same number; see for example \cite{binaryLCD,Huffman-Kim-Sole,Huffman,2023IT-hull,Li-11,JCTA3}.
This approach can even be used to show that a set of codes found in a non-exhaustive search is complete.

The hull of a linear code over a finite field is the intersection of the code and its dual code with respect to a certain inner product (such as the Euclidean, Hermitian or symplectic inner product). The Euclidean hull was introduced by Assmus and Key \cite{A-hull-DM} to classify finite projective planes. This concept generalizes the notions of self-orthogonal codes and linear complementary dual (LCD) codes. Linear codes with prescribed hull dimension have attracted much attention lately since they have applications in lattice theory \cite{JA}, design theory \cite{JCTA1}, Boolean masking in embarked cryptographic computations \cite{lcd-appl}, the complexity of some algorithms \cite{J-p-group,Sen-p-e-codes,Sen-S-auto-group}, and quantum coding theory \cite{IT-qc,Dcc-EAqecc,QINP-sym,quantum-codes-IT-2}. There are many characterizations of linear codes with prescribed hull dimension under different inner products (see \cite{H-LCD-3,sym-LCD,LCD-Massey,Dcc-EAqecc}). However, it is a non-trivial and difficult problem to obtain their mass formulas.

There are many techniques for finding a mass formula for linear codes with prescribed Euclidean hull dimension, for example, finite geometries \cite{Pless-mass-F,Pless-mass-F2}, combinatorial numbers \cite{SIAM-DM}, group actions \cite{binaryLCD}, as well as special equations \cite{FFA-mass}. Specifically, Pless \cite{Pless-mass-F,Pless-mass-F2} in 1968 was the first one to study the mass formula and obtained a mass formula for Euclidean self-orthogonal codes by using finite geometries. Based on this, Sendrier \cite{SIAM-DM} in 1997 theoretically determined the mass formula for linear codes with prescribed Euclidean hull dimension by considering some combinatorial numbers. However, it involves the number of Euclidean self-orthogonal codes and many complex summation terms. Sendrier \cite{SIAM-DM} also believed that the mass formula he obtained is only theoretically useful and in most cases intractable. Therefore, an important topic is to obtain a simple closed mass formula for linear codes with prescribed Euclidean hull dimension.

For this topic, Carlet $et~al.$ \cite{binaryLCD} obtained a closed mass formula of $q$-ary Euclidean LCD codes by using the action of the orthogonal group on the set of all Euclidean LCD codes, where $q=2$ or $q$ is odd. Liu and Wang \cite{FFA-mass} obtained a mass formula for Euclidean LCD codes over any finite field by using a special family of generator matrices for LCD codes. Note that the mass formula obtained by Carlet $et~al.$ \cite{binaryLCD} is different in form from that obtained by Liu and Wang \cite{FFA-mass}. Li and Shi \cite{2023IT-hull} recently obtained a closed mass formula for binary linear codes with prescribed Euclidean hull dimension, and further classified (optimal) binary linear codes with small parameters. Very recently, Li $et~al.$ \cite{Li-11} presented an important characterization for the types of dual codes of linear
codes with prescribed Euclidean hull dimension. Based on this characterization, they obtained a closed mass formula for linear codes with prescribed Euclidean hull dimension over any finite field, and further classified (optimal) ternary linear codes with small parameters. Notably, these mass formulas are simpler compared with that obtained by Sendrier \cite{SIAM-DM}.

The analogous problem is also interesting and important for the Hermitian and symplectic inner products. This is because the Hermitian and symplectic inner products have greater potential than the Euclidean inner product in constructing quantum error-correcting codes (QECCs) and entanglement-assisted QECCs. Furthermore, the mass formula for linear codes with prescribed Hermitian or symplectic hull dimension is very useful for obtaining the Gilbert-Varshamov bounds of QECCs and entanglement-assisted QECCs \cite{QINP-sym,ISIT-Jin}. However, there is little research on the mass formula for linear codes with prescribed Hermitian or symplectic hull dimension. Liu and Wang \cite{FFA-mass} gave a mass formula for Hermitian LCD codes, which involves the Jacobi sum and is therefore intractable. Jin and Xing \cite{ISIT-Jin} obtained a mass formula for symplectic self-orthogonal codes in order to obtain the Gilbert-Varshamov bound of QECCs. Unfortunately, we find that there are some small typos in the formulas obtained by Jin and Xing \cite{ISIT-Jin}. Therefore, it will be necessary and interesting to conduct a more systematic study on the closed mass formula for linear codes with prescribed Hermitian or symplectic hull dimension.

In this paper, we prove that all Hermitian (resp. symplectic) LCD codes are on a unique orbit by considering the action of the unitary (resp. symplectic) group on the set of all Hermitian (resp. symplectic) LCD codes. As a consequence, we obtain mass formulas for Hermitian and symplectic LCD codes. The mass formula for Hermitian LCD codes does not involve the Jacobi sum, and hence it is simpler compared with that obtained by Liu and Wang \cite{FFA-mass}. Based on this, we develop a general technique to obtain a closed mass formula for linear codes with prescribed Hermitian (resp. symplectic) hull dimension, and further obtain some asymptotic results. We also provide corrections for the formulas obtained by Jin and Xing \cite{ISIT-Jin} on symplectic self-orthogonal codes.

The paper is organized as follows. In Section \ref{sec-2}, we give some preliminaries.
In Section \ref{sec-3}, we obtain a mass formula for linear codes with prescribed Hermitian hull dimension, and further obtain some asymptotic results. In Section \ref{sec-4}, we consider the corresponding problem for the symplectic inner product. In Section \ref{sec-5}, we conclude this paper.

\section{Preliminaries}\label{sec-2}
\subsection{Linear codes}

Throughout this paper, let $O$ denote an appropriate zero matrix and let $I_n$ denote the identity matrix of order $n$.
Let $\F_q$ denote the finite field with $q$ elements, where $q$ is a prime power. A {\em linear $[n,k]_q$ code} $C$ is a $k$-dimensional subspace of $\F_q^n$. A {\em generator matrix} of a linear $[n, k]_q$ code $C$ is any $k\times n$ matrix $G$ whose rows form a basis of $C$. For any $[n, k]_q$ linear code $C$ and $n\times n$ matrix $Q$, let $CQ$ be the linear code defined by
$$CQ=\{{\bf c}Q~|~{\bf c}\in C\}.$$

For a matrix $A$ over $\F_q$, let $A^T$ denote the transpose of $A$. For a matrix $A = (a_{ij})$ over $\F_{q^2}$, the conjugate matrix of $A$ is defined as $\overline{A} = (\overline{a_{i j}})$, where $\overline{a_{i j}}=a_{ij}^q$. Let diag$(A_1,\ldots,A_k)$ denote a block diagonal matrix, where $A_i$ is a square matrix for $1\leq i\leq k.$\vspace{2mm}

\noindent {\bf Hermitian dual} The {\em Hermitian dual} code $C^{\perp_{\rm H}}$ of a linear $[n,k]_{q^2}$ code $C$ is defined by
$$C^{\perp_{\rm H}}=\{\textbf y\in \F_{q^2}^{n}~|~\langle \textbf x, \textbf y\rangle_{\rm H}=0~ {\rm for\ all}\ \textbf x\in C \},$$
where $\langle \textbf x, \textbf y\rangle_{\rm H}=\sum_{i=1}^n x_iy_i^q$ for ${\bf x} = (x_1, x_2, \ldots, x_n)$ and $\textbf y = (y_1, y_2, \ldots, y_n)\in \F_{q^2}^{n}$. The {\em Hermitian hull} of a linear code $C$ is defined by
${\rm Hull_H}(C)= C \cap C^{\perp_{\rm H}}.$
A linear $[n,k]_{q^2}$ code $C$ is called {\em Hermitian self-orthogonal} if $C\subseteq C^{\perp_{\rm H}},$ and {\em Hermitian LCD} if $C\cap C^{\perp_{\rm H}}=\{{\bf 0}\}.$\vspace{2mm}

\noindent {\bf Symplectic dual}
For $\textbf x = (x_1, x_2,\ldots,x_{2n})$ and $\textbf y = (y_1,   y_2, \ldots,y_{2n})\in \F_q^{2n}$, we define the {\em symplectic inner product} of ${\bf x}$ and ${\bf y}$ as follows:
$$\langle {\bf x}, {\bf y}\rangle_{\rm s}=\sum_{i=1}^n(x_iy_{n+i}-x_{n+i}y_i)={\bf x}\Omega_n{\bf y}^T,~{\rm where}~\Omega_n=\begin{pmatrix}
                   O & I_n \\
                   -I_n & O
                 \end{pmatrix}.$$
The {\em symplectic dual} code $C^{\perp_{\rm s}}$ of a linear $[2n,k]_q$ code $C$ is defined by
\begin{align*}
  C^{\perp_{\rm s}}&=\{\textbf y\in \F_q^{2n}~|~\langle \textbf x, \textbf y\rangle_{\rm s}=0~ {\rm for\ all}\ \textbf x\in C \}.
\end{align*}
The {\em symplectic hull} of a linear code $C$ is defined by
${\rm Hull_s}(C)= C \cap C^{\perp_{\rm s}}.$
A linear code $C$ is called {\em symplectic self-orthogonal} if $C\subseteq C^{\perp_{\rm s}}$, and {\em symplectic LCD} if $C\cap C^{\perp_{\rm s}}=\{{\bf 0}\}.$


The following proposition will be frequently used in the paper.

\begin{prop}\label{prop-hull}
\begin{itemize}
  \item [{\rm (1)}] {\rm \cite[Proposition 3.2]{Dcc-EAqecc}} Let $C$ be a linear $[n, k]_{q^2}$ code with a generator matrix $G$. Then $C$ has $\ell$-dimensional Hermitian hull if and only if $\ell = k -{\rm rank}(G\overline{G}^T)$.
  \item [{\rm (2)}] {\rm \cite[Proposition 2]{sym-SO}} Let $C$ be a linear $[2n, k]_{q}$ code with a generator matrix $G$. Then $C$ has $\ell$-dimensional symplectic hull if and only if $\ell = k -{\rm rank}(G\Omega_nG^T)$.
\end{itemize}
\end{prop}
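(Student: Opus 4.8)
The plan is to handle both parts by the same linear-algebra template, since the Hermitian and symplectic inner products share the shape $\langle {\bf x},{\bf y}\rangle = {\bf x}\,M\,{\bf y}^{\sigma}$ for a suitable matrix $M$ and field operation $\sigma$: for (1) one has $M=I_n$ and ${\bf y}^{\sigma}=\overline{{\bf y}}$ (entrywise $q$-th power), and for (2) one has $M=\Omega_n$ and $\sigma$ the identity. The whole proof is a routine computation of the dimension of a kernel, so I will only sketch it.

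First I would translate membership in the dual into a matrix equation. Write ${\bf g}_1,\dots,{\bf g}_k$ for the rows of $G$. A vector ${\bf x}\in\F_{q^2}^n$ lies in $C^{\perp_{\rm H}}$ if and only if $\langle{\bf x},{\bf g}_i\rangle_{\rm H}=0$ for every $i$, i.e.\ ${\bf x}\,\overline{{\bf g}_i}^T=0$ for all $i$, which is precisely ${\bf x}\,\overline{G}^T={\bf 0}$. In the same way, ${\bf x}\in C^{\perp_{\rm s}}$ if and only if ${\bf x}\,\Omega_nG^T={\bf 0}$, using $\langle{\bf x},{\bf y}\rangle_{\rm s}={\bf x}\,\Omega_n{\bf y}^T$.

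Next I would parametrize $C$ by ${\bf x}={\bf v}G$ with ${\bf v}\in\F_{q^2}^k$ (resp.\ ${\bf v}\in\F_q^k$). Since $G$ has full row rank $k$, the map ${\bf v}\mapsto{\bf v}G$ is an isomorphism onto $C$. Under this identification, ${\bf v}G\in{\rm Hull_H}(C)=C\cap C^{\perp_{\rm H}}$ if and only if ${\bf v}\,(G\overline{G}^T)={\bf 0}$, so ${\rm Hull_H}(C)$ is isomorphic to the left kernel of the $k\times k$ matrix $G\overline{G}^T$, whose dimension is $k-{\rm rank}(G\overline{G}^T)$. Hence $\ell=k-{\rm rank}(G\overline{G}^T)$, which gives (1); part (2) is identical with $G\Omega_nG^T$ in place of $G\overline{G}^T$.

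There is no genuine obstacle here. The only points requiring a little care are the bookkeeping of transposes and conjugates when passing from the (sesqui)linear form to the matrix equation, and the observation that the right-hand side does not depend on the choice of generator matrix: if $G'=PG$ with $P$ invertible, then $G'\overline{G'}^T=P(G\overline{G}^T)\overline{P}^T$ and $G'\Omega_nG'^T=P(G\Omega_nG^T)P^T$, which have the same ranks as $G\overline{G}^T$ and $G\Omega_nG^T$ respectively — consistent with the fact that the hull dimension $\ell$ is intrinsic to $C$.
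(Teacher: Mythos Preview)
Your argument is correct and is exactly the standard proof of this fact. The paper itself does not prove Proposition~\ref{prop-hull} at all; it simply quotes the two statements from \cite{Dcc-EAqecc} and \cite{sym-SO}, so there is nothing to compare your approach against beyond noting that your kernel computation is the usual one underlying those references.
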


For a number $q$ with $q\neq 1$, the Gaussian binomial coefficient $\begin{bmatrix}
                     n \\
                     k
                   \end{bmatrix}_q$ is defined to be
$$
 \begin{bmatrix}
                    n \\
                    k
                  \end{bmatrix}_q=
 \frac{(q^n-1)(q^n-q)\cdots(q^n-q^{k-1})}{(q^k-1)(q^k-q)\cdots(q^k-q^{k-1})}~{\rm if}~k\neq 0~{\rm and}~\begin{bmatrix}
                    n \\
                    0
                  \end{bmatrix}_q=1.$$
The Gaussian binomial coefficients have the same symmetry as that of
binomial coefficients, $i.e.$,
$\begin{bmatrix}
                     n \\
                     k
                   \end{bmatrix}_q=\begin{bmatrix}
                     n \\
                     n-k
                   \end{bmatrix}_q$.

\subsection{Unitary group and symplectic group}
Let $\mathbb{G}\mathbb{L}_{n}(q)$ denote the {\em general linear group} of degree $n$ over $\F_q$, which is the set of $n\times n$ invertible matrices over $\F_q$.
An $n\times n$ matrix $Q$ over $\F_{q^2}$ is called a {\em unitary matrix} if $Q\in \mathbb{G}\mathbb{L}_{n}(q^2)$ and $Q\overline{Q}^T= I_n$. It can be checked that an $n\times n$ matrix $Q$ is a unitary matrix if and only if
$\langle{\bf u}Q,{\bf v}Q\rangle_{\rm H}=\langle{\bf u},{\bf v}\rangle_{\rm H}$ for all ${\bf u},{\bf v}\in \F_{q^2}^{n}$. The {\em unitary group} is defined by
$\mathbb{U}_{n}(q^2)=\{Q\in \mathbb{G}\mathbb{L}_{n}(q^2)~|~Q\overline{Q}^T=I_n\}.$
The size of the unitary group can be found in \cite[Theorem 11.28]{sym-group}, which says that
\begin{align}\label{eq-unitary-g}
|\mathbb{U}_{n}(q^2)|&=q^{\frac{n(n-1)}{2}}\prod_{i=1}^{n}(q^i-(-1)^i).
\end{align}

A $2n\times 2n$ matrix $Q$ is called a {\em symplectic matrix} with respect to $\Omega_n$ if $Q\Omega_nQ^T=\Omega_n$. It can be checked that a $2n\times 2n$ matrix $Q$ is a symplectic matrix with respect to $\Omega_n$ if and only if
$\langle{\bf u}Q,{\bf v}Q\rangle_{\rm s}=\langle{\bf u},{\bf v}\rangle_{\rm s}$ for all ${\bf u},{\bf v}\in \F_q^{2n}$.
The {\em symplectic group} with respect to $\Omega_n$ is defined by
$$\mathbb{S}\mathrm{p}^1_{2n}(q)=\{Q\in \mathbb{G}\mathbb{L}_{2n}(q)~|~Q\Omega_nQ^T=\Omega_n\}.$$
Similarly, the {\em symplectic group} with respect to ${\rm diag}(\underbrace{J_{2},J_{2},\ldots,J_{2}}_{n})$ is defined by
$$\mathbb{S}\mathrm{p}^2_{2n}(q)=\{Q\in \mathbb{G}\mathbb{L}_{2n}(q)~|~Q{\rm diag}(\underbrace{J_{2},J_{2},\ldots,J_{2}}_{n})Q^T={\rm diag}(\underbrace{J_{2},J_{2},\ldots,J_{2}}_{n})\},$$
where $J_2=\begin{pmatrix}
               0 & 1 \\
               -1 & 0 \\
             \end{pmatrix}.$
It is easy to see that there is a one-to-one correspondence between $\mathbb{S}\mathrm{p}^1_{2n}(q)$ and $\mathbb{S}\mathrm{p}^2_{2n}(q)$.
The size of the symplectic group can be found in \cite{sym-group}, which says that
\begin{align}\label{eq-sp}
|\mathbb{S}\mathrm{p}^1_{2n}(q)|&=|\mathbb{S}\mathrm{p}^2_{2n}(q)|
=q^{n^2}\prod_{i=1}^n(q^{2i}-1).
\end{align}

\section{A mass formula for linear codes with prescribed Hermitian hull dimension}\label{sec-3}

The main purpose of this section is to obtain a mass formula for linear codes with prescribed Hermitian hull dimension. We first recall a characterization of linear codes with prescribed Hermitian hull dimension as follows.

\begin{prop}{\rm\cite[Theorem 14]{FFA-mass}}\label{prop-character-Hhull}
Let $C$ be a linear $[n,k]_{q^2}$ code. Then $C$ has $\ell$-dimensional Hermitian hull if and only if there exists a generator matrix $G$ of $C$ such that $G\overline{G}^T={\rm diag}(\underbrace{1,1,\ldots,1}_{k-\ell},0,\ldots,0).$
\end{prop}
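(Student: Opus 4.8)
The plan is to deduce the ``if'' direction immediately from Proposition~\ref{prop-hull}(1), and to obtain the converse by choosing a generator matrix of $C$ adapted to the Hermitian hull and then normalizing the resulting Gram matrix by means of standard facts about Hermitian forms over $\F_{q^2}$.

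For the ``if'' direction, if $G$ is a generator matrix of $C$ with $G\overline{G}^T={\rm diag}(\underbrace{1,\ldots,1}_{k-\ell},0,\ldots,0)$, then ${\rm rank}(G\overline{G}^T)=k-\ell$, so Proposition~\ref{prop-hull}(1) gives that ${\rm Hull_H}(C)$ has dimension $k-(k-\ell)=\ell$.

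For the ``only if'' direction, I would first record that for \emph{any} generator matrix $G$ of $C$ the Gram matrix $M:=G\overline{G}^T$ is Hermitian, i.e. $\overline{M}^{T}=M$; this is immediate from $a^{q^2}=a$ for $a\in\F_{q^2}$. Let $H={\rm Hull_H}(C)$, pick a basis $v_{k-\ell+1},\ldots,v_k$ of $H$, extend it to a basis $v_1,\ldots,v_k$ of $C$, and let $G_1$ be the corresponding generator matrix. Since $v_{k-\ell+1},\ldots,v_k\in C^{\perp_{\rm H}}$, the last $\ell$ columns of $M_1:=G_1\overline{G_1}^{T}$ vanish, and since $M_1$ is Hermitian its last $\ell$ rows vanish as well; hence $M_1={\rm diag}(M',O)$ for a Hermitian matrix $M'$ of order $k-\ell$. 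Proposition~\ref{prop-hull}(1) applied to $G_1$ yields ${\rm rank}(M_1)=k-\ell$, so $M'$ is nonsingular. Now replace $G_1$ by $G:=PG_1$ with $P={\rm diag}(P',I_\ell)$, $P'\in\mathbb{G}\mathbb{L}_{k-\ell}(q^2)$; this is again a generator matrix of $C$, preserves the lower-right zero block, and sends $M_1$ to ${\rm diag}(P'M'\overline{P'}^{T},O)$. It therefore remains to choose $P'$ with $P'M'\overline{P'}^{T}=I_{k-\ell}$, after which $G$ has the desired Gram matrix $\mathrm{diag}$ form.

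This last point is precisely the assertion that every nonsingular Hermitian matrix over $\F_{q^2}$ is Hermitian-congruent to the identity. One diagonalizes $M'$ by a Gram--Schmidt-type reduction --- the diagonal entries of a Hermitian matrix lie in the fixed field $\F_q$, so the pivots are elements of $\F_q^{*}$ --- and then rescales each pivot $a\in\F_q^{*}$ to $1$ by writing $a=\lambda\overline{\lambda}$ via the surjectivity of the norm map $N_{\F_{q^2}/\F_q}\colon\F_{q^2}^{*}\to\F_q^{*}$ and conjugating by the corresponding diagonal matrix. I expect this normalization to be the main obstacle, in particular checking that the Gram--Schmidt reduction still works in characteristic $2$, where one must verify that a nonzero Hermitian form always possesses a vector $v$ with $\langle v,v\rangle_{\rm H}\neq0$ to pivot on (this follows from nondegeneracy of the trace form $\F_{q^2}\to\F_q$). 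Everything else is routine bookkeeping with block matrices and with the congruence action $M\mapsto PM\overline{P}^{T}$.
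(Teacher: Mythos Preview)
Your argument is correct. The ``if'' direction is immediate from Proposition~\ref{prop-hull}(1), and for the converse your two-step reduction---first passing to a generator matrix whose last $\ell$ rows span ${\rm Hull_H}(C)$ so that $G_1\overline{G_1}^{T}={\rm diag}(M',O)$ with $M'$ a nonsingular Hermitian $(k-\ell)\times(k-\ell)$ matrix, then invoking the classification of nondegenerate Hermitian forms over $\F_{q^2}$ to replace $M'$ by $I_{k-\ell}$---is exactly the standard route. Your remark about characteristic $2$ is also correct: nondegeneracy of the $\F_{q^2}/\F_q$ trace guarantees an anisotropic vector, so the Gram--Schmidt step does not fail there; and surjectivity of the norm $N_{\F_{q^2}/\F_q}$ handles the rescaling.

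There is nothing to compare against in the present paper: Proposition~\ref{prop-character-Hhull} is stated without proof and attributed to \cite[Theorem~14]{FFA-mass}. Your proof is the natural one and is essentially what one finds in that reference (or in Grove~\cite{sym-group} for the underlying fact that all nondegenerate Hermitian forms over a finite field are congruent to the identity).
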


\subsection{A mass formula for Hermitian LCD codes}

In this subsection, we obtain a closed mass formula for Hermitian LCD codes. Let HLCD$[n,k]_{q^2}$ denote the set of all Hermitian LCD $[n,k]_{q^2}$ codes. Let ${\bf e}_i=(0,\ldots,0,1,0,\ldots,0)$ be the vector with 1 in the $i$th position. Let $G^0_{n,k,q^2}$ and $H^0_{n,k,q^2}$ denote respectively $k\times n$ and $(n-k)\times n$ matrices defined by
\begin{align}\label{Eq-G0-H0}
  G^0_{n,k,q^2}=\left(\begin{array}{c}
             {\bf e}_1 \\
             {\bf e}_2 \\
             \vdots \\
             {\bf e}_k
           \end{array}
\right) &~{\rm and}~H^0_{n,k,q^2}=\left(\begin{array}{c}
             {\bf e}_{k+1} \\
             {\bf e}_{k+2}\\
             \vdots \\
             {\bf e}_n
           \end{array}
\right)
\end{align}
Let $C^0_{n,k,q^2}$ denote the linear code generated by the matrix $G^0_{n,k,q^2}$. Then it can be checked that $C^0_{n,k,q^2}\in {\rm HLCD}[n,k]_{q^2}$ and $H^0_{n,k,q^2}$ is a generator matrix of $(C^0_{n,k,q^2})^{\perp_{\rm H}}$.

\begin{theorem}\label{thm-C1-U-C2}
Let $C_1$ and $C_2$ be two Hermitian LCD $[n,k]_{q^2}$ codes. Then there exists a unitary matrix $Q\in \mathbb{U}_{n}(q^2)$ such that $C_2=C_1Q$. Conversely, for any Hermitian LCD $[n,k]_{q^2}$ code $C$ and any unitary matrix $Q\in \mathbb{U}_{n}(q^2)$, $CQ$ is also a Hermitian LCD $[n,k]_{q^2}$ code.
\end{theorem}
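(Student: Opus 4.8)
The plan is to dispose of the (easy) second assertion first, then prove the first by showing that every Hermitian LCD $[n,k]_{q^2}$ code lies in the $\mathbb{U}_{n}(q^2)$-orbit of the ``standard'' code $C^0_{n,k,q^2}$, and finally composing.

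For the converse statement: if $C$ is a Hermitian LCD $[n,k]_{q^2}$ code with generator matrix $G$ and $Q\in\mathbb{U}_{n}(q^2)$, then $GQ$ is a generator matrix of $CQ$ (since $Q$ is invertible, $\dim CQ=k$), and $(GQ)\overline{(GQ)}^{\,T}=G\,Q\overline{Q}^{\,T}\,\overline{G}^{\,T}=G\overline{G}^{\,T}$ because $Q\overline{Q}^{\,T}=I_n$. By Proposition \ref{prop-hull}(1), $C$ being Hermitian LCD means $\mathrm{rank}(G\overline{G}^{\,T})=k$, hence $\mathrm{rank}\big((GQ)\overline{(GQ)}^{\,T}\big)=k$, so $CQ$ is again Hermitian LCD.

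For the forward statement, I would first reduce to the special code: if every Hermitian LCD $[n,k]_{q^2}$ code $C$ can be written as $C=C^0_{n,k,q^2}\,Q_C$ for some $Q_C\in\mathbb{U}_{n}(q^2)$, then for $C_1,C_2$ we obtain $C_2=C^0_{n,k,q^2}\,Q_{C_2}=C_1\,Q_{C_1}^{-1}Q_{C_2}$ with $Q_{C_1}^{-1}Q_{C_2}\in\mathbb{U}_{n}(q^2)$, since the unitary group is a group. To construct $Q_C$, I would first note that $C^{\perp_{\rm H}}$ is itself Hermitian LCD, because ${\rm Hull_H}(C^{\perp_{\rm H}})=C^{\perp_{\rm H}}\cap (C^{\perp_{\rm H}})^{\perp_{\rm H}}=C^{\perp_{\rm H}}\cap C={\rm Hull_H}(C)=\{\mathbf 0\}$, using $(C^{\perp_{\rm H}})^{\perp_{\rm H}}=C$ and $\dim C^{\perp_{\rm H}}=n-k$. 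Then apply Proposition \ref{prop-character-Hhull} with $\ell=0$ to both $C$ and $C^{\perp_{\rm H}}$: there are generator matrices $G$ of $C$ and $H$ of $C^{\perp_{\rm H}}$ with $G\overline{G}^{\,T}=I_k$ and $H\overline{H}^{\,T}=I_{n-k}$. Since every row of $H$ lies in $C^{\perp_{\rm H}}$, it is Hermitian-orthogonal to every row of $G$, so $G\overline{H}^{\,T}=O$, and likewise $H\overline{G}^{\,T}=O$. Hence $Q_C:=\begin{pmatrix} G \\ H \end{pmatrix}$ satisfies $Q_C\overline{Q_C}^{\,T}=I_n$, which in particular forces $Q_C$ to be invertible, so $Q_C\in\mathbb{U}_{n}(q^2)$; and $G^0_{n,k,q^2}Q_C=G$ because ${\bf e}_iQ_C$ is the $i$th row of $Q_C$ for $1\le i\le k$, whence $C^0_{n,k,q^2}Q_C=C$.

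The only genuine content is the existence, inside both $C$ and $C^{\perp_{\rm H}}$, of a Hermitian-orthonormal basis, which is exactly what Proposition \ref{prop-character-Hhull} supplies for $\ell=0$; thus no separate Gram--Schmidt argument over $\F_{q^2}$ (which would otherwise need care with isotropic vectors) is required. The one point to get right is that the off-diagonal blocks $G\overline{H}^{\,T}$ and $H\overline{G}^{\,T}$ vanish automatically, and this works precisely because $H$ is chosen to generate $C^{\perp_{\rm H}}$ rather than an arbitrary complement of $C$; I do not anticipate a serious obstacle beyond this bookkeeping.
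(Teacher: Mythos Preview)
Your proposal is correct and follows essentially the same approach as the paper: both build, for each Hermitian LCD code, the unitary matrix $\begin{pmatrix}G\\H\end{pmatrix}$ from Hermitian-orthonormal bases of $C$ and $C^{\perp_{\rm H}}$ obtained via Proposition~\ref{prop-character-Hhull}, and then compose inverses. The only cosmetic difference is that you route explicitly through $C^0_{n,k,q^2}$ while the paper writes $Q=Q_1^{-1}Q_2$ directly; your explicit checks that $C^{\perp_{\rm H}}$ is LCD and that the off-diagonal blocks $G\overline{H}^{\,T}$ vanish are details the paper leaves implicit.
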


\begin{proof}
By Proposition \ref{prop-character-Hhull}, there are generator matrices $G_1$ and $H_1$ of $C_1$ and $C_1^{\perp_{\rm H}}$, respectively, such that $G_1\overline{G_1}^T=I_k$ and $H_1\overline{H_1}^T=I_{n-k}$. Similarly, there are generator matrices $G_2$ and $H_2$ of $C_2$ and $C_2^{\perp_{\rm H}}$, respectively, such that $G_2\overline{G_2}^T=I_k$ and $H_2\overline{H_2}^T=I_{n-k}$. Let $Q_1$ and $Q_2$ be the matrices defined by
$
Q_1=\begin{pmatrix}
G_1   \\
H_1   \\
\end{pmatrix}~{\rm and}~Q_2=
\begin{pmatrix}
G_2    \\
H_2    \\
\end{pmatrix}.
$
Then $Q_1$ and $Q_2$ are nonsingular and $Q_1\overline{Q_1}^T=Q_2\overline{Q_2}^T=I_n.$
Let $Q=Q_1^{-1}Q_2$. Then $Q\overline{Q}^T=(Q_1^{-1}Q_2)(\overline{Q_1^{-1}Q_2})^T=I_n$. Further, $G_1Q=G_1Q_1^{-1}Q_2=G_1\overline{Q_1}^TQ_2=G_1(\overline{G_1}^T ~ \overline{H_1}^T)\begin{pmatrix}
G_2    \\
H_2    \\
\end{pmatrix}=(I_k~O)\begin{pmatrix}
G_2    \\
H_2    \\
\end{pmatrix}=G_2$. Hence we have $Q\in \mathbb{U}_n(q^2)$ and $C_1Q=C_2$.

Conversely, let $C$ be a Hermitian LCD $[n,k]_{q^2}$ code and let $Q\in \mathbb{U}_{n}(q^2)$ be a unitary matrix. Let $G$ be a generator matrix $C$. Then $GQ$ is a generator matrix of $CQ$. Since
$(GQ)(\overline{GQ})^T=G(Q\overline{Q}^T)\overline{G}^T=G\overline{G}^T$, $CQ$ is also a Hermitian LCD $[n,k]_{q^2}$ code.
\end{proof}

By Theorem \ref{thm-C1-U-C2}, the unitary group $\mathbb{U}_{n}(q^2)$ acts on the set ${\rm HLCD}[n,k]_{q^2}$ by $(C,Q)\longmapsto CQ$, where $C\in {\rm HLCD}[n,k]_{q^2}$ and $Q\in \mathbb{U}_{n}(q^2)$. Under this action, all Hermitian LCD $[n,k]_{q^2}$ codes are on a unique orbit. Note that the code $C^0_{n,k,q^2}$ generated by the matrix $G^0_{n,k,q^2}$ defined in Equation (\ref{Eq-G0-H0}) is a Hermitian LCD $[n,k]_{q^2}$ code. Hence we have
\begin{align}\label{Eq-HLCD}
  {\rm HLCD}[n,k]_{q^2}& = C^0_{n,k,q^2}\mathbb{U}_n(q^2)=\{C^0_{n,k,q^2}Q~|~Q\in \mathbb{U}_n(q^2)\}.
\end{align}
Hence, to determine the size of ${\rm HLCD}[n,k]_{q^2}$, it is sufficient to determine the size of the stabilizer of $C^0_{n,k,q^2}$ in the unitary group. The {\em stabilizer} of a linear $[n,k]_{q^2}$ code $C$ in the unitary group is defined by
$$\mathbb{S}{\rm t}_u(C)=\{Q\in \mathbb{U}_{n}(q^2)~|~CQ=C\}.$$

\begin{prop}\label{prop-St}
Let $C$ be a Hermitian LCD $[n,k]_{q^2}$ code. Let $G$ be a generator matrix of $C$ and let $H$ be a generator matrix of $C^{\perp_{\rm H}}$. Then $Q\in \mathbb{S}{\rm t}_u(C)$ if and only if
$$Q=\begin{pmatrix}
 G \\
 H
 \end{pmatrix}^{-1}
 \begin{pmatrix}
 Q_1&O \\
 O& Q_2
 \end{pmatrix}
 \begin{pmatrix}
 G \\
 H
 \end{pmatrix},$$
 where $Q_1\in \mathbb{G}\mathbb{L}_{k}(q)$, $Q_2\in \mathbb{G}\mathbb{L}_{n-k}(q)$ such that $Q_1(G\overline{G}^T)\overline{Q_1}^T=G\overline{G}^T$ and $Q_2(H\overline{H}^T)\overline{Q_2}^T=H\overline{H}^T$.
\end{prop}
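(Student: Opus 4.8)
The plan is to establish the stated description of $\mathbb{S}{\rm t}_u(C)$ by exploiting the fact that, because $C$ is Hermitian LCD, the matrix $Q_0 := \begin{pmatrix} G \\ H \end{pmatrix}$ is invertible: indeed $C \cap C^{\perp_{\rm H}} = \{{\bf 0}\}$ forces $\F_{q^2}^n = C \oplus C^{\perp_{\rm H}}$, so the $n$ rows of $Q_0$ are linearly independent. This lets us conjugate: a matrix $Q$ acts on $\F_{q^2}^n$, and in the row-space decomposition given by $Q_0$, the unitary group elements stabilizing $C$ are exactly those whose ``matrix in the basis $Q_0$'' preserves the two summands.

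First I would prove the forward direction. Suppose $Q \in \mathbb{S}{\rm t}_u(C)$, i.e.\ $Q \in \mathbb{U}_n(q^2)$ and $CQ = C$. Since $GQ$ spans $CQ = C$, there is $Q_1 \in \mathbb{G}\mathbb{L}_k(q^2)$ with $GQ = Q_1 G$. I must then check that $Q$ also stabilizes $C^{\perp_{\rm H}}$: for ${\bf y} \in C^{\perp_{\rm H}}$ and any ${\bf x} \in C$, write ${\bf x} = {\bf x}'Q$ with ${\bf x}' \in C$ (using $CQ = C$), and then $\langle {\bf x}, {\bf y}Q\rangle_{\rm H} = \langle {\bf x}'Q, {\bf y}Q\rangle_{\rm H} = \langle {\bf x}', {\bf y}\rangle_{\rm H} = 0$ by unitarity of $Q$; hence ${\bf y}Q \in C^{\perp_{\rm H}}$ and $C^{\perp_{\rm H}}Q = C^{\perp_{\rm H}}$. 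So there is $Q_2 \in \mathbb{G}\mathbb{L}_{n-k}(q^2)$ with $HQ = Q_2 H$. Stacking these gives $Q_0 Q = \begin{pmatrix} Q_1 & O \\ O & Q_2 \end{pmatrix} Q_0$, i.e.\ $Q = Q_0^{-1}\begin{pmatrix} Q_1 & O \\ O & Q_2 \end{pmatrix} Q_0$. Finally, the constraint on $Q_1, Q_2$: from $Q\overline{Q}^T = I_n$ and $Q_0 \overline{Q_0}^T = \begin{pmatrix} G\overline{G}^T & G\overline{H}^T \\ H\overline{G}^T & H\overline{H}^T \end{pmatrix} = \begin{pmatrix} G\overline{G}^T & O \\ O & H\overline{H}^T \end{pmatrix}$ (the off-diagonal blocks vanish because rows of $G$ are Hermitian-orthogonal to rows of $H$), compute $I_n = Q\overline{Q}^T = Q_0^{-1} \begin{pmatrix} Q_1 & O \\ O & Q_2 \end{pmatrix} (Q_0 \overline{Q_0}^T) \begin{pmatrix} \overline{Q_1}^T & O \\ O & \overline{Q_2}^T \end{pmatrix} (\overline{Q_0}^{T})^{-1}$, which reduces block-by-block to $Q_1(G\overline{G}^T)\overline{Q_1}^T = G\overline{G}^T$ and $Q_2(H\overline{H}^T)\overline{Q_2}^T = H\overline{H}^T$.

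For the converse I would simply run the computation backwards: given $Q_1, Q_2$ of the right sizes satisfying the two displayed conjugation identities, set $Q := Q_0^{-1}\begin{pmatrix} Q_1 & O \\ O & Q_2 \end{pmatrix} Q_0$; it is invertible, it satisfies $Q\overline{Q}^T = I_n$ by reversing the block computation above (using that $Q_0\overline{Q_0}^T$ is block-diagonal), and $GQ = Q_1 G$ shows $CQ = C$, so $Q \in \mathbb{S}{\rm t}_u(C)$. The only mild subtlety, and the step I expect to require the most care, is confirming that the $Q_1$ obtained in the forward direction automatically lies in $\mathbb{G}\mathbb{L}_k(q)$ (equivalently $\mathbb{G}\mathbb{L}_{n-k}(q)$ for $Q_2$) as the statement asserts, rather than merely in $\mathbb{G}\mathbb{L}_k(q^2)$ — this is not true in general, so the intended reading must be that one first specializes to the generator matrix with $G\overline{G}^T = I_k$ (which exists by Proposition \ref{prop-character-Hhull} since $C$ is Hermitian LCD), for which $Q_1\overline{Q_1}^T = I_k$ makes $Q_1$ a unitary matrix, and similarly $H$ can be taken with $H\overline{H}^T = I_{n-k}$. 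I would therefore phrase the proof with respect to these normalized generator matrices and note that the general case follows by an obvious change of basis; everything else is the routine block algebra sketched above.
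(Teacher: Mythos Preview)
Your approach is essentially the same as the paper's: both directions are handled by the block-diagonal conjugation computation you describe, and the paper verifies $C^{\perp_{\rm H}}Q=C^{\perp_{\rm H}}$ from unitarity exactly as you do.

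One clarification on the point you flag at the end. Your suspicion that ``$Q_1\in\mathbb{G}\mathbb{L}_k(q)$'' cannot be right for arbitrary $G$ is correct, but the resolution is simpler than specializing to a normalized generator matrix: this is just a typo in the statement (and in the paper's proof, which also writes $\mathbb{G}\mathbb{L}_n(q)$ at one point) for $\mathbb{G}\mathbb{L}_k(q^2)$. The paper's argument is carried out for \emph{arbitrary} $G$ and $H$, with $Q_1,Q_2$ merely invertible over $\F_{q^2}$ and satisfying the two conjugation identities; the specialization to $G\overline{G}^T=I_k$, $H\overline{H}^T=I_{n-k}$ (which then forces $Q_1,Q_2$ to be unitary) happens only later, in the proof of Theorem~\ref{thm-mass-HHull}. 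So you need not build the normalization into the proof of this proposition.
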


\begin{proof}
Let $Q=\begin{pmatrix}
 G \\
 H
 \end{pmatrix}^{-1}
 \begin{pmatrix}
 Q_1&O \\
 O& Q_2
 \end{pmatrix}
 \begin{pmatrix}
 G \\
 H
 \end{pmatrix},$
 where $Q_1\in \mathbb{G}\mathbb{L}_{k}(q)$, $Q_2\in \mathbb{G}\mathbb{L}_{n-k}(q)$ such that $Q_1(G\overline{G}^T)\overline{Q_1}^T=G\overline{G}^T$ and $Q_2(H\overline{H}^T)\overline{Q_2}^T=H\overline{H}^T$. Then
 $$\begin{pmatrix}
 G \\
 H
 \end{pmatrix}Q=
 \begin{pmatrix}
 Q_1&O \\
 O& Q_2
 \end{pmatrix}
 \begin{pmatrix}
 G \\
 H
 \end{pmatrix}=\begin{pmatrix}
 Q_1G \\
 Q_2H
 \end{pmatrix}.$$
 Hence $GQ=Q_1G$ and $HQ=Q_2H$, which imply that $CQ=C$. Note that
\begin{align*}
  \begin{pmatrix}
 G \\
 H
 \end{pmatrix}Q\overline{Q}^T\begin{pmatrix}
 \overline{G} \\
 \overline{H}
 \end{pmatrix}^T & =\left(\begin{pmatrix}
 G \\
 H
 \end{pmatrix}Q\right)\left(\begin{pmatrix}
 \overline{G} \\
 \overline{H}
 \end{pmatrix}\overline{Q}\right)^T \\
   & =\begin{pmatrix}
 Q_1G \\
 Q_2H
 \end{pmatrix}\begin{pmatrix}
 \overline{Q_1G} \\
 \overline{Q_2H}
 \end{pmatrix}^T\\
 &=\begin{pmatrix}
 Q_1(G\overline{G}^T)\overline{Q_1}^T &O\\
 O& Q_2(H\overline{H}^T)\overline{Q_2}^T
 \end{pmatrix}\\
 &=\begin{pmatrix}
G\overline{G}^T &O\\
 O& H\overline{H}^T
 \end{pmatrix}\\
 &= \begin{pmatrix}
 G \\
 H
 \end{pmatrix}\begin{pmatrix}
 \overline{G} \\
 \overline{H}
 \end{pmatrix}^T.
\end{align*}
It follows that
$$Q\overline{Q}^T=\begin{pmatrix}
 G \\
 H
 \end{pmatrix}^{-1}\left(\begin{pmatrix}
 G \\
 H
 \end{pmatrix}\begin{pmatrix}
 \overline{G} \\
 \overline{H}
 \end{pmatrix}^T\right)\begin{pmatrix}
 \overline{G} \\
 \overline{H}
 \end{pmatrix}^{-T}=I_n,$$
which implies that $Q\in \mathbb{U}_{n}(q^2)$. Thus, $Q\in \mathbb{S}{\rm t}_u(C)$.

Conversely, let $Q\in \mathbb{S}{\rm t}_u(C)$, $i.e.$, $Q\in \mathbb{U}_{n}(q^2)$ and $CQ=C$. Then there exists a matrix $Q_1\in \mathbb{G}\mathbb{L}_{n}(q)$ such that
$GQ=Q_1G$. For any ${\bf c}\in C$ and any ${\bf w}\in C^{\perp_{\rm H}}$, we have
$\langle{\bf c}Q,{\bf w}Q\rangle_{\rm H}=\langle{\bf c},{\bf w}\rangle_{\rm H}.$
Then $C^{\perp_{\rm H}}Q=C^{\perp_{\rm H}}$. Hence there exists a matrix $Q_2\in \mathbb{G}\mathbb{L}_{n}(q)$ such that $HQ=Q_2H$. It follows that
$$\begin{pmatrix}
 G \\
 H
 \end{pmatrix}Q=\begin{pmatrix}
 Q_1G \\
 Q_2H
 \end{pmatrix}=
 \begin{pmatrix}
 Q_1&O \\
 O& Q_2
 \end{pmatrix}
 \begin{pmatrix}
 G \\
 H
 \end{pmatrix},~i.e.,~Q=\begin{pmatrix}
 G \\
 H
 \end{pmatrix}^{-1}
 \begin{pmatrix}
 Q_1&O \\
 O& Q_2
 \end{pmatrix}
 \begin{pmatrix}
 G \\
 H
 \end{pmatrix}.$$
Since $Q\in \mathbb{U}_{n}(q^2)$, $Q\overline{Q}=I_n$. Hence we have
\begin{align*}
  G\overline{G}^T&=G(Q\overline{Q}^T)\overline{G}^T=(GQ)(\overline{GQ})^T=
(Q_1G)(\overline{Q_1G})^T=Q_1(G\overline{G}^T)\overline{Q_1}^T,\\
  H\overline{H}^T&=H(Q\overline{Q}^T)\overline{H}^T=(HQ)(\overline{HQ})^T=
(Q_2H)(\overline{Q_2H})^T=Q_2(H\overline{H}^T)\overline{Q_2}^T.
\end{align*}
This completes the proof.
\end{proof}

\begin{theorem}\label{thm-mass-HHull}
The number of Hermitian LCD $[n,k]_{q^2}$ codes is
$$|{\rm HLCD}[n,k]_{q^2}|=q^{k(n-k)}\prod_{i=1}^{k}\frac{q^{n-k+i}-(-1)^{n-k+i}} {q^i-(-1)^i}.$$
\end{theorem}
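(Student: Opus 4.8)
The plan is to compute $|{\rm HLCD}[n,k]_{q^2}|$ via the orbit-stabilizer theorem applied to the transitive action of $\mathbb{U}_n(q^2)$ on ${\rm HLCD}[n,k]_{q^2}$ established in Theorem~\ref{thm-C1-U-C2} and Equation~(\ref{Eq-HLCD}). Concretely, $|{\rm HLCD}[n,k]_{q^2}| = |\mathbb{U}_n(q^2)| / |\mathbb{S}{\rm t}_u(C^0_{n,k,q^2})|$, so the whole problem reduces to evaluating the order of the stabilizer of the standard code $C^0_{n,k,q^2}$, whose generator matrix is $G^0_{n,k,q^2} = (I_k \mid O)$ and whose Hermitian dual is generated by $H^0_{n,k,q^2} = (O \mid I_{n-k})$. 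Since $|\mathbb{U}_n(q^2)|$ is given by Equation~(\ref{eq-unitary-g}), once I have $|\mathbb{S}{\rm t}_u(C^0_{n,k,q^2})|$ the stated closed form should fall out after simplifying the resulting product.

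First I would invoke Proposition~\ref{prop-St} with $G = G^0_{n,k,q^2}$ and $H = H^0_{n,k,q^2}$. Because $G^0(\overline{G^0})^T = I_k$ and $H^0(\overline{H^0})^T = I_{n-k}$, the conditions $Q_1(G\overline{G}^T)\overline{Q_1}^T = G\overline{G}^T$ and $Q_2(H\overline{H}^T)\overline{Q_2}^T = H\overline{H}^T$ become $Q_1\overline{Q_1}^T = I_k$ and $Q_2\overline{Q_2}^T = I_{n-k}$, i.e. $Q_1 \in \mathbb{U}_k(q^2)$ and $Q_2 \in \mathbb{U}_{n-k}(q^2)$. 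However, this only parametrizes those stabilizer elements that are block-diagonal with respect to the splitting $\F_{q^2}^n = \langle {\bf e}_1,\dots,{\bf e}_k\rangle \oplus \langle {\bf e}_{k+1},\dots,{\bf e}_n\rangle$ and preserve each summand \emph{setwise as the code and its dual}; but any $Q \in \mathbb{S}{\rm t}_u(C^0)$ automatically fixes $C^0$ and (as shown in the proof of Proposition~\ref{prop-St}) also fixes $(C^0)^{\perp_{\rm H}}$, so it must be of exactly this block form. Thus $\mathbb{S}{\rm t}_u(C^0_{n,k,q^2}) \cong \mathbb{U}_k(q^2) \times \mathbb{U}_{n-k}(q^2)$ and $|\mathbb{S}{\rm t}_u(C^0_{n,k,q^2})| = |\mathbb{U}_k(q^2)|\cdot|\mathbb{U}_{n-k}(q^2)|$.

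It then remains to compute
$$|{\rm HLCD}[n,k]_{q^2}| = \frac{|\mathbb{U}_n(q^2)|}{|\mathbb{U}_k(q^2)|\cdot|\mathbb{U}_{n-k}(q^2)|} = \frac{q^{\frac{n(n-1)}{2}}\prod_{i=1}^{n}(q^i-(-1)^i)}{q^{\frac{k(k-1)}{2}}\prod_{i=1}^{k}(q^i-(-1)^i)\cdot q^{\frac{(n-k)(n-k-1)}{2}}\prod_{i=1}^{n-k}(q^i-(-1)^i)}.$$
The $q$-power prefactor simplifies: $\tfrac{n(n-1)}{2} - \tfrac{k(k-1)}{2} - \tfrac{(n-k)(n-k-1)}{2} = k(n-k)$, giving the factor $q^{k(n-k)}$. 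For the product part, the $\prod_{i=1}^{k}(q^i-(-1)^i)$ in the denominator cancels the first $k$ factors of $\prod_{i=1}^{n}(q^i-(-1)^i)$ is not quite right; instead I keep $\prod_{i=1}^{n-k}(q^i-(-1)^i)$ canceling the bottom-indexed factors of the numerator, leaving $\prod_{i=n-k+1}^{n}(q^i-(-1)^i)$, and reindexing $i \mapsto n-k+i$ turns this into $\prod_{i=1}^{k}(q^{n-k+i}-(-1)^{n-k+i})$; dividing by the remaining $\prod_{i=1}^{k}(q^i-(-1)^i)$ yields $\prod_{i=1}^{k}\frac{q^{n-k+i}-(-1)^{n-k+i}}{q^i-(-1)^i}$, which is exactly the claimed formula.

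The main obstacle is the second step: justifying rigorously that \emph{every} element of $\mathbb{S}{\rm t}_u(C^0_{n,k,q^2})$ has the claimed block-diagonal shape and that the parametrization by $(Q_1, Q_2) \in \mathbb{U}_k(q^2) \times \mathbb{U}_{n-k}(q^2)$ is a bijection (in particular injective, so that no overcounting occurs). This follows from Proposition~\ref{prop-St} together with the observation that $\begin{pmatrix} G^0 \\ H^0 \end{pmatrix} = I_n$, which makes the conjugation in that proposition trivial and forces $Q = {\rm diag}(Q_1, Q_2)$ with the pair $(Q_1,Q_2)$ uniquely determined by $Q$; the remaining book-keeping with exponents and the telescoping of the product of $(q^i - (-1)^i)$ terms is routine.
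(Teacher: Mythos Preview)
Your proposal is correct and follows essentially the same argument as the paper: compute the stabilizer of $C^0_{n,k,q^2}$ via Proposition~\ref{prop-St}, use that $\begin{pmatrix} G^0 \\ H^0 \end{pmatrix}=I_n$ to see $\mathbb{S}{\rm t}_u(C^0_{n,k,q^2})=\{\mathrm{diag}(Q_1,Q_2):Q_1\in\mathbb{U}_k(q^2),\,Q_2\in\mathbb{U}_{n-k}(q^2)\}$, and then apply orbit--stabilizer together with Equation~(\ref{eq-unitary-g}). The only difference is that you spell out the final algebraic simplification (the exponent identity $\tfrac{n(n-1)}{2}-\tfrac{k(k-1)}{2}-\tfrac{(n-k)(n-k-1)}{2}=k(n-k)$ and the reindexing of the product), which the paper leaves implicit.
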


\begin{proof}
Let $C^0_{n,k,q^2}$ be the Hermitian LCD $[n,k]_{q^2}$ code with the generator matrix $G^0_{n,k,q^2}$ defined in Equation (\ref{Eq-G0-H0}). Then the matrix $H^0_{n,k,q^2}$ defined in Equation (\ref{Eq-G0-H0}) is a generator matrix of $(C^0_{n,k,q^2})^{\perp_{\rm H}}$. Note that $G^0_{n,k,q^2}\overline{G^0_{n,k,q^2}}^T=I_k$ and $H^0_{n,k,q^2}\overline{H^0_{n,k,q^2}}^T=I_{n-k}$. Together with Proposition \ref{prop-St} and the fact that $\begin{pmatrix}
 G^0_{n,k,q^2} \\
 H^0_{n,k,q^2}
 \end{pmatrix}=\begin{pmatrix}
 G^0_{n,k,q^2} \\
 H^0_{n,k,q^2}
 \end{pmatrix}^{-1}=I_n$, we have
$$\mathbb{S}{\rm t}_u(C^0_{n,k,q^2})=\left\{\begin{pmatrix}
 Q_1&O \\
 O& Q_2
 \end{pmatrix}~\bigg |~ Q_1\in \mathbb{U}_k(q^2),~Q_2\in \mathbb{U}_{n-k}(q^2)\right\}.$$
From Equation (\ref{Eq-HLCD}), we have
$$|{\rm HLCD}[n,k]_{q^2}|=\frac{|\mathbb{U}_{n}(q^2)|}{|\mathbb{S}{\rm t}_u(C^0_{n,k,q^2})|}=\frac{|\mathbb{U}_{n}(q^2)|}{|\mathbb{U}_{k}(q^2)|\cdot |\mathbb{U}_{n-k}(q^2)|}.$$
Combining with Equation (\ref{eq-unitary-g}), we complete the proof.
\end{proof}

\begin{remark}
Liu and Wang {\rm\cite{FFA-mass}} gave a mass formula for Hermitian LCD codes, which involves the Jacobi sum and is therefore intractable. With respect to their result, the mass formula we give here has an obviously simpler form.
\end{remark}

\subsection{A mass formula for linear codes with prescribed Hermitian hull dimension}

A vector ${\bf x}=(x_1,x_2,\ldots,x_n)\in \F_{q^2}^n$ is {\em Hermitian self-orthogonal} if $\langle{\bf x},{\bf x}\rangle_{\rm H}=\sum_{i=1}^n x_i^{q+1}=0$. Let ${\rm HHull}[n,k]^{\ell}_{q^2}$ denote the set of all linear $[n,k]_{q^2}$ codes $C$ with $\ell$-dimensional Hermitian hull.
In this subsection, our aim is to obtain a closed mass formula for linear codes with prescribed Hermitian hull dimension. To do this, we give a crucial lemma.
\begin{lem}\label{lem-solution}
Let $a\in \F_q$ and let $N^{q+1}_a(n)$ denote the number of solutions of the diagonal equation $x_1^{q+1}+x_2^{q+1}+\cdots+x_n^{q+1}=a$ in $\F_{q^2}^n$. If $a=0$, then
$$N^{q+1}_0(n)=\left\{\begin{array}{ll}
                       q^{n-1}(q^n-q+1), & {\rm if}~n~{\rm is~odd}, \\
                       q^{n-1}(q^n+q-1), & {\rm if}~n~{\rm is~even}.
                     \end{array}
\right.$$
If $a\neq 0$, then
$$N^{q+1}_a(n)=\left\{\begin{array}{ll}
                       q^{n-1}(q^n+1), & {\rm if}~n~{\rm is~odd}, \\
                       q^{n-1}(q^n-1), & {\rm if}~n~{\rm is~even}.
                     \end{array}
\right.$$
\end{lem}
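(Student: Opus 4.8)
The plan is to count solutions of the diagonal equation $x_1^{q+1}+\cdots+x_n^{q+1}=a$ over $\F_{q^2}$ by exploiting the fact that the map $x\mapsto x^{q+1}$ is precisely the norm map $N:\F_{q^2}\to\F_q$. The key structural observation is that for $x\in\F_{q^2}^*$ the value $x^{q+1}=N(x)$ lies in $\F_q^*$, and each value in $\F_q^*$ is attained exactly $q+1$ times (the kernel of the norm map has size $\frac{q^2-1}{q-1}=q+1$), while $x=0$ is the unique preimage of $0$. Hence if $r_n(a)$ denotes the number of solutions in $\F_q^n$ of $y_1+\cdots+y_n=a$ subject to a $(q+1)$-to-$1$ weighting on the nonzero coordinates and a $1$-to-$1$ weighting on zero coordinates, we can write $N^{q+1}_a(n)$ as a sum over which subset $S\subseteq\{1,\dots,n\}$ of coordinates is nonzero: $N^{q+1}_a(n)=\sum_{S}(q+1)^{|S|}\,M_{|S|}(a)$, where $M_t(a)$ is the number of solutions of $z_1+\cdots+z_t=a$ with all $z_i\in\F_q^*$.

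First I would record the standard formula for $M_t(a)$: by inclusion–exclusion (or a direct generating-function argument), $M_t(a)=\frac{1}{q}\big((q-1)^t+(-1)^t(q-1)\big)$ when $a=0$ and $M_t(a)=\frac{1}{q}\big((q-1)^t-(-1)^t\big)$ when $a\neq 0$. Then I would substitute into $N^{q+1}_a(n)=\sum_{t=0}^{n}\binom{n}{t}(q+1)^t M_t(a)$ and evaluate the resulting binomial sums. The point is that $\sum_t\binom{n}{t}(q+1)^t(q-1)^t=q^{2n}$ and $\sum_t\binom{n}{t}(q+1)^t(-1)^t=(1-(q+1))^n=(-q)^n$, and similarly $\sum_t\binom{n}{t}(q+1)^t(-(q-1))^t\cdot$(stuff) collapses; each of the two or three geometric-type sums closes to a clean power of $q$. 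Collecting terms and separating the parity of $n$ (which controls the sign $(-q)^n=(-1)^nq^n$) yields exactly the four stated expressions.

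The only real bookkeeping hazard — and the step I expect to be the main obstacle — is getting the constants right in the $a=0$ versus $a\neq 0$ split, since there the $M_t(a)$ formula has an extra additive $(q-1)$ term (the "diagonal" correction) that must be tracked separately through the binomial sum; a sign slip there would flip $q^n-q+1$ into $q^n+q-1$ or vice versa. I would double-check the final answers by evaluating small cases ($n=1$: $N^{q+1}_0(1)=1=q^0(q-q+1)$ and $N^{q+1}_a(1)=q+1=q^0(q+1)$ for $a\neq0$; $n=2$: the total $\sum_a N^{q+1}_a(2)$ must equal $q^4$) to confirm the parity bookkeeping. An alternative, more self-contained route avoiding $M_t$ altogether is to set up a recursion: conditioning on $x_n$, one gets $N^{q+1}_a(n)=N^{q+1}_a(n-1)+(q+1)\sum_{b\in\F_q^*}N^{q+1}_{a-b}(n-1)$, and since $\sum_{b\in\F_q}N^{q+1}_b(n-1)=q^{2(n-1)}$ this becomes a two-term linear recursion in $n$ for the pair $(N^{q+1}_0(n),N^{q+1}_a(n))_{a\neq0}$ with characteristic behaviour governed by $-q$; solving it with the $n=1$ initial data reproduces the formulas. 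Either way the argument is elementary once the norm-map interpretation is in place.
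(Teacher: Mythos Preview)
Your proposal is correct. The paper itself takes the second route you sketch at the end: it conditions on the last coordinate to derive the coupled recursion
\[
N^{q+1}_0(n)=N^{q+1}_0(n-1)+(q^2-1)N^{q+1}_1(n-1),\qquad
N^{q+1}_1(n)=(q+1)N^{q+1}_0(n-1)+(q^2-q-1)N^{q+1}_1(n-1),
\]
and then verifies the stated formulas by induction on $n$.

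Your \emph{primary} approach is genuinely different: you reduce to counting $\F_q$-sums via the norm map, invoke the standard closed form $M_t(0)=\frac{(q-1)^t+(-1)^t(q-1)}{q}$, $M_t(a)=\frac{(q-1)^t-(-1)^t}{q}$ for $a\neq0$, and then close the binomial sums $\sum_t\binom{n}{t}(q+1)^t(q-1)^t=q^{2n}$ and $\sum_t\binom{n}{t}(-(q+1))^t=(-q)^n$ directly to get $N^{q+1}_0(n)=q^{2n-1}+(-1)^n(q-1)q^{n-1}$ and $N^{q+1}_a(n)=q^{2n-1}-(-1)^nq^{n-1}$. This avoids the induction step entirely and yields the parity split in one stroke; the trade-off is that it imports the formula for $M_t(a)$, which itself needs a short inclusion--exclusion or character-sum argument. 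The paper's recursion is more self-contained but requires checking two cases at the inductive step. Your alternative route is essentially the paper's proof.
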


\begin{proof}
It is well-known that $x^{q+1}\in \F_q$ for any $x\in \F_{q^2}$ and $|\{x^{q+1}=a~|~x\in \F_{q^2}\}|=q+1$ for any $a\in \F_q^*$.
For any $a\in \F_{q}^*$, there exists $c\in \F_{q^2}^*$ such that $a=c^{q+1}$. Hence
$x_1^{q+1}+x_2^{q+1}+\cdots+x_n^{q+1}=a$ is equivalent to
$\left(\frac{x_1}{c}\right)^{q+1}+\left(\frac{x_2}{c}\right)^{q+1}+\cdots+\left(\frac{x_n}{c}\right)^{q+1}=1,$
and hence $N^{q+1}_a(n)=N^{q+1}_1(n)$ for any $a\in \F_{q}^*$.

Assume that $(y_1,\ldots,y_n)$ is a solution of the diagonal equation $x_1^{q+1}+\cdots+x_n^{q+1}=a$ in $\F_{q^2}^n$, where $a\in \F_q$.
\begin{itemize}
  \item For fixed $y_n\in \F_{q^2}$ with $y_n^{q+1}=a$, the number of $(y_1,\ldots,y_{n-1})\in \F_{q^2}^{n-1}$ satisfying $y_1^{q+1}+\cdots+y_n^{q+1}=a$ is equal to the number of $(y_1,\ldots,y_{n-1})\in \F_{q^2}^{n-1}$ satisfying $y_1^{q+1}+\cdots+y_{n-1}^{q+1}=0$, $i.e.$, $N^{q+1}_0(n-1)$.
  \item For fixed $y_n\in \F_{q^2}$ with $y_n^{q+1}\neq a$, the number of $(y_1,\ldots,y_{n-1})\in \F_{q^2}^{n-1}$ satisfying $y_1^{q+1}+\cdots+y_n^{q+1}=a$ is equal to the number of $(y_1,\ldots,y_{n-1})\in \F_{q^2}^{n-1}$ satisfying $y_1^{q+1}+\cdots+y_{n-1}^{q+1}=a-y_n^{q+1}\neq0$, $i.e.$, $N^{q+1}_1(n-1)$.
\end{itemize}
Hence it can be checked that
\begin{align*}
  N^{q+1}_0(n)&=N^{q+1}_0(n-1)+(q^2-1)N^{q+1}_1(n-1),~{\rm and} \\
  N^{q+1}_1(n)&=(q+1)N^{q+1}_0(n-1)+(q^2-q-1)N^{q+1}_1(n-1).
\end{align*}

Next, use induction for $n$. If $n=1$, then the result is obviously valid.
Assume that the result is valid for $n=t$. Next we prove that the result is also valid for $n=t+1$. If $t$ is odd, then
\begin{align*}
  N^{q+1}_0(t+1) & =N^{q+1}_0(t)+(q^2-1)N^{q+1}_1(t) \\
   &=q^{t-1}(q^{t}-q+1)+(q^2-1)q^{t-1}(q^{t}+1)\\
   &=q^t(q^{t+1}+q-1)
\end{align*}
and
\begin{align*}
  N^{q+1}_1(t+1) & =(q+1)N^{q+1}_0(t)+(q^2-q-1)N^{q+1}_1(t)\\
   & =(q+1)q^{t-1}(q^{t}-q+1)+(q^2-q-1)q^{t-1}(q^{t}+1)\\
   &=q^t(q^{t+1}-1).
\end{align*}
The case where $t$ is even is similar to the case where $t$ is odd. Hence the result is also valid for $n=t+1$. By induction hypothesis, the result follows.
\end{proof}

\begin{prop}\label{prop-HSO}
Let $C$ be a Hermitian LCD $[n,k]_{q^2}$ code. Then the number of Hermitian self-orthogonal codewords of $C$ is $N_0^{q+1}(k)$, where $N_0^{q+1}(k)$ is defined in Lemma $\ref{lem-solution}$.
\end{prop}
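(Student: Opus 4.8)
The plan is to reduce the count of Hermitian self-orthogonal codewords in a Hermitian LCD code $C$ to counting solutions of a diagonal equation, and then invoke Lemma~\ref{lem-solution}. First I would use Proposition~\ref{prop-character-Hhull} (with $\ell=0$): since $C$ is a Hermitian LCD $[n,k]_{q^2}$ code, there exists a generator matrix $G$ of $C$ with $G\overline{G}^T = I_k$. Every codeword of $C$ is uniquely of the form ${\bf c} = {\bf m}G$ for a unique message vector ${\bf m} = (m_1,\ldots,m_k)\in \F_{q^2}^k$, because $G$ has rank $k$.

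Next I would compute the Hermitian self-inner product of such a codeword in terms of ${\bf m}$. We have
$$\langle {\bf c},{\bf c}\rangle_{\rm H} = {\bf c}\,\overline{{\bf c}}^T = {\bf m}G\,\overline{G}^T\overline{{\bf m}}^T = {\bf m}\,I_k\,\overline{{\bf m}}^T = \sum_{i=1}^k m_i \overline{m_i} = \sum_{i=1}^k m_i^{q+1}.$$
Hence ${\bf c}={\bf m}G$ is Hermitian self-orthogonal if and only if $m_1^{q+1} + m_2^{q+1} + \cdots + m_k^{q+1} = 0$ in $\F_{q^2}$. Since ${\bf m}\mapsto {\bf m}G$ is a bijection from $\F_{q^2}^k$ onto $C$, the number of Hermitian self-orthogonal codewords of $C$ equals the number of solutions ${\bf m}\in\F_{q^2}^k$ of $x_1^{q+1}+\cdots+x_k^{q+1}=0$, which is exactly $N_0^{q+1}(k)$ by the definition in Lemma~\ref{lem-solution}.

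There is no real obstacle here; the only point requiring a little care is the justification that a generator matrix $G$ with $G\overline{G}^T=I_k$ actually exists for \emph{every} Hermitian LCD $[n,k]_{q^2}$ code $C$ — this is precisely the $\ell=0$ case of Proposition~\ref{prop-character-Hhull} (equivalently of Proposition~\ref{prop-hull}(1), since $\mathrm{rank}(G\overline{G}^T)=k$ forces $G\overline{G}^T$ to be, up to change of basis, the identity, and one can then replace $G$ by $PG$ for a suitable $P\in\mathbb{GL}_k(q^2)$). Once that normalization is in place, the computation above is immediate and the result follows. Note in particular that this count is independent of the ambient length $n$ and of which Hermitian LCD code of dimension $k$ we chose, consistent with Theorem~\ref{thm-C1-U-C2} (all such codes lie in one orbit under $\mathbb{U}_n(q^2)$, and the number of self-orthogonal codewords is clearly an invariant of the orbit).
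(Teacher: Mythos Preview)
Your proof is correct. Both your argument and the paper's reduce the count to the number of solutions of $x_1^{q+1}+\cdots+x_k^{q+1}=0$ in $\F_{q^2}^k$ and then invoke Lemma~\ref{lem-solution}; the only difference is in how that reduction is justified. You appeal directly to Proposition~\ref{prop-character-Hhull} (the $\ell=0$ case) to obtain a generator matrix $G$ with $G\overline{G}^T=I_k$, and then compute $\langle{\bf m}G,{\bf m}G\rangle_{\rm H}=\sum_i m_i^{q+1}$. The paper instead invokes Equation~(\ref{Eq-HLCD}) (equivalently Theorem~\ref{thm-C1-U-C2}) to write $C=C^0_{n,k,q^2}Q$ for some $Q\in\mathbb{U}_n(q^2)$, observes that $Q$ preserves $\langle\cdot,\cdot\rangle_{\rm H}$, and then counts self-orthogonal codewords in the explicit code $C^0_{n,k,q^2}$. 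Your route is marginally more self-contained, since it does not require the transitivity result; the paper's route fits more naturally into its group-action narrative. As you yourself note at the end, the two viewpoints are essentially equivalent.
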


\begin{proof}
By Equation (\ref{Eq-HLCD}), there exists $Q\in \mathbb{U}_n(q^2)$ such that $C=C^0_{n,k,q^2}Q$, where $C^0_{n,k,q^2}$ is the Hermitian LCD $[n,k]_{q^2}$ code with the generator matrix $G^0_{n,k,q^2}$ defined by Equation (\ref{Eq-G0-H0}). Since $\langle {\bf u}Q,{\bf v}Q\rangle_{\rm H}=\langle {\bf u},{\bf v}\rangle_{\rm H}$, the number of Hermitian self-orthogonal codewords ${\bf c}\in C$ is the same as that of $C^0_{n,k,q^2}$. By Lemma \ref{lem-solution}, the number of Hermitian self-orthogonal codewords of $C^0_{n,k,q^2}$ is $N^{q+1}_0(k)$. This completes the proof.
\end{proof}

It is well-known that the number of solutions of a diagonal equation can be expressed in term of Jacobi sums \cite[Ch. 6]{finite-field}. As a result of Lemma \ref{lem-solution}, we obtain the exact values of two summation terms related to Jacobi sums.

\begin{definition}
Let $\lambda_1,\ldots,\lambda_k$ be $k$ multiplicative characters of $\F_{q}$ and let $a\in \F_q$ be fixed. Then the {\em Jacobi sum} in $\F_q$ is defined by
$$J_a(\lambda_1,\ldots,\lambda_k)=\sum_{c_1+\cdots+c_k=a}\lambda_1(c_1) \cdots\lambda_k(c_k),$$
where the summation is extended over all $k$-tuples $(c_1,\ldots,c_k)$ of elements of $\F_q$ with $c_1+\cdots+c_k=a$.
\end{definition}

\begin{cor}{\rm\cite[Theorem 6.33]{finite-field}}
Let $\lambda$ be a multiplicative character of $\F_{q^2}$ of order $q+1$. Then we have
\begin{align*}
  q^{2n-2}+\sum_{{\tiny\begin{array}{c}
1\leq j_1,\ldots,j_{n}\leq q \\
(q+1)|j_1+\cdots+j_{n}
\end{array}}}J_0(\lambda^{j_1},\ldots,\lambda^{j_{n}})
&=\left\{\begin{array}{ll}\vspace{2mm}
           q^{n-1}(q^n-q+1), & n~{\rm is~odd}, \\
           q^{n-1}(q^n+q-1), & n~{\rm is~even}.
         \end{array}
\right. \\
  q^{2n-2}+\sum_{j_1=1}^q\sum_{j_2=1}^q\cdots \sum_{j_n=1}^q J_1(\lambda^{j_1},\ldots,\lambda^{j_{n}})
&=\left\{\begin{array}{ll}\vspace{2mm}
           q^{n-1}(q^n+1), & n~{\rm is~odd}, \\
           q^{n-1}(q^n-1), & n~{\rm is~even}.
         \end{array}
\right.
\end{align*}
\end{cor}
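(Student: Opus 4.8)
The plan is to read both identities straight off Lemma~\ref{lem-solution} by combining it with the classical expression of the number of solutions of a diagonal equation in terms of Jacobi sums, namely \cite[Theorem 6.33]{finite-field}. The point is that Lemma~\ref{lem-solution} computes $N_0^{q+1}(n)$ and $N_1^{q+1}(n)$ by an elementary recursion, while \cite[Theorem 6.33]{finite-field} computes the \emph{same} quantities as $q^{2n-2}$ plus a sum of Jacobi sums; equating the two evaluations yields the asserted closed forms for those sums.

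Concretely, I would work over the field $\F_{q^2}$. Since $q+1\mid q^2-1$, the multiplicative characters of $\F_{q^2}$ of order dividing $q+1$ are exactly $\varepsilon=\lambda^0,\lambda,\ldots,\lambda^{q}$, where $\lambda$ has order exactly $q+1$ and $\varepsilon$ is trivial. Applying \cite[Theorem 6.33]{finite-field} to $x_1^{q+1}+\cdots+x_n^{q+1}=0$ over $\F_{q^2}$ (whose affine leading term is $(q^2)^{n-1}=q^{2n-2}$), and noting that the condition for $\lambda^{j_1}\cdots\lambda^{j_n}$ to be trivial is $(q+1)\mid j_1+\cdots+j_n$, gives
$$N_0^{q+1}(n)=q^{2n-2}+\sum_{\substack{1\le j_1,\ldots,j_n\le q\\ (q+1)\mid j_1+\cdots+j_n}}J_0(\lambda^{j_1},\ldots,\lambda^{j_n}).$$
Applying the same theorem to $x_1^{q+1}+\cdots+x_n^{q+1}=1$, where the weight $\overline{\lambda^{\,j_1+\cdots+j_n}}(1)$ attached to each term equals $1$ and the sum runs over all $(j_1,\ldots,j_n)$ with $1\le j_i\le q$, gives
$$N_1^{q+1}(n)=q^{2n-2}+\sum_{j_1=1}^{q}\sum_{j_2=1}^{q}\cdots\sum_{j_n=1}^{q}J_1(\lambda^{j_1},\ldots,\lambda^{j_n}),$$
all Jacobi sums here being taken in $\F_{q^2}$.

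Finally I would substitute the values of $N_0^{q+1}(n)$ and $N_1^{q+1}(n)$ supplied by Lemma~\ref{lem-solution} — that is, $q^{n-1}(q^n-q+1)$ and $q^{n-1}(q^n+1)$ when $n$ is odd, and $q^{n-1}(q^n+q-1)$ and $q^{n-1}(q^n-1)$ when $n$ is even — into the two displays above, which reproduces the two claimed identities verbatim.

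There is no real obstacle here: all the combinatorial work has already been carried out in Lemma~\ref{lem-solution}, and this corollary is merely its restatement in the language of Jacobi sums. The only care needed is the bookkeeping in matching \cite[Theorem 6.33]{finite-field} to the present setting — that the characters appearing are precisely the powers $\lambda^{j}$ with $1\le j\le q$, that the ``trivial product'' constraint translates to $(q+1)\mid\sum_i j_i$ in the $a=0$ case while no such constraint appears in the $a=1$ case, and that the leading term is $q^{2n-2}$ (not $q^{n-1}$) because the ambient field is $\F_{q^2}$.
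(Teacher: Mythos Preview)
Your proposal is correct and follows essentially the same route as the paper: express $N_0^{q+1}(n)$ and $N_1^{q+1}(n)$ as $q^{2n-2}$ plus a sum of Jacobi sums via the classical formula, then plug in the values from Lemma~\ref{lem-solution}. The only cosmetic difference is that the paper invokes \cite[Theorem~6.34]{finite-field} (rather than 6.33) for the $a=1$ case, though your remark about the weight $\overline{\lambda^{\,j_1+\cdots+j_n}}(1)=1$ shows you have the correct formula in mind.
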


\begin{proof}
By \cite[Theorem 6.33]{finite-field}, the number of solutions of the diagonal equation $x_1^{q+1}+x_2^{q+1}+\cdots+x_n^{q+1}=0$ in $\F_{q^2}^n$ is
$$q^{2n-2}+\sum_{{\tiny\begin{array}{c}
1\leq j_1,\ldots,j_{n}\leq q \\
(q+1)|j_1+\cdots+j_{n}
\end{array}}}J_0(\lambda^{j_1},\ldots,\lambda^{j_{n}}).$$
By \cite[Theorem 6.34]{finite-field}, the number of solutions of the diagonal equation $x_1^{q+1}+x_2^{q+1}+\cdots+x_n^{q+1}=1$ in $\F_{q^2}^n$ is
$$q^{2n-2}+\sum_{j_1=1}^q\sum_{j_2=1}^q\cdots \sum_{j_n=1}^q J_1(\lambda^{j_1},\ldots,\lambda^{j_{n}}).$$
Combining with Lemma \ref{lem-solution}, we complete the proof.
\end{proof}

\begin{prop}\label{Prop-HHull}
Let $C$ be a linear $[n,k]_{q^2}$ code with $\ell$-dimensional Hermitian hull. Then the number of Hermitian self-orthogonal codewords of $C\backslash {\rm Hull_H}(C)$ is
$\left(N_0^{q+1}(k-\ell)-1\right)q^{2\ell}$, where $N_0^{q+1}(k)$ is defined in Lemma $\ref{lem-solution}$.
\end{prop}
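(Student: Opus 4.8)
The plan is to reduce the statement to a decomposition of the code $C$ relative to its Hermitian hull, and then to count Hermitian self-orthogonal codewords on each piece. First I would fix a generator matrix of $C$ adapted to the hull. By Proposition \ref{prop-hull}(1) and Proposition \ref{prop-character-Hhull}, a linear $[n,k]_{q^2}$ code with $\ell$-dimensional Hermitian hull has a generator matrix $G$ with $G\overline{G}^T = {\rm diag}(1,\ldots,1,0,\ldots,0)$ where the block of $1$'s has size $k-\ell$. Write $G = \binom{G_1}{G_2}$ where $G_1$ is $(k-\ell)\times n$ with $G_1\overline{G_1}^T = I_{k-\ell}$ and $G_2$ is $\ell\times n$ with $G_2\overline{G_2}^T = O$; moreover, because the whole product is block-diagonal, one also has $G_1\overline{G_2}^T = O$ and $G_2\overline{G_1}^T = O$. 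Thus if $C_1$ denotes the code generated by $G_1$ and $C_2$ the code generated by $G_2$, then $C_1$ is a Hermitian LCD $[n,k-\ell]_{q^2}$ code, $C_2 = {\rm Hull_H}(C)$ is Hermitian self-orthogonal of dimension $\ell$, and $C_1$ and $C_2$ are "Hermitian-orthogonal" to each other, so every codeword of $C$ can be written uniquely as ${\bf c} = {\bf c}_1 + {\bf c}_2$ with ${\bf c}_1\in C_1$ and ${\bf c}_2\in C_2$.

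Next I would use the bilinearity of the Hermitian form together with these orthogonality relations to compute $\langle {\bf c},{\bf c}\rangle_{\rm H}$. Since $\langle {\bf c}_1,{\bf c}_2\rangle_{\rm H} = 0 = \langle {\bf c}_2,{\bf c}_1\rangle_{\rm H}$ and $\langle {\bf c}_2,{\bf c}_2\rangle_{\rm H} = 0$, one gets $\langle {\bf c},{\bf c}\rangle_{\rm H} = \langle {\bf c}_1,{\bf c}_1\rangle_{\rm H}$. Hence ${\bf c}$ is Hermitian self-orthogonal if and only if its $C_1$-component ${\bf c}_1$ is Hermitian self-orthogonal, and this condition does not constrain the $C_2$-component at all. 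The Hermitian self-orthogonal codewords of $C$ therefore correspond bijectively to pairs $({\bf c}_1,{\bf c}_2)$ with ${\bf c}_1$ a Hermitian self-orthogonal codeword of $C_1$ and ${\bf c}_2$ arbitrary in $C_2 = {\rm Hull_H}(C)$; there are $|C_2| = q^{2\ell}$ choices for ${\bf c}_2$, and by Proposition \ref{prop-HSO} applied to the Hermitian LCD code $C_1$ of dimension $k-\ell$ there are $N_0^{q+1}(k-\ell)$ Hermitian self-orthogonal codewords ${\bf c}_1$. So $C$ has exactly $N_0^{q+1}(k-\ell)\, q^{2\ell}$ Hermitian self-orthogonal codewords in total.

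Finally I would subtract off the contribution of ${\rm Hull_H}(C)$ itself. Every codeword of ${\rm Hull_H}(C) = C_2$ is Hermitian self-orthogonal, and there are $q^{2\ell}$ of them; these are exactly the Hermitian self-orthogonal codewords of $C$ whose $C_1$-component is ${\bf 0}$ (which is indeed one of the $N_0^{q+1}(k-\ell)$ self-orthogonal choices for ${\bf c}_1$). Hence the number of Hermitian self-orthogonal codewords in $C\setminus {\rm Hull_H}(C)$ is $N_0^{q+1}(k-\ell)\, q^{2\ell} - q^{2\ell} = \bigl(N_0^{q+1}(k-\ell)-1\bigr)q^{2\ell}$, as claimed.

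The routine parts are the algebra with block matrices; the one place to be careful is justifying that the cross-terms $G_1\overline{G_2}^T$ and $G_2\overline{G_1}^T$ vanish, i.e. that the adapted generator matrix of Proposition \ref{prop-character-Hhull} really makes the \emph{full} Gram matrix (not just its restriction to $C_1$) block-diagonal, so that $\langle {\bf c}_1,{\bf c}_2\rangle_{\rm H}=0$ for \emph{all} ${\bf c}_1\in C_1,{\bf c}_2\in C_2$ — this is what decouples the self-orthogonality condition from the hull component, and it is the crux of the counting argument. One should also note that $C_2$ as defined here is precisely ${\rm Hull_H}(C)$: it is contained in $C$ and, being Hermitian self-orthogonal and of dimension $\ell$, together with $\dim{\rm Hull_H}(C)=\ell$ forces equality once one checks $C_2\subseteq C^{\perp_{\rm H}}$, which follows from $G_2\overline{G}^T = O$.
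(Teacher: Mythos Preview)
Your proposal is correct and follows essentially the same route as the paper: decompose $C$ as the direct sum of ${\rm Hull_H}(C)$ and a Hermitian LCD complement, observe that the cross-terms vanish so $\langle{\bf c},{\bf c}\rangle_{\rm H}$ depends only on the LCD component, and then apply Proposition~\ref{prop-HSO}. The only cosmetic differences are that the paper swaps your labels $C_1,C_2$ and obtains the decomposition by citing an external reference (\cite[Theorem~2.10]{LCD-code-Li}), whereas you derive it directly from the adapted generator matrix of Proposition~\ref{prop-character-Hhull}; your version is therefore slightly more self-contained.
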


\begin{proof}
For any linear $[n,k]_{q^2}$ code $C$ with $\ell$-dimensional Hermitian hull, it follows from \cite[Theorem 2.10]{LCD-code-Li} that $C$ is the direct sum of the Hermitian self-orthogonal $[n,\ell]_{q^2}$ code $C_1={\rm Hull_H}(C)$ and a Hermitian LCD $[n,k-\ell]_{q^2}$ code $C_2$. For any ${\bf c}\in C\backslash {\rm Hull_H}(C)$, there exists ${\bf c}_1\in C_1$ and ${\bf c}_2\in C_2$ such that ${\bf c}={\bf c}_1+{\bf c}_2$, where ${\bf c}_2$ is a nonzero codeword.
Hence
$\langle {\bf c},{\bf c}\rangle_{\rm H}=\langle {\bf c}_1+{\bf c}_2,{\bf c}_1+{\bf c}_2\rangle_{\rm H}=\langle {\bf c}_2,{\bf c}_2\rangle_{\rm H}.$
This implies that ${\bf c}$ is Hermitian self-orthogonal if and only if ${\bf c}_2$ is Hermitian self-orthogonal. By Proposition \ref{prop-HSO}, the number of nonzero Hermitian self-orthogonal codewords of $C_2$ is $N_0^{q+1}(k-\ell)-1$. Combining with the fact that $C_1$ has $q^{2\ell}$ codewords, we complete the proof.
\end{proof}

\begin{theorem}\label{Thm-mass-Hhull}
Let $n,k,\ell$ be three positive integers such that $\ell\leq k\leq n-\ell$. Let $k_0=k-\ell$. Then we have
$$|{\rm HHull}[n,k]^\ell_{q^2}|=\left\{\begin{array}{ll}\vspace{0.2cm}
\left(\prod_{i=1}^\ell A_{n,k_0,i}\right) q^{k_0(n-k_0)}\prod_{i=1}^{k_0}\frac{q^{n-k_0+i}-(-1)^{n-k_0+i}} {q^i-(-1)^i}, &  {\rm if}\ n-k_0~{\rm is~odd}, \\
\left(\prod_{i=1}^\ell B_{n,k_0,i}\right) q^{k_0(n-k_0)}\prod_{i=1}^{k_0}\frac{q^{n-k_0+i}-(-1)^{n-k_0+i}} {q^i-(-1)^i}, &  {\rm if}\ n-k_0~{\rm is~even},
\end{array}
\right.$$
where $A_{n,k_0,i}=\frac{(q^{n-k_0-2i+2}+1)(q^{n-k_0-2i+1}-1)}{q^{2k_0}(q^{2i}-1)}$ and $B_{n,k_0,i}=\frac{(q^{n-k_0-2i+2}-1)(q^{n-k_0-2i+1}+1)}{q^{2k_0}(q^{2i}-1)}$.
\end{theorem}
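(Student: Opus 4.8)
The plan is to count, in two ways, the pairs $(C,C_2)$ in which $C\in {\rm HHull}[n,k]^{\ell}_{q^2}$ and $C_2$ is a vector‑space complement of the Hermitian hull ${\rm Hull_H}(C)$ inside $C$; here and below write $k_0=k-\ell$, and recall from Proposition \ref{prop-hull}(1) that $\dim {\rm Hull_H}(D)=\dim D-{\rm rank}(G\overline{G}^T)$ for any generator matrix $G$ of a linear code $D$ over $\F_{q^2}$. The key structural claim is that such pairs $(C,C_2)$ are in bijection with the pairs $(S,C_2)$, where $S$ is an $\ell$-dimensional Hermitian self‑orthogonal code and $C_2$ is a Hermitian LCD $[n,k_0]_{q^2}$ code with $C_2\subseteq S^{\perp_{\rm H}}$. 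Granting this, one side counts as $|{\rm HHull}[n,k]^{\ell}_{q^2}|\cdot q^{2\ell k_0}$ and the other as $|{\rm HLCD}[n,k_0]_{q^2}|$ times the number of $\ell$-dimensional totally isotropic subspaces of a non‑degenerate Hermitian $\F_{q^2}$-space of dimension $n-k_0$, and comparing them yields the formula.

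First I would prove the bijection, which I expect to be the main obstacle. Given $C$ with $\ell$-dimensional hull and a complement $C_2$ of $S:={\rm Hull_H}(C)$ in $C$, stack a generator matrix $G_S$ of $S$ on top of a generator matrix $G_{C_2}$ of $C_2$ to obtain a generator matrix $G$ of $C$. Since $S$ is Hermitian self‑orthogonal and $S\perp C$ (because $S\subseteq C^{\perp_{\rm H}}$), the matrix $G\overline{G}^T$ is block‑diagonal with blocks $O$ and $G_{C_2}\overline{G_{C_2}}^T$, so Proposition \ref{prop-hull}(1) forces ${\rm rank}(G_{C_2}\overline{G_{C_2}}^T)=k_0=\dim C_2$; hence $C_2$ is Hermitian LCD, and $C_2\subseteq S^{\perp_{\rm H}}$ holds automatically because $S\perp C\supseteq C_2$. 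Conversely, starting from a Hermitian self‑orthogonal $S$ of dimension $\ell$ and a Hermitian LCD $[n,k_0]_{q^2}$ code $C_2\subseteq S^{\perp_{\rm H}}$, one first checks $S\cap C_2=\{\mathbf{0}\}$ (a nonzero vector of the intersection would lie in ${\rm Hull_H}(C_2)$), so $C:=S\oplus C_2$ has dimension $k$; the same block‑diagonal computation together with Proposition \ref{prop-hull}(1) gives $\dim {\rm Hull_H}(C)=\ell$, and since $S$ is self‑orthogonal and $S\perp C$ we get $S\subseteq {\rm Hull_H}(C)$, hence ${\rm Hull_H}(C)=S$ and $C_2$ is a complement of ${\rm Hull_H}(C)$. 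The two assignments are mutually inverse.

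Next I would count both sides. On the $(C,C_2)$ side, each $C\in {\rm HHull}[n,k]^{\ell}_{q^2}$ has exactly $q^{2\ell k_0}$ complements of its $\ell$-dimensional hull in the $k$-dimensional $\F_{q^2}$-space $C$, so the number of pairs is $|{\rm HHull}[n,k]^{\ell}_{q^2}|\cdot q^{2\ell k_0}$. On the $(S,C_2)$ side I would sum over $C_2$ first: for a fixed Hermitian LCD $[n,k_0]_{q^2}$ code $C_2$, the dual $C_2^{\perp_{\rm H}}$ has dimension $n-k_0$ and trivial Hermitian hull, so the Hermitian form of $\F_{q^2}^n$ restricts to a non‑degenerate Hermitian form on $C_2^{\perp_{\rm H}}$, and (using symmetry of orthogonality) the admissible $S$ are exactly the $\ell$-dimensional totally isotropic subspaces of $C_2^{\perp_{\rm H}}$. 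Since all non‑degenerate Hermitian forms of a given dimension over $\F_{q^2}$ are equivalent, this number depends only on $n-k_0$ and $\ell$; call it $\mathcal{I}(n-k_0,\ell)$. Hence the number of pairs is $|{\rm HLCD}[n,k_0]_{q^2}|\cdot\mathcal{I}(n-k_0,\ell)$, and comparing counts gives $|{\rm HHull}[n,k]^{\ell}_{q^2}|=\mathcal{I}(n-k_0,\ell)\cdot|{\rm HLCD}[n,k_0]_{q^2}|/q^{2\ell k_0}$.

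It remains to evaluate $\mathcal{I}(m,\ell)$, the number of $\ell$-dimensional totally isotropic subspaces of a non‑degenerate Hermitian space of dimension $m$ over $\F_{q^2}$. For $\ell=1$ this comes straight out of Lemma \ref{lem-solution}: the number of nonzero isotropic vectors equals $N_0^{q+1}(m)-1=(q^m-(-1)^m)(q^{m-1}-(-1)^{m-1})$, so $\mathcal{I}(m,1)=(q^m-(-1)^m)(q^{m-1}-(-1)^{m-1})/(q^2-1)$. For general $\ell$ I would count the flags $(L\subseteq W)$ with $L$ an isotropic line and $W$ a totally isotropic $\ell$-subspace in two ways: fixing $W$ gives $\mathcal{I}(m,\ell)(q^{2\ell}-1)/(q^2-1)$, while fixing $L$ and passing to the non‑degenerate Hermitian space $L^{\perp_{\rm H}}/L$ of dimension $m-2$ gives $\mathcal{I}(m,1)\,\mathcal{I}(m-2,\ell-1)$; solving this recursion with $\mathcal{I}(m,0)=1$ yields $\mathcal{I}(m,\ell)=\prod_{i=1}^{\ell}\frac{(q^{m-2i+2}-(-1)^m)(q^{m-2i+1}-(-1)^{m-1})}{q^{2i}-1}$. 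Taking $m=n-k_0$, separating the parity cases (which fixes the signs $(-1)^m$ and $(-1)^{m-1}$) and substituting the value of $|{\rm HLCD}[n,k_0]_{q^2}|$ from Theorem \ref{thm-mass-HHull}, one recognizes $\mathcal{I}(n-k_0,\ell)/q^{2\ell k_0}$ as $\prod_{i=1}^{\ell}A_{n,k_0,i}$ when $n-k_0$ is odd and as $\prod_{i=1}^{\ell}B_{n,k_0,i}$ when $n-k_0$ is even, which is the stated formula (and the hypothesis $k\le n-\ell$, i.e.\ $n-k_0\ge 2\ell$, is exactly what guarantees $\mathcal{I}(n-k_0,\ell)>0$). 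The delicate point remains the bijection of the second paragraph — arguing from the block‑diagonal shape of $G\overline{G}^T$ that every complement of the hull is automatically a Hermitian LCD subcode lying in $S^{\perp_{\rm H}}$, and that conversely $S\oplus C_2$ has hull exactly $S$; by comparison the evaluation of $\mathcal{I}(m,\ell)$ is a routine finite‑geometry computation that could also simply be cited.
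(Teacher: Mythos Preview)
Your proof is correct and takes a genuinely different route from the paper's. The paper argues by a one–step recursion on $\ell$: it extends a code $C^{\ell-1}_{n,k-1}$ to a code $C^{\ell}_{n,k}$ by adjoining a Hermitian self‑orthogonal vector of $(C^{\ell-1}_{n,k-1})^{\perp_{\rm H}}\setminus{\rm Hull_H}(C^{\ell-1}_{n,k-1})$, counts such vectors via Proposition~\ref{Prop-HHull} (which in turn rests on Proposition~\ref{prop-HSO} and the unitary‑group transitivity of Theorem~\ref{thm-C1-U-C2}), and divides by the overcount $(q^{2\ell}-1)q^{2(k-1)}$; iterating $\ell$ times lands on $|{\rm HLCD}[n,k_0]_{q^2}|$. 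By contrast, you set up a single bijection between pairs $(C,C_2)$ and pairs $(S,C_2)$, reduce the $S$–count to the number $\mathcal{I}(n-k_0,\ell)$ of totally isotropic $\ell$–subspaces of a non‑degenerate Hermitian space, and evaluate $\mathcal{I}$ by the standard flag recursion. Both proofs ultimately feed into Theorem~\ref{thm-mass-HHull} and Lemma~\ref{lem-solution}, but your argument bypasses Propositions~\ref{prop-HSO} and~\ref{Prop-HHull} entirely, trading them for the uniqueness of non‑degenerate Hermitian forms and a finite‑geometry count. The payoff of your approach is conceptual clarity: it exhibits $\prod_i A_{n,k_0,i}$ (resp.\ $\prod_i B_{n,k_0,i}$) directly as $\mathcal{I}(n-k_0,\ell)/q^{2\ell k_0}$, explaining the shape of the formula. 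The paper's recursion is slightly more self‑contained, since it derives everything from Lemma~\ref{lem-solution} without invoking the Witt‑type classification of Hermitian forms.
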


\begin{proof}
For convenience, we use $C_{n,k}^\ell$ to denote a linear $[n, k]_{q^2}$ code with $\ell$-dimensional Hermitian hull. Then a given $C_{n,k-1}^{\ell-1}$ can be extended to a $C_{n,k}^\ell$ by adding a Hermitian self-orthogonal codeword of $(C_{n,k-1}^{\ell-1})^{\perp_{\rm H}}\backslash {\rm Hull_H}(C_{n,k-1}^{\ell-1})$.
By Proposition \ref{Prop-HHull}, one can construct $\left(N_0^{q+1}(n-k-\ell+2)-1\right)q^{2(\ell-1)}$ linear $[n,k]_{q^2}$ codes with $\ell$-dimensional Hermitian hull from $C_{n,k-1}^{\ell-1}$ through the above way. Note that some of these codes may be the same.
On the other hand, for any $C^{\ell}_{n,k}$, there are $(q^{2\ell}-1)q^{2(k-1)}$ ways to obtain $C^{\ell}_{n,k}$ through the above way by a similar discussion as in \cite[Proposition 5.5]{2023IT-hull}. Hence we have
\begin{align*}
  |{\rm HHull}[n,k]_{q^2}^\ell|&=
  \frac{\left(N_0^{q+1}(n-k-\ell+2)-1\right)q^{2(\ell-1)}} {(q^{2\ell}-1)q^{2(k-1)}}|{\rm HHull}[n,k-1]^{\ell-1}_{q^2}|\\
  &=\frac{N_0^{q+1}(n-k_0-2\ell+2)-1} {q^{2k}-q^{2k_0}}|{\rm HHull}[n,k-1]^{\ell-1}_{q^2}|
\end{align*}
Repeating the above operation $\ell$ times, we obtain
\begin{align*}
  |{\rm HHull}[n,k]^\ell_{q^2}|&=\left(\prod_{i=1}^\ell \frac{N_0^{q+1}(n-k_0-2i+2)-1} {q^{2k_0+2i}-q^{2k_0}}\right)|{\rm HHull}[n,k_0]^0_{q^2}|.
\end{align*}
Note that ${\rm HHull}[n,k_0]^0_{q^2}={\rm HLCD}[n,k_0]_{q^2}$ and
$$N_0^{q+1}(n-k_0-2i+2)-1=\left\{\begin{array}{ll}\vspace{0.2cm}
(q^{n-k_0-2i+2}+1)(q^{n-k_0-2i+1}-1), &  {\rm if}\ n-k_0~{\rm is~odd}, \\
(q^{n-k_0-2i+2}-1)(q^{n-k_0-2i+1}+1), &  {\rm if}\ n-k_0~{\rm is~even}.
\end{array}
\right.$$
 Hence the result holds by Theorem \ref{thm-mass-HHull}.
\end{proof}

\begin{cor}
The number of Hermitian self-orthogonal $[n,k]_{q^2}$ codes is
$$\prod_{i=1}^k\frac{q^{n-2i+1}(q^{n-2i+2}+(-1)^n(q-1))-1}{q^{2i}-1}.$$
\end{cor}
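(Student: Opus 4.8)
The plan is to obtain this corollary directly from Theorem \ref{Thm-mass-Hhull} by specialising to the extremal case $\ell = k$. The first step is the elementary observation that a linear $[n,k]_{q^2}$ code $C$ is Hermitian self-orthogonal precisely when $C \subseteq C^{\perp_{\rm H}}$, which is equivalent to ${\rm Hull_H}(C) = C \cap C^{\perp_{\rm H}} = C$, i.e.\ to $C$ having $k$-dimensional Hermitian hull. Consequently the set of Hermitian self-orthogonal $[n,k]_{q^2}$ codes is exactly ${\rm HHull}[n,k]^k_{q^2}$, so its size is computed by Theorem \ref{Thm-mass-Hhull} with $\ell = k$. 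Here the hypothesis $\ell \le k \le n-\ell$ of that theorem becomes simply $2k \le n$, which is the familiar necessary condition for the existence of an $[n,k]_{q^2}$ Hermitian self-orthogonal code, so there is nothing to worry about.

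Next I would substitute $k_0 = k - \ell = 0$ into the closed form of Theorem \ref{Thm-mass-Hhull}. In that case $q^{k_0(n-k_0)} = q^0 = 1$, the product $\prod_{i=1}^{k_0}\frac{q^{n-k_0+i}-(-1)^{n-k_0+i}}{q^i-(-1)^i}$ is empty and hence equals $1$, and $n - k_0 = n$, so the theorem gives
$$|{\rm HHull}[n,k]^k_{q^2}| = \begin{cases} \prod_{i=1}^{k} A_{n,0,i}, & n~{\rm odd},\\[2mm] \prod_{i=1}^{k} B_{n,0,i}, & n~{\rm even},\end{cases}$$
where $A_{n,0,i} = \dfrac{(q^{n-2i+2}+1)(q^{n-2i+1}-1)}{q^{2i}-1}$ and $B_{n,0,i} = \dfrac{(q^{n-2i+2}-1)(q^{n-2i+1}+1)}{q^{2i}-1}$.

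The remaining step is a short, purely algebraic simplification of the two numerators. Expanding, one finds $(q^{n-2i+2}+1)(q^{n-2i+1}-1) = q^{2n-4i+3} - q^{n-2i+1}(q-1) - 1$ and $(q^{n-2i+2}-1)(q^{n-2i+1}+1) = q^{2n-4i+3} + q^{n-2i+1}(q-1) - 1$; since $q^{2n-4i+3} = q^{n-2i+1}\cdot q^{n-2i+2}$, both can be written uniformly as $q^{n-2i+1}\bigl(q^{n-2i+2} + (-1)^n(q-1)\bigr) - 1$, the factor $(-1)^n$ picking out the odd case for $A_{n,0,i}$ and the even case for $B_{n,0,i}$. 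Plugging this back into the product yields exactly $\prod_{i=1}^{k}\frac{q^{n-2i+1}(q^{n-2i+2}+(-1)^n(q-1))-1}{q^{2i}-1}$, as claimed. There is essentially no obstacle: all the substance lies in Theorem \ref{Thm-mass-Hhull}, and this corollary is just its boundary case $k_0 = 0$ together with the sign bookkeeping above; the only point meriting a sentence of care is verifying that the hypotheses of Theorem \ref{Thm-mass-Hhull} remain valid when $\ell = k$.
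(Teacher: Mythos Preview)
Your proposal is correct and follows essentially the same approach as the paper: the paper's proof simply notes that a Hermitian self-orthogonal $[n,k]_{q^2}$ code has $k$-dimensional Hermitian hull, so $k_0=k-\ell=0$, and then appeals to Theorem~\ref{Thm-mass-Hhull}. Your write-up merely spells out the algebraic simplification of $A_{n,0,i}$ and $B_{n,0,i}$ into the unified expression with $(-1)^n$, which the paper leaves to the reader.
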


\begin{proof}
Note that a Hermitian self-orthogonal $[n,k]_{q^2}$ code is a linear $[n,k]_{q^2}$ code with $k$-dimensional Hermitian hull. This yields that $k_0=k-\ell=0$. Hence by Theorem \ref{Thm-mass-Hhull}, the desired result holds.
\end{proof}

In the following, we count the inequivalent linear codes with prescribed Hermitian hull dimension. This also provides a detailed explanation for the first paragraph of the introduction. Let $\mathbb{P}_n$ denote the group generated by all $n\times n$ permutation matrices, i.e., square matrices that have exactly one entry equal to $1$ in each row and each column and $0$'s elsewhere. Two linear codes $C_1,C_2$ are {\em permutation-equivalent} if there exists a permutation $P\in \mathbb{P}_n$ such that $C_2=C_1P=\{{\bf c}P~|~{\bf c}\in C_1\}$. Note that $C$ and $CP$ are in the same orbit for any $C\in {\rm HHull}[n,k]_{q^2}^\ell$ and any $P\in \mathbb{P}_n$. For any $[n, k]_{q^2}$ linear code $C$, the automorphism group Aut$(C)$ of $C$ is defined by $\{P\in \mathbb{P}_n~|~CP=C\}.$ For any $P_0\in \mathbb{P}_n$, it can be checked that
$${\rm Aut}(CP_0)=P_0{\rm Aut}(C)=\{P_0P~|~P\in {\rm Aut}(C)\}.$$
Hence $|{\rm Aut}(CP_0)|={\rm Aut}(C)$, and hence the number of linear codes equivalent to $C$ is $\frac{|\mathbb{P}_n|}{{\rm Aut}(C)}=\frac{n!}{{\rm Aut}(C)}$.
Let $[C]$ denote the equivalence class of $C$ for any $C\in {\rm HHull}[n,k]_{q^2}^\ell$. Let
$$\widetilde{{\rm HHull}}[n,k]^\ell_{q^2}=\{[C]~|~C\in {\rm HHull}[n,k]_{q^2}^\ell\}.$$
Then $|\widetilde{{\rm HHull}}[n,k]^\ell_{q^2}|$ denotes the number of inequivalent linear $[n,k]_{q^2}$ codes with $\ell$-dimensional Hermitian hull.
Therefore, we have
$$\sum_{[C]\in \widetilde{{\rm HHull}}[n,k]^\ell_{q^2}}\frac{n!}{{\rm Aut}(C)}=|{\rm HHull}[n,k]_{q^2}^\ell|.$$
Note that the right-hand side of the above equation has been determined in Theorem \ref{Thm-mass-Hhull}. Hence, this provides a detailed explanation that the mass formula is an important tool for classifying linear codes. Finally, we end this subsection with a brief example.

\begin{example}
Let $\F_4=\{0,1,\omega,\omega^2\}$ and let $C_{4,2,i}~(i=1,2,\ldots,7)$ be the quaternary linear $[4,2]$ code with generator matrix $(I_2~M_i)$, where $M_i$ is a $2\times 2$ matrix given by
$$\begin{array}{llll}
    M_1=\begin{pmatrix}
1&\omega^2\\
\omega&\omega
\end{pmatrix}, & M_2=\begin{pmatrix}
0&\omega^2\\
0&0
\end{pmatrix}, & M_3=\begin{pmatrix}
0&0\\
0&1
\end{pmatrix}, & M_4=\begin{pmatrix}
1&\omega^2\\
\omega^2&\omega^2
\end{pmatrix}, \\
    M_5=\begin{pmatrix}
\omega&\omega^2\\
\omega^2&\omega
\end{pmatrix}, & M_6=\begin{pmatrix}
1&\omega\\
1&\omega^2
\end{pmatrix}, & M_7=\begin{pmatrix}
\omega^2&\omega\\
1&\omega
\end{pmatrix}. &
  \end{array}
$$
We verified that these quaternary linear codes all have one-dimensional Hermitian hull and they are pairwise inequivalent (with respect to permutation equivalence) by MAGMA \cite{magma}. In addition,
$$|{\rm Aut}(C_{4,2,i})|=3,1,2,2,12,1,3~(i=1,2,\ldots,7).$$
Hence $\sum_{i=1}^7\frac{4!}{{\rm Aut}(C_{4,2,i})}=90.$ By Theorem \ref{Thm-mass-Hhull}, it can be checked that $|{\rm HHull}[4,2]_{q^2}^1|=90.$ Therefore, there are $7$ inequivalent quaternary linear $[4,2]$ codes with one-dimensional Hermitian hull.
\end{example}

\begin{remark}
Note that there exist more general equivalence relations between two linear codes, such as the monomial equivalence. However, monomially equivalent linear codes may have different Hermitian hulls (except for linear codes over $\F_4$, see \cite{CMTQP,Chen1,Chen2,QINP-luo}), so we only consider the permutation equivalence in this paper.
\end{remark}

\subsection{The asymptotic behavior}
We now investigate the asymptotic behavior of linear codes with prescribed Hermitian hull dimension. For $q > 1$, let $g_{q,n}$ be the number defined by $g_{q,n}=\prod_{i=1}^n\left(1-\frac{1}{q^i}\right).$ The sequence $g_{q,1}, g_{q,2}, \ldots$ is strictly decreasing and has positive limit, which is denoted by $g_{q,\infty}$.
We can rewrite the Gaussian binomial coefficient $\begin{bmatrix}
                     n \\
                     k
                   \end{bmatrix}_q$
as
\begin{align}\label{eq-n-k}
 \begin{bmatrix}
                     n \\
                     k
                   \end{bmatrix}_q&=q^{k(n-k)}\frac{g_{q,n}}{g_{q,k}\cdot g_{q,n-k}}.
\end{align}

\begin{theorem}\label{thm-asm-1}
Let $\ell, k$ and $n$ be three positive integers such that $\ell\leq k\leq n-\ell$. For fixed $\ell$, we have
$$\lim_{{\tiny\begin{array}{c}
          k\rightarrow \infty \\
          (n-k)\rightarrow\infty
        \end{array}}
}\frac{|{\rm HHull}[n,k]_{q^2}^\ell|}{\begin{bmatrix}
                             n \\
                             k \\
                           \end{bmatrix}_{q^2}}=\frac{q^\ell\cdot g_{q,\infty}\cdot g_{q^4,\infty}}{h_{q^2,\ell}\cdot g_{q^2,\infty}\cdot g_{q^2,\infty}},$$
where $h_{q,\ell}=\prod_{i=1}^{\ell}(q^i-1)$ and
$g_{q,n}=\prod_{i=1}^n\left(1-\frac{1}{q^i}\right).$ In particular, if $\ell=0$, then
$$\lim_{{\tiny\begin{array}{c}
          k\rightarrow \infty \\
          (n-k)\rightarrow\infty
        \end{array}}
}\frac{|{\rm HHull}[n,k]_{q^2}^0|}{\begin{bmatrix}
                             n \\
                             k \\
                           \end{bmatrix}_{q^2}}=\frac{g_{q,\infty}\cdot g_{q^4,\infty}}{g_{q^2,\infty}\cdot g_{q^2,\infty}}.$$
\end{theorem}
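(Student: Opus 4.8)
The plan is to start from the closed formula in Theorem~\ref{Thm-mass-Hhull}, divide it by the expression for the Gaussian binomial coefficient coming from Equation~(\ref{eq-n-k}), and then track the limiting behavior of each factor as $k\to\infty$ and $n-k\to\infty$. Writing $k_0=k-\ell$, the numerator of $|{\rm HHull}[n,k]_{q^2}^\ell|$ splits into three pieces: the product $\prod_{i=1}^{\ell}A_{n,k_0,i}$ (or $B_{n,k_0,i}$ in the even case), the LCD-type product $q^{k_0(n-k_0)}\prod_{i=1}^{k_0}\frac{q^{n-k_0+i}-(-1)^{n-k_0+i}}{q^i-(-1)^i}$, and the denominator $\begin{bmatrix}n\\k\end{bmatrix}_{q^2}=q^{2k(n-k)}\frac{g_{q^2,n}}{g_{q^2,k}\,g_{q^2,n-k}}$. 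So first I would handle the middle piece: rewrite $q^{n-k_0+i}-(-1)^{n-k_0+i}=q^{n-k_0+i}\left(1-(-1/q)^{n-k_0+i}\right)$ and $q^i-(-1)^i=q^i\left(1-(-1/q)^i\right)$, so the $q$-powers collect to $q^{k_0 n}$ times a ratio of partial products of $\prod\left(1-(-1/q)^m\right)$; combined with the leading $q^{k_0(n-k_0)}$ this gives a total $q$-power of $q^{k_0(2n-k_0)}=q^{2k(n-k)}\cdot q^{-2\ell n + \ell^2 + \ell k_0 + \ldots}$, which I would match carefully against the $q^{2k(n-k)}$ in the Gaussian binomial (tedious but mechanical exponent bookkeeping).

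Next I would deal with the $A_{n,k_0,i}$ (resp.\ $B_{n,k_0,i}$) factors: each $A_{n,k_0,i}=\frac{(q^{n-k_0-2i+2}+1)(q^{n-k_0-2i+1}-1)}{q^{2k_0}(q^{2i}-1)}$, and since $\ell$ is fixed while $n-k_0=n-k+\ell\to\infty$, for each of the finitely many $i\in\{1,\ldots,\ell\}$ the numerator behaves like $q^{2(n-k_0)-4i+3}\left(1+o(1)\right)$. So $\prod_{i=1}^{\ell}A_{n,k_0,i}$ is asymptotically $q^{\,2\ell(n-k_0)-4\sum i+3\ell}\big/\big(q^{2\ell k_0}\prod_{i=1}^{\ell}(q^{2i}-1)\big)$ times $(1+o(1))$. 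The key observation is that $B_{n,k_0,i}$ has the \emph{same} leading term $q^{2(n-k_0)-4i+3}$ in the numerator, so the odd/even distinction in $n-k_0$ disappears in the limit — I would note this explicitly so the two cases of Theorem~\ref{Thm-mass-Hhull} merge. The factor $\prod_{i=1}^{\ell}(q^{2i}-1)$ in the denominator is exactly $h_{q^2,\ell}$, which is where that term in the stated limit comes from, and the isolated $q^\ell$ should fall out of the residual power of $q$ after all exponents are reconciled.

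Finally I would assemble the limit of the ratio of "$g$"-type products. From the middle piece, the partial products $\prod_{m}\left(1-(-1/q)^m\right)$ have a limit as their upper index goes to infinity; since $1-(-1/q)^m$ for $m$ even equals $1-1/q^m$ and for $m$ odd equals $1+1/q^m$, one checks $\prod_{m=1}^{\infty}\left(1-(-1/q)^m\right)=\prod_{m=1}^{\infty}(1-1/q^{2m})\big/\prod_{m=1}^{\infty}(1-1/q^{2m-1}) \cdot(\ldots)$ — more cleanly, $\prod_{m\ge 1}(1-(-1/q)^m)=\frac{\prod_{m\ge1}(1-q^{-2m})}{\prod_{m\ge1}(1-q^{-(2m-1)})}=\frac{g_{q^2,\infty}^2}{g_{q,\infty}}$ after recombining, so this is where $g_{q,\infty}$, $g_{q^2,\infty}$ and (via $g_{q^2,n}\to g_{q^2,\infty}$ from the Gaussian binomial together with the $q^2\to q^4$ doubling in these infinite products) $g_{q^4,\infty}$ all enter. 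Then $g_{q^2,k}\to g_{q^2,\infty}$ and $g_{q^2,n-k}\to g_{q^2,\infty}$ as $k,n-k\to\infty$, giving the two $g_{q^2,\infty}$ factors in the denominator. The $\ell=0$ case is the immediate specialization with $h_{q^2,0}=1$ and $q^0=1$. The main obstacle I anticipate is purely organizational: carefully collecting every power of $q$ across the three products so that the dominant exponents cancel exactly against $q^{2k(n-k)}$ and leave precisely $q^\ell$, while simultaneously identifying each surviving $\left(1-q^{-j}\right)$-type partial product with the correct $g_{q,\infty}$, $g_{q^2,\infty}$, or $g_{q^4,\infty}$ — it is easy to drop a sign in the $(-1)^m$ terms or to misattribute whether an infinite product is over $q$, $q^2$, or $q^4$.
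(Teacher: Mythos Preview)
Your approach is correct in principle and follows the same overall strategy as the paper: take the closed formula from Theorem~\ref{Thm-mass-Hhull}, divide by $\begin{bmatrix}n\\k\end{bmatrix}_{q^2}$, and compute the limit. However, the paper organizes the calculation much more cleanly, and your sketch contains a concrete error in the infinite-product identity.

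\textbf{The paper's organizational trick.} Rather than expanding $\begin{bmatrix}n\\k\end{bmatrix}_{q^2}$ directly via Equation~(\ref{eq-n-k}) and then fighting with all the exponents at once, the paper first writes
\[
\begin{bmatrix}n\\k\end{bmatrix}_{q^2}
=\Bigl(\prod_{i=1}^{\ell}\frac{q^{2n-2i-2k_0+2}-1}{q^{2k_0+2i}-1}\Bigr)
\begin{bmatrix}n\\k_0\end{bmatrix}_{q^2},
\]
and then pairs each $A_{n,k_0,i}$ (or $B_{n,k_0,i}$) against the $i$th factor of this product. Each such ratio tends to $\dfrac{q}{q^{2i}-1}$, so the whole $\ell$-product collapses to $\dfrac{q^{\ell}}{h_{q^2,\ell}}$ with essentially no exponent bookkeeping, and the problem reduces immediately to the $\ell=0$ (LCD) case with parameters $(n,k_0)$. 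This bypasses exactly the ``tedious but mechanical'' matching against $q^{2k(n-k)}$ that you anticipate as the main obstacle.

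\textbf{The concrete error.} Your evaluation
\[
\prod_{m\ge 1}\Bigl(1-(-1/q)^m\Bigr)=\frac{g_{q^2,\infty}^{2}}{g_{q,\infty}}
\]
is wrong: the odd-index factors are $1+q^{-(2j-1)}$, not $\bigl(1-q^{-(2j-1)}\bigr)^{-1}$. Using $1+x=\dfrac{1-x^2}{1-x}$ on the odd terms gives
\[
\prod_{m\ge 1}\Bigl(1-(-1/q)^m\Bigr)
=g_{q^2,\infty}\cdot\frac{\prod_{j\ge1}(1-q^{-(4j-2)})}{\prod_{j\ge1}(1-q^{-(2j-1)})}
=g_{q^2,\infty}\cdot\frac{g_{q^2,\infty}/g_{q^4,\infty}}{g_{q,\infty}/g_{q^2,\infty}}
=\frac{g_{q^2,\infty}^{3}}{g_{q,\infty}\,g_{q^4,\infty}}.
\]
With your formula the $\ell=0$ limit would come out as $\dfrac{g_{q,\infty}}{g_{q^2,\infty}}$ rather than $\dfrac{g_{q,\infty}\,g_{q^4,\infty}}{g_{q^2,\infty}^{2}}$, i.e.\ you would lose the factor $g_{q^4,\infty}/g_{q^2,\infty}$ entirely. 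Once this identity is corrected, your argument goes through; but I would recommend adopting the paper's splitting of $\begin{bmatrix}n\\k\end{bmatrix}_{q^2}$ so that the $q^{\ell}/h_{q^2,\ell}$ contribution separates from the LCD asymptotics without any exponent-matching at all.
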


\begin{proof}
Suppose that $k_0=k-\ell$.
It is observed that
$$\begin{bmatrix}
                             n \\
                             k \\
                           \end{bmatrix}_{q^2}=
\left(\prod_{i=k_0+1}^{k}\frac{q^{2n-2i+2}-1}{q^{2i}-1}\right)\begin{bmatrix}
                             n \\
                             k_0 \\
                           \end{bmatrix}_{q^2}=
\left(\prod_{i=1}^{\ell}\frac{q^{2n-2i-2k_0+2}-1}{q^{2k_0+2i}-1}\right)\begin{bmatrix}
                             n \\
                             k_0 \\
                           \end{bmatrix}_{q^2}.$$
For $1\leq i\leq \ell$, it can be checked that
$$\lim_{{\tiny\begin{array}{c}
          k\rightarrow \infty \\
          (n-k)\rightarrow\infty
        \end{array}}
}\frac{A_{n,k_0,i}}{\frac{q^{2n-2i-2k_0+2}-1}{q^{2k_0+2i}-1}}=\lim_{{\tiny\begin{array}{c}
          k\rightarrow \infty \\
          (n-k)\rightarrow\infty
        \end{array}}
}\frac{B_{n,k_0,i}}{\frac{q^{2n-2i-2k_0+2}-1}{q^{2k_0+2i}-1}}=
\frac{q}{q^{2i}-1}.$$
By Theorem \ref{Thm-mass-Hhull}, we have
\begin{align*}
 \lim_{{\tiny\begin{array}{c}
          k\rightarrow \infty \\
          (n-k)\rightarrow\infty
        \end{array}}
}\frac{|{\rm HHull}[n,k]_{q^2}^\ell|}{\begin{bmatrix}
                             n \\
                             k \\
                           \end{bmatrix}_{q^2}}
&=\lim_{{\tiny\begin{array}{c}
          k\rightarrow \infty \\
          (n-k)\rightarrow\infty
        \end{array}}
}\frac{\left(\prod_{i=1}^\ell \frac{N_0^{q+1}(n-k_0-2i+2)-1} {q^{2k_0+2i}-q^{2k_0}}\right)|{\rm HLCD}[n,k_0]_{q^2}|} {\left(\prod_{i=1}^{\ell}\frac{q^{2n-2i-2k_0+2}-1}{q^{2k_0+2i}-1}\right)
\begin{bmatrix}
n \\
 k_0 \\
 \end{bmatrix}_{q^2}} \\
  & =
\frac{q^\ell}{h_{q^2,\ell}}\cdot \lim_{{\tiny\begin{array}{c}
          k_0\rightarrow \infty \\
          (n-k_0)\rightarrow\infty
        \end{array}}
}\frac{|{\rm HLCD}[n,k_0]_{q^2}|}{
\begin{bmatrix}
n \\
 k_0 \\
 \end{bmatrix}_{q^2}}.
\end{align*}
Next, we only consider the asymptotic behavior of Hermitian LCD codes. Note that
\begin{align*}
  \lim_{{\tiny\begin{array}{c}
          k_0\rightarrow \infty \\
          (n-k_0)\rightarrow\infty
        \end{array}}}\frac{|{\rm HLCD}[n,k_0]_{q^2}|}{
\begin{bmatrix}
n \\
 k_0 \\
 \end{bmatrix}_{q^2}}&=\lim_{{\tiny\begin{array}{c}
          k_0\rightarrow \infty \\
          (n-k_0)\rightarrow\infty
        \end{array}}}\frac{q^{k_0(n-k_0)}\prod_{i=1}^{k_0}\frac{q^{n-k_0+i}-(-1)^{n-k_0+i}} {q^i-(-1)^i}}{q^{2k_0(n-k_0)}\frac{g_{q^2,n}}{g_{q^2,k_0}\cdot g_{q^2,n-k_0}}}\\
&=\lim_{{\tiny\begin{array}{c}
          k_0\rightarrow \infty \\
          (n-k_0)\rightarrow\infty
        \end{array}}}\left(\frac{g_{q^2,\infty}}{q^{k_0(n-k_0)}}\right)\prod_{i=1}^{k_0}\frac{q^{n-k_0+i}}{q^i-(-1)^i}\\
&=\lim_{{\tiny\begin{array}{c}
k_0\rightarrow \infty \\
(n-k_0)\rightarrow\infty
\end{array}}}\frac{g_{q^2,\infty}}{\prod_{i=1}^{k_0}\left(1-\frac{(-1)^i}{q^i}\right)}\\
&=\frac{g_{q,\infty}\cdot g_{q^4,\infty}}{g_{q^2,\infty}\cdot g_{q^2,\infty}}
\end{align*}
This completes the proof.
\end{proof}

\section{A mass formula for linear codes with prescribed symplectic hull dimension}\label{sec-4}

\subsection{Characterization of symplectic self-orthogonal and LCD codes}
A matrix $M$ is called a {\em skew-symmetric matrix} if $A^T=-A$ and $A$ has zero diagonal. The following proposition gives the classification of skew-symmetric matrices over $\F_q$ under the equivalence relation $M\sim QMQ^T$ \cite{equ-2}, where $Q$ is nonsingular.

\begin{prop}{\rm\cite{equ-2}}\label{prop-rank}
Let $M$ be a skew-symmetric $k\times k$ matrix of rank $t$ with zero diagonal over $\F_q$. Then $t$ is even and there exists a nonsingular matrix $Q$ such that
  $$QMQ^T={\rm diag}(\underbrace{J_{2},J_{2},\ldots,J_{2}}_{t/2},0,\ldots,0).$$
\end{prop}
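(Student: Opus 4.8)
The statement is the classical normal-form theorem for alternating bilinear forms over a field, specialized to $\F_q$. The plan is to prove it by induction on $k$, exhibiting the change of basis explicitly so that the resulting matrix $Q$ is visibly nonsingular. The base cases $k=0$ and $k=1$ are trivial: the only skew-symmetric matrix with zero diagonal is the zero matrix, which already has the claimed form with $t=0$. For the inductive step, suppose $M$ is a nonzero skew-symmetric $k\times k$ matrix with zero diagonal. Since $M\neq O$, there are indices $i\neq j$ with $M_{ij}\neq 0$; after conjugating by a permutation matrix (which is of the form $QMQ^T$ with $Q$ a permutation matrix, hence nonsingular) we may assume $M_{12}=a\neq 0$, and after scaling the first basis vector by $a^{-1}$ we may assume $M_{12}=1$, so $M_{21}=-1$.

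The key step is to use the first two rows/columns to clear out the rest of the first two coordinates. Writing the form as $\langle\cdot,\cdot\rangle$ with Gram matrix $M$ on basis $v_1,\dots,v_k$, for each $r\geq 3$ replace $v_r$ by
$$v_r' = v_r - \langle v_r,v_2\rangle v_1 + \langle v_r, v_1\rangle v_2,$$
which is a unimodular (triangular) change of basis, hence given by conjugation by a nonsingular $Q$. A direct check shows $\langle v_r', v_1\rangle = \langle v_r',v_2\rangle = 0$ for all $r\geq 3$, using skew-symmetry and $\langle v_1,v_2\rangle=1$, $\langle v_1,v_1\rangle=\langle v_2,v_2\rangle=0$. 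Thus after this change of basis $M$ becomes the block-diagonal matrix $\mathrm{diag}(J_2, M')$, where $M'$ is a skew-symmetric $(k-2)\times(k-2)$ matrix with zero diagonal. Note that the composition of all the changes of basis so far is still nonsingular, being a product of nonsingular matrices.

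Now apply the induction hypothesis to $M'$: there is a nonsingular $Q'$ of order $k-2$ with $Q'M'Q'^T = \mathrm{diag}(\underbrace{J_2,\dots,J_2}_{s},0,\dots,0)$ for some integer $s$ with $2s = \mathrm{rank}(M')$. Conjugating $\mathrm{diag}(J_2,M')$ by $\mathrm{diag}(I_2, Q')$, which is nonsingular, and composing with the earlier transformations yields a single nonsingular $Q$ with $QMQ^T = \mathrm{diag}(\underbrace{J_2,\dots,J_2}_{s+1},0,\dots,0)$. Since $\mathrm{rank}(M) = 2 + \mathrm{rank}(M') = 2 + 2s = 2(s+1)$, the rank is even and equals twice the number of $J_2$ blocks, so $t = 2(s+1)$ and the normal form is exactly as claimed; invariance of rank under $M\mapsto QMQ^T$ for nonsingular $Q$ guarantees the number of blocks is $t/2$ regardless of the choices made. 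The main thing to be careful about — the only real obstacle — is keeping track that every elementary move (permutation, scaling, the triangular "clearing" step, and the recursive block move) is genuinely realized as conjugation $M\mapsto QMQ^T$ by a nonsingular $Q$, and that these compose; the arithmetic verifying $\langle v_r',v_1\rangle=\langle v_r',v_2\rangle=0$ is a short routine computation that I would not belabor.
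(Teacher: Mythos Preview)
Your proof is correct and is the standard inductive construction of a symplectic (hyperbolic) basis for an alternating bilinear form; the clearing step $v_r' = v_r - \langle v_r,v_2\rangle v_1 + \langle v_r, v_1\rangle v_2$ does exactly what you claim, and the rank bookkeeping is fine.

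There is nothing to compare against in the paper itself: Proposition~\ref{prop-rank} is stated with a citation to Weyl's \emph{The Classical Groups} and is not proved in the text. So your argument is not so much ``different from the paper's proof'' as it is a self-contained justification of a result the paper simply imports. What you have written is essentially the textbook proof (the one found in Weyl, or in any treatment of alternating forms), and it would serve perfectly well if the paper needed to be self-contained on this point.
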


\begin{theorem}\label{thm-k-even}
Let $C$ be a linear $[2n,k]_q$ code. Then $C$ is symplectic LCD if and only if $k$ is even and there is a basis ${\bf c}_1,{\bf c}_1',\ldots,{\bf c}_{\frac{k}{2}},{\bf c}'_{\frac{k}{2}}$ of $C$ such that, for any $i,j\in \left\{1,2,\ldots,\frac{k}{2}\right\}$, the following conditions hold.
\begin{itemize}
 \setlength{\itemsep}{2pt}
\setlength{\parsep}{0pt}
\setlength{\parskip}{0pt}
  \item [{\rm (1)}]  $\langle \ccc_i,  \ccc'_j\rangle_{\rm s}=0$, for $i\neq j$;
  \item [{\rm (2)}] $\langle\ccc_i,  \ccc'_i\rangle_{\rm s}=1$ and $\langle\ccc'_i,  \ccc_i\rangle_{\rm s}=-1$ for $1\leq i\leq \frac{k}{2}$.
\end{itemize}
\end{theorem}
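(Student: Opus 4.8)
The plan is to characterize symplectic LCD codes via the Gram matrix of a generator matrix, using Proposition~\ref{prop-hull}(2) and the classification of skew-symmetric matrices in Proposition~\ref{prop-rank}. Let $G$ be any generator matrix of a linear $[2n,k]_q$ code $C$. The matrix $M = G\Omega_n G^T$ is skew-symmetric with zero diagonal (since $\Omega_n^T = -\Omega_n$, so $M^T = G\Omega_n^T G^T = -M$, and the diagonal entries $\langle \ccc_i,\ccc_i\rangle_{\rm s}$ vanish because the symplectic form is alternating). By Proposition~\ref{prop-hull}(2), $C$ is symplectic LCD precisely when $\operatorname{rank}(M) = k$, i.e.\ $M$ is nonsingular. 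By Proposition~\ref{prop-rank}, the rank of a skew-symmetric zero-diagonal matrix is always even, so $M$ nonsingular forces $k$ even; this already gives the parity condition.

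**Next I would** exploit the action of $\mathbb{GL}_k(q)$ on generator matrices: replacing $G$ by $QG$ for $Q \in \mathbb{GL}_k(q)$ replaces $M$ by $QMQ^T$ and does not change the code $C$. When $C$ is symplectic LCD, $M$ has rank $k$ (even), so Proposition~\ref{prop-rank} supplies a nonsingular $Q$ with $QMQ^T = \operatorname{diag}(\underbrace{J_2,\ldots,J_2}_{k/2})$. Taking $G' = QG$ as the new generator matrix and writing its rows as $\ccc_1,\ccc_1',\ldots,\ccc_{k/2},\ccc'_{k/2}$ in order, the identity $G'\Omega_n (G')^T = \operatorname{diag}(J_2,\ldots,J_2)$ unpacks entrywise into exactly conditions (1) and (2): the block-diagonal shape gives $\langle \ccc_i,\ccc_j'\rangle_{\rm s} = \langle\ccc_i,\ccc_j\rangle_{\rm s} = \langle\ccc_i',\ccc_j'\rangle_{\rm s}= 0$ for $i \neq j$ (which in particular yields (1)), and the $i$th diagonal block $J_2 = \left(\begin{smallmatrix}0&1\\-1&0\end{smallmatrix}\right)$ gives $\langle\ccc_i,\ccc_i'\rangle_{\rm s} = 1$ and $\langle\ccc_i',\ccc_i\rangle_{\rm s} = -1$, which is (2).

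**For the converse**, suppose $C$ has even dimension $k$ and a basis $\ccc_1,\ccc_1',\ldots,\ccc_{k/2},\ccc'_{k/2}$ satisfying (1) and (2). Form the generator matrix $G$ with these rows in this order. The Gram matrix $G\Omega_n G^T$ is skew-symmetric with zero diagonal; its off-diagonal $2\times 2$ blocks between distinct indices $i \neq j$ are zero by (1) together with skew-symmetry (the entries $\langle\ccc_i,\ccc_j\rangle_{\rm s}$ and $\langle\ccc_i',\ccc_j'\rangle_{\rm s}$ also vanish since $\langle\ccc_i,\ccc_i\rangle_{\rm s}=0$ and more generally alternating forms pair basis elements within the prescribed structure---here one notes the only conditions imposed are the stated ones, so to make the argument airtight I would phrase (1) as covering all four cross-pairings, or observe directly that the hypotheses pin down $G\Omega_n G^T$ to be $\operatorname{diag}(J_2,\ldots,J_2)$ up to the unspecified entries, which I handle below). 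The diagonal blocks are $J_2$ by (2). Hence $G\Omega_n G^T$ has rank $k$, so by Proposition~\ref{prop-hull}(2) the symplectic hull is $0$-dimensional and $C$ is symplectic LCD.

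**The main obstacle** I anticipate is the converse direction as stated: conditions (1)--(2) as written only constrain the pairings $\langle\ccc_i,\ccc_j'\rangle_{\rm s}$, and say nothing a priori about $\langle\ccc_i,\ccc_j\rangle_{\rm s}$ or $\langle\ccc_i',\ccc_j'\rangle_{\rm s}$ for $i\neq j$. So the Gram matrix is not literally $\operatorname{diag}(J_2,\ldots,J_2)$---it is that matrix plus a skew-symmetric perturbation supported off the prescribed blocks. I expect the fix is to show this perturbed matrix is still nonsingular: it has the block form with invertible $2\times 2$ diagonal blocks $J_2$ and can be reduced by symplectic-type row/column operations (or a Schur-complement / block-elimination argument) to block-diagonal form without changing its rank, because each $J_2$ block is a ``hyperbolic pair'' that can be used to clear the rows and columns through it. Carrying out that elimination carefully---equivalently, arguing that the symplectic form restricted to $C$ is nondegenerate once it is nondegenerate on each of the $k/2$ hyperbolic planes and these planes ``absorb'' the remaining interactions---is the one step requiring genuine care rather than bookkeeping; everything else is a direct translation between the matrix identity and the inner-product conditions.
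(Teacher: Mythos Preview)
Your treatment of the direction ``symplectic LCD $\Rightarrow$ basis'' coincides with the paper's: both invoke Proposition~\ref{prop-hull}(2) to get $G\Omega_nG^T$ nonsingular, then Proposition~\ref{prop-rank} to normalize it to $\mathrm{diag}(J_2,\ldots,J_2)$ via $G\mapsto QG$, and read off the basis from the rows of $QG$.

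For the converse you correctly observe that conditions (1)--(2), read literally, constrain only the pairings $\langle\ccc_i,\ccc_j'\rangle_{\rm s}$ and leave $\langle\ccc_i,\ccc_j\rangle_{\rm s}$, $\langle\ccc_i',\ccc_j'\rangle_{\rm s}$ free for $i\neq j$. The paper does not address this: it simply asserts $G\Omega_nG^T=\mathrm{diag}(J_2,\ldots,J_2)$, tacitly reading (1) as forcing \emph{all} off-block pairings to vanish (the intended meaning of ``symplectic basis'', cf.\ the remark following the theorem). However, your proposed fix---that the perturbed Gram matrix is automatically nonsingular and can be cleared by block elimination---is \emph{false}. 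In $\F_q^6$ with the standard form $\Omega_3$, take $\ccc_1={\bf e}_1$, $\ccc_1'={\bf e}_4$, $\ccc_2={\bf e}_2+{\bf e}_4$, $\ccc_2'=-{\bf e}_1+{\bf e}_3$. These four vectors are linearly independent and satisfy (1)--(2) exactly as stated ($\langle\ccc_i,\ccc_i'\rangle_{\rm s}=1$, $\langle\ccc_1,\ccc_2'\rangle_{\rm s}=\langle\ccc_2,\ccc_1'\rangle_{\rm s}=0$), yet $\langle\ccc_1,\ccc_2\rangle_{\rm s}=\langle\ccc_1',\ccc_2'\rangle_{\rm s}=1$ and the Gram matrix
\[
G\Omega_3 G^T=\begin{pmatrix} 0 & 1 & 1 & 0 \\ -1 & 0 & 0 & 1 \\ -1 & 0 & 0 & 1 \\ 0 & -1 & -1 & 0 \end{pmatrix}
\]
has rank $2$; the code they span is \emph{not} symplectic LCD. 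So the elimination you sketch cannot succeed in general. The genuine repair is not analytic but editorial: condition (1) must be strengthened to include $\langle\ccc_i,\ccc_j\rangle_{\rm s}=\langle\ccc_i',\ccc_j'\rangle_{\rm s}=0$ for $i\neq j$, after which the paper's one-line computation $G\Omega_nG^T=\mathrm{diag}(J_2,\ldots,J_2)$ is justified and Proposition~\ref{prop-hull}(2) finishes the converse.
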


\begin{proof}
Assume that $k$ is even and there is a basis ${\bf c}_1,{\bf c}_1',\ldots,{\bf c}_{\frac{k}{2}},{\bf c}'_{\frac{k}{2}}$ of $C$ such that for any $i,j\in \left\{1,2,\ldots,\frac{k}{2}\right\}$, the conditions (1) and (2) hold.
Let $G$ be the generator matrix of $C$ whose rows form the basis ${\bf c}_1,{\bf c}_1',\ldots,{\bf c}_{\frac{k}{2}},{\bf c}'_{\frac{k}{2}}$.
Then $G\Omega_n G^T={\rm diag}(\underbrace{J_{2},J_{2},\ldots,J_{2}}_{k/2}).$ This implies that $G\Omega_n G^T$ is nonsingular. Therefore, $C$ is symplectic LCD by Proposition \ref{prop-hull}.

Conversely, assume that $C$ is symplectic LCD. Let $G$ be a generator matrix of $C$. Since $(G\Omega_n G^T)^T=G\Omega_n^T G^T=-G\Omega_n G^T$ and ${\bf x}\Omega_n {\bf x}^T=0$ for any ${\bf x}\in \F_q^{2n}$, $G\Omega_n G^T$ is a skew-symmetric matrix. By Proposition \ref{prop-hull}, we know that $G\Omega_n G^T$ is nonsingular. Then it follows from Proposition \ref{prop-rank} that $k$ is even and there is a nonsingular matrix $Q$ such that
$$QG\Omega_n G^TQ^T=(QG)\Omega_n (QG)^T={\rm diag}(\underbrace{J_{2},J_{2},\ldots,J_{2}}_{k/2}).$$
Let $G'=QG$. Then $G'$ is also a generator matrix of $C$. Let ${\bf c}_i$ be the
$(2i-1)$st row of the matrix $G'$ and let ${\bf c}'_i$ be the $2i$th row of the matrix $G'$ for $i\in \left\{1,2,\ldots,\frac{k}{2}\right\}$. Hence ${\bf c}_1,{\bf c}_1',\ldots,{\bf c}_{\frac{k}{2}},{\bf c}'_{\frac{k}{2}}$ is the desired
basis.
\end{proof}

\begin{remark}
A basis of $C$ satisfying the conditions in Theorem \ref{thm-k-even} is called a {\em symplectic basis}. Hence, a linear code $C$ is symplectic LCD if and only if $C$ has a symplectic basis.
\end{remark}

By Theorem \ref{thm-k-even}, we have the following proposition.

\begin{prop}\label{cor-k-l-even}
If there is a linear $[2n,k]_q$ code with $\ell$-dimensional symplectic hull, then $k-\ell$ is even.
\end{prop}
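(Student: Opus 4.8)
The plan is to reduce the statement to a rank computation via Proposition~\ref{prop-hull}. Let $C$ be a linear $[2n,k]_q$ code with $\ell$-dimensional symplectic hull and let $G$ be any generator matrix of $C$. By Proposition~\ref{prop-hull}(2), $\ell = k - \mathrm{rank}(G\Omega_n G^T)$, so $\mathrm{rank}(G\Omega_n G^T) = k - \ell$. The key observation is that $G\Omega_n G^T$ is a skew-symmetric matrix with zero diagonal: indeed $(G\Omega_n G^T)^T = G\Omega_n^T G^T = -G\Omega_n G^T$ since $\Omega_n^T = -\Omega_n$, and for each row ${\bf x}$ of $G$ the diagonal entry is ${\bf x}\Omega_n{\bf x}^T = \langle {\bf x},{\bf x}\rangle_{\rm s} = 0$.

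Next I would invoke Proposition~\ref{prop-rank}, which states that any skew-symmetric matrix with zero diagonal over $\F_q$ has even rank. Applying this to $M = G\Omega_n G^T$ gives that $k - \ell = \mathrm{rank}(G\Omega_n G^T)$ is even, which is exactly the claim. This is a short proof; the only genuine content is recognizing that the structure of $G\Omega_n G^T$ as a zero-diagonal skew-symmetric matrix lets us apply the rank parity result, and this recognition step has essentially already been carried out in the proof of Theorem~\ref{thm-k-even}.

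There is no real obstacle here: the statement follows immediately by combining Proposition~\ref{prop-hull}(2) (translating the hull dimension into a rank) with Proposition~\ref{prop-rank} (forcing that rank to be even), together with the elementary verification that $G\Omega_n G^T$ is skew-symmetric with zero diagonal. If anything, the only point requiring a line of care is the zero-diagonal condition, which is where the specific form of the symplectic inner product (as opposed to a general bilinear form) is used.
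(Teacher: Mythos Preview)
Your proof is correct, and it is a genuinely more direct route than the paper's. The paper argues by decomposition: it extends a basis of ${\rm Hull_s}(C)$ to a basis ${\bf c}_1,\ldots,{\bf c}_k$ of $C$, shows that the subcode $C_1$ spanned by ${\bf c}_{\ell+1},\ldots,{\bf c}_k$ is symplectic LCD (by checking $C_1\cap C^{\perp_{\rm s}}=\{{\bf 0}\}$), and then invokes Theorem~\ref{thm-k-even} to conclude that $\dim C_1=k-\ell$ is even. Your argument bypasses this complement construction and Theorem~\ref{thm-k-even} entirely, going straight from Proposition~\ref{prop-hull}(2) to Proposition~\ref{prop-rank}; since Theorem~\ref{thm-k-even} itself rests on Proposition~\ref{prop-rank}, you are essentially short-circuiting the paper's detour. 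The trade-off is that the paper's proof yields, as a by-product, the structural fact that $C$ splits as the direct sum of its symplectic hull and a symplectic LCD code, which is of independent interest even though it is not strictly needed for the parity statement.
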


\begin{proof}
Let $C$ be a linear $[2n,k]_q$ code with $\ell$-dimensional symplectic hull. Let ${\bf c}_1,{\bf c}_2,\ldots,{\bf c}_k$ be a basis of $C$ such that ${\bf c}_1,\ldots,{\bf c}_\ell$ is a basis of ${\rm Hull_s}(C)$. Then we claim that the linear code $C_1$ generated by ${\bf c}_{\ell+1},\ldots,{\bf c}_k$ is a symplectic LCD $[2n,k-\ell]_q$ code.
It can be checked that
$$C_1\cap C^{\perp_{\rm s}}=(C_1\cap C)\cap C^{\perp_{\rm s}}=C_1\cap {\rm Hull_s}(C)=\{{\bf 0}\},$$
otherwise there exists nonzero codeword ${\bf c}\in C_1$ such that ${\bf c}\in {\rm Hull_s}(C)$, which is a contradiction.
For any ${\bf c}\in {\rm Hull_s}(C_1)$, it is easy to see that $\langle {\bf c},{\bf c}_i \rangle_{\rm s}=0$ for $1\leq i\leq k$. This implies that ${\bf c}\in C^{\perp_{\rm s}}$. Together with the fact that ${\bf c}\in {\rm Hull_s}(C_1)\subseteq C_1$, we have ${\bf c}={\bf 0}$ from $C_1\cap C^{\perp_{\rm s}}=\{{\bf 0}\}$. This completes the proof of the claim.
Hence $C_1$ is a symplectic LCD $[2n,k-\ell]_q$ code. By Theorem \ref{thm-k-even}, the result follows.
\end{proof}

\begin{theorem}\label{thm-equivalent}
Let $C$ be a linear $[2n,k]_q$ code. If $C$ is symplectic LCD, then $k$ is even and there is a generator matrix $G$ of $C$ and a generator matrix $H$ of $C^{\perp_{\rm s}}$ such that
$$\begin{pmatrix}
 G \\
 H
 \end{pmatrix}\Omega_n\begin{pmatrix}
  G \\
  H
 \end{pmatrix}^T={\rm diag}(\underbrace{J_{2},J_{2},\ldots,J_{2}}_{n}).$$
\end{theorem}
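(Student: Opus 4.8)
The plan is to leverage Theorem~\ref{thm-k-even}, which already supplies a symplectic basis $\bc_1,\bc_1',\ldots,\bc_{k/2},\bc'_{k/2}$ of $C$. Collecting these rows into a generator matrix $G$ gives $G\Omega_n G^T = {\rm diag}(\underbrace{J_2,\ldots,J_2}_{k/2})$, which is nonsingular. Since $C$ is symplectic LCD, we have $\F_q^{2n} = C \oplus C^{\perp_{\rm s}}$, so any generator matrix $H_0$ of $C^{\perp_{\rm s}}$ stacks with $G$ to form a nonsingular $2n\times 2n$ matrix. The first step is to understand $H_0\Omega_n H_0^T$: because $C^{\perp_{\rm s}}$ is also symplectic LCD (its symplectic dual is $C$, which meets it trivially), Theorem~\ref{thm-k-even} applies again to give that $2n-k$ is even and that, after replacing $H_0$ by $Q_2 H_0$ for a suitable nonsingular $Q_2$, we get $H\Omega_n H^T = {\rm diag}(\underbrace{J_2,\ldots,J_2}_{(2n-k)/2})$ with $H = Q_2 H_0$. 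Note $k$ even plus $2n-k$ even is automatic here.

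The remaining task is to kill the off-diagonal blocks $G\Omega_n H^T$ and $H\Omega_n G^T = -(G\Omega_n H^T)^T$. The key observation is that we are still free to change $H$ by any transformation of the form $H \mapsto H + RG$ with $R$ an arbitrary $(2n-k)\times k$ matrix over $\F_q$ — this keeps the rows in $C^{\perp_{\rm s}}$? No: the rows of $G$ lie in $C$, not $C^{\perp_{\rm s}}$, so this does not preserve $C^{\perp_{\rm s}}$. Instead I would argue the other way: modify $G$ by $G \mapsto G + S H$ for a suitable $k \times (2n-k)$ matrix $S$. Since the rows of $H$ lie in $C^{\perp_{\rm s}}$ and $C \cap C^{\perp_{\rm s}} = \{0\}$, adding rows of $H$ to $G$ does not keep the row space equal to $C$ either. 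So neither stays inside its code. The correct move is to use the free block of a \emph{joint} basis: pick the new generator matrices simultaneously. Concretely, one shows the bilinear form $\langle\cdot,\cdot\rangle_{\rm s}$ restricted to $\F_q^{2n}$ is nondegenerate, $C$ and $C^{\perp_{\rm s}}$ are complementary nondegenerate subspaces, and the form is the orthogonal direct sum of its restrictions to $C$ and to $C^{\perp_{\rm s}}$. Then a symplectic basis of each summand (existing by Theorem~\ref{thm-k-even} applied to $C$ and to $C^{\perp_{\rm s}}$) concatenates to a symplectic basis of $\F_q^{2n}$; reordering this combined basis into the standard order $(\uuu_1,\ldots,\uuu_n,\vvv_1,\ldots,\vvv_n)$ is exactly the statement that $\begin{pmatrix} G \\ H\end{pmatrix}\Omega_n\begin{pmatrix} G \\ H\end{pmatrix}^T = {\rm diag}(\underbrace{J_2,\ldots,J_2}_{n})$ — up to the trivial permutation of rows relating ${\rm diag}(J_2,\ldots,J_2)$ to the block form; since the statement only claims the diag-$J_2$ form, no permutation is even needed.

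The main obstacle is the orthogonal-direct-sum claim: that for ${\bf u}\in C$ and ${\bf w}\in C^{\perp_{\rm s}}$ one need not have $\langle{\bf u},{\bf w}\rangle_{\rm s}=0$ — indeed in general it is nonzero — so the form is \emph{not} literally the orthogonal direct sum of its restrictions to $C$ and $C^{\perp_{\rm s}}$. This is the real subtlety, and it is why I would instead build $H$ directly by a Gram--Schmidt-style argument: start with the symplectic basis $\bc_1,\bc_1',\ldots$ of $C$ from Theorem~\ref{thm-k-even}; then for each basis vector ${\bf h}$ of any generator matrix $H_0$ of $C^{\perp_{\rm s}}$, replace ${\bf h}$ by ${\bf h}' = {\bf h} - \sum_{i}\big(\langle{\bf h},\bc_i'\rangle_{\rm s}\bc_i - \langle{\bf h},\bc_i\rangle_{\rm s}\bc_i'\big)$, which still generates a complement to $C$ inside $\F_q^{2n}$ but is now symplectically orthogonal to every row of $G$; crucially ${\bf h}'$ need no longer lie in $C^{\perp_{\rm s}}$, but its row space $C'$ is a complement of $C$ with $\langle C, C'\rangle_{\rm s}=0$, hence $C' \subseteq C^{\perp_{\rm s}}$ by dimension count, so in fact $C' = C^{\perp_{\rm s}}$. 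Thus the modified rows \emph{do} form a generator matrix $H$ of $C^{\perp_{\rm s}}$ with $G\Omega_n H^T = O$. Finally applying Theorem~\ref{thm-k-even} to $C^{\perp_{\rm s}}$ (which is symplectic LCD, as its symplectic dual $C$ meets it trivially) lets us further replace $H$ by $Q_2 H$ so that $H\Omega_n H^T = {\rm diag}(\underbrace{J_2,\ldots,J_2}_{(2n-k)/2})$, while $Q_2 H$ still satisfies $G\Omega_n(Q_2 H)^T = (G\Omega_n H^T)Q_2^T = O$. Assembling the blocks gives the claimed identity, with $k$ even from Theorem~\ref{thm-k-even}.
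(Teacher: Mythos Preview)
Your argument eventually reaches the correct conclusion, but it is built around a misconception that sends you on a long and unnecessary detour. You write that ``for ${\bf u}\in C$ and ${\bf w}\in C^{\perp_{\rm s}}$ one need not have $\langle{\bf u},{\bf w}\rangle_{\rm s}=0$ --- indeed in general it is nonzero.'' This is false: by the very definition of the symplectic dual, $\langle{\bf u},{\bf w}\rangle_{\rm s}=0$ for every ${\bf u}\in C$ and ${\bf w}\in C^{\perp_{\rm s}}$ (and hence also $\langle{\bf w},{\bf u}\rangle_{\rm s}=-\langle{\bf u},{\bf w}\rangle_{\rm s}=0$). Consequently, for \emph{any} generator matrices $G$ of $C$ and $H$ of $C^{\perp_{\rm s}}$ the off-diagonal block $G\Omega_n H^T$ is automatically the zero matrix. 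In particular, every coefficient $\langle{\bf h},{\bf c}_i\rangle_{\rm s}$ and $\langle{\bf h},{\bf c}_i'\rangle_{\rm s}$ in your Gram--Schmidt step vanishes, so ${\bf h}'={\bf h}$ and the entire correction is vacuous.

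The approach you abandoned --- the ``orthogonal direct sum'' argument --- is therefore correct and is exactly what the paper does. Since $C^{\perp_{\rm s}}$ is symplectic LCD as well, Theorem~\ref{thm-k-even} gives a symplectic basis of $C$ and a symplectic basis of $C^{\perp_{\rm s}}$; taking the corresponding generator matrices $G$ and $H$ one has $G\Omega_nG^T={\rm diag}(J_2,\ldots,J_2)$ ($k/2$ blocks), $H\Omega_nH^T={\rm diag}(J_2,\ldots,J_2)$ ($n-k/2$ blocks), and $G\Omega_nH^T=O$ by definition of $C^{\perp_{\rm s}}$. Stacking gives the claimed identity in two lines.
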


\begin{proof}
Note that $C^{\perp_{\rm s}}$ is also symplectic LCD since $C$ is symplectic LCD.
By Theorem \ref{thm-k-even}, $C$ has a symplectic basis ${\bf c}_1,{\bf c}_1',\ldots,{\bf c}_{\frac{k}{2}},{\bf c}'_{\frac{k}{2}}$ and $C^{\perp_{\rm s}}$ also has a symplectic basis ${\bf c}_{\frac{k}{2}+1},{\bf c}_{\frac{k}{2}+1}',\ldots,{\bf c}_{n},{\bf c}'_n$. Let $G$ be the generator matrix of $C$ whose $(2i-1)$st row is ${\bf c}_i$ and whose $2i$th row is ${\bf c}'_i$ for $1\leq i\leq \frac{k}{2}$. Let $H$ be the generator matrix of $C^{\perp_{\rm s}}$ whose $(2i-1)$st row is ${\bf c}_i$ and whose $2i$th row is ${\bf c}'_i$ for $\frac{k}{2}+1\leq i\leq n$. Hence
$$\begin{pmatrix}
 G \\
 H
 \end{pmatrix}\Omega_n\begin{pmatrix}
  G \\
  H
 \end{pmatrix}^T={\rm diag}(\underbrace{J_{2},J_{2},\ldots,J_{2}}_{n}).$$
This completes the proof.
\end{proof}

\subsection{A mass formula for symplectic LCD codes}
In this subsection, we obtain a mass formula for symplectic LCD codes. We consider the action of the symplectic group on the set of symplectic LCD codes. By Theorem \ref{thm-k-even}, we only need to consider cases where both the length and dimension are even. Let SLCD$[2n,2k]_q$ denote the set of all symplectic LCD $[2n,2k]_q$ codes.

Let ${\bf 0}_n$ be the zero vector of length $n$ and let $C_0$ denote the linear code with the following generator matrix:
\begin{align}\label{G0}
  G_0 &= \left(
\begin{array}{c|c}
{\bf e}_1  &   {\bf 0}_n  \\
{\bf 0}_n  &  {\bf e}_1  \\
\vdots   & \vdots  \\
{\bf e}_k  &   {\bf 0}_n  \\
{\bf 0}_n  &  {\bf e}_k  \\
\end{array}
\right).
\end{align}
It can be checked that $C_0$ is a symplectic LCD $[2n,2k]$ code and $C_0^{\perp_{\rm s}}$ has a generator matrix as follows:
\begin{align}\label{H0}
  H_0 &= \left(
\begin{array}{c|c}
{\bf e}_{k+1}  &   {\bf 0}_n  \\
{\bf 0}_n  &  {\bf e}_{k+1}  \\
\vdots   & \vdots  \\
{\bf e}_n  &   {\bf 0}_n  \\
{\bf 0}_n  &  {\bf e}_n  \\
\end{array}
\right).
\end{align}

\begin{theorem}\label{thm-C1C2}
Let $C_1$ and $C_2$ be two symplectic LCD $[2n,2k]_q$ codes. Then there exists a symplectic matrix $Q\in \mathbb{S}\mathrm{p}^1_{2n}(q)$ such that $C_2=C_1Q$. Conversely, for any symplectic LCD $[2n,2k]_q$ code $C$ and any symplectic matrix $Q\in \mathbb{S}\mathrm{p}^1_{2n}(q)$, $CQ$ is also a symplectic LCD $[2n,2k]_q$ code.
\end{theorem}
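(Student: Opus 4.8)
The plan is to imitate the proof of Theorem \ref{thm-C1-U-C2}, replacing the Hermitian-orthogonality normal form by the symplectic normal form supplied by Theorem \ref{thm-equivalent}. First I would handle the forward direction. Given symplectic LCD $[2n,2k]_q$ codes $C_1$ and $C_2$, apply Theorem \ref{thm-equivalent} to each: there are generator matrices $G_1$ of $C_1$ and $H_1$ of $C_1^{\perp_{\rm s}}$ with
$$Q_1 \Omega_n Q_1^T = {\rm diag}(\underbrace{J_2,\ldots,J_2}_{n}), \quad\text{where }Q_1 = \begin{pmatrix} G_1 \\ H_1 \end{pmatrix},$$
and likewise generator matrices $G_2, H_2$ giving $Q_2 = \begin{pmatrix} G_2 \\ H_2 \end{pmatrix}$ with $Q_2 \Omega_n Q_2^T = {\rm diag}(\underbrace{J_2,\ldots,J_2}_{n})$. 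Since $C$ is LCD, $\F_q^{2n} = C \oplus C^{\perp_{\rm s}}$, so each $Q_i$ is nonsingular. Set $Q = Q_1^{-1} Q_2$. Then
$$Q \Omega_n Q^T = Q_1^{-1} Q_2 \Omega_n Q_2^T Q_1^{-T} = Q_1^{-1}\,{\rm diag}(\underbrace{J_2,\ldots,J_2}_{n})\,Q_1^{-T} = Q_1^{-1}(Q_1 \Omega_n Q_1^T)Q_1^{-T} = \Omega_n,$$
so $Q \in \mathbb{S}\mathrm{p}^1_{2n}(q)$. It remains to check $C_1 Q = C_2$; this follows because $G_1 Q = G_1 Q_1^{-1} Q_2 = (I_{2k}\ O)\, Q_2 = G_2$, using that the first $2k$ rows of $Q_1 Q_1^{-1} = I_{2n}$ pick out $G_1 Q_1^{-1} = (I_{2k}\ O)$. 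Hence $C_1 Q$ and $C_2$ have the same generator matrix, so they are equal.

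For the converse, let $C$ be symplectic LCD with generator matrix $G$ and let $Q \in \mathbb{S}\mathrm{p}^1_{2n}(q)$. Then $GQ$ is a generator matrix of $CQ$ (it is $2k \times 2n$ of full row rank since $Q$ is invertible), and
$$(GQ)\Omega_n (GQ)^T = G(Q\Omega_n Q^T)G^T = G\Omega_n G^T.$$
By Proposition \ref{prop-hull}(2), the symplectic hull dimension of $CQ$ equals $2k - {\rm rank}((GQ)\Omega_n(GQ)^T) = 2k - {\rm rank}(G\Omega_n G^T)$, which is the symplectic hull dimension of $C$, namely $0$. So $CQ$ is a symplectic LCD $[2n,2k]_q$ code.

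The only genuinely delicate point is the identity $G_1 Q_1^{-1} = (I_{2k}\ O)$ used in the forward direction, i.e.\ that extending a symplectic basis of $C$ by a symplectic basis of $C^{\perp_{\rm s}}$ produces an invertible matrix with the block structure making $Q_1^{-1}$ behave correctly; this is exactly where the LCD hypothesis $\F_q^{2n} = C \oplus C^{\perp_{\rm s}}$ and Theorem \ref{thm-equivalent} do the work, and no further computation beyond what is displayed above should be needed. Everything else is a routine transcription of the Hermitian argument, with $\overline{(\cdot)}^T$ replaced by $\Omega_n(\cdot)^T$ and $I_n$ replaced by ${\rm diag}(J_2,\ldots,J_2)$.
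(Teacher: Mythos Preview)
Your proof is correct and follows essentially the same approach as the paper: build $Q_1,Q_2$ from symplectic bases of $C_i$ and $C_i^{\perp_{\rm s}}$ (via Theorem~\ref{thm-equivalent}), set $Q=Q_1^{-1}Q_2$, verify $Q\in\mathbb{S}\mathrm{p}^1_{2n}(q)$, and check $G_1Q=G_2$. Your verification that $Q\Omega_nQ^T=\Omega_n$ is in fact slightly more direct than the paper's, which introduces an auxiliary matrix $A$ with $A\Omega_nA^T={\rm diag}(J_2,\ldots,J_2)$ to argue that $A^{-1}Q_1,A^{-1}Q_2\in\mathbb{S}\mathrm{p}^1_{2n}(q)$ and then uses closure of the group; your one-line computation $Q_1^{-1}{\rm diag}(J_2,\ldots,J_2)Q_1^{-T}=\Omega_n$ bypasses this.
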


\begin{proof}
By Theorem \ref{thm-k-even}, $C_1$ has a symplectic basis ${\bf c}_1,{\bf c}_1',\ldots,{\bf c}_{\frac{k}{2}},{\bf c}'_{\frac{k}{2}}$ of and $C_1^{\perp_s}$ has a symplectic basis ${\bf c}_{\frac{k}{2}+1},{\bf c}_{\frac{k}{2}+1}',\ldots,{\bf c}_{n},{\bf c}'_n$. Then, ${\bf c}_1,{\bf c}_1',\ldots,{\bf c}_{n},{\bf c}'_{n}$ is a symplectic basis of $C_1\oplus C_1^{\perp_s}=\F_q^{2n}$. Similarly, there exists another symplectic basis ${\bf w}_1,{\bf w}_1',\ldots,{\bf w}_{n},{\bf w}'_{n}$ of $C_2\oplus C_2^{\perp_s}=\F_q^{2n}$ such that ${\bf w}_1,{\bf w}_1',\ldots,{\bf w}_{k},{\bf w}'_{k}$ is a symplectic basis of $C_2$ and ${\bf w}_{k+1},{\bf w}_{k+1}',\ldots,{\bf w}_{n},{\bf w}'_{n}$ is a symplectic basis of $C_2^{\perp_s}$. Let $Q_1$ and $Q_2$ be the matrices defined by
$$
Q_1=\begin{pmatrix}
{\bf c}_1    \\
{\bf c}'_{1}    \\
\vdots     \\
{\bf c}_n    \\
{\bf c}'_n    \\
\end{pmatrix}~{\rm and}~Q_2=
\begin{pmatrix}
{\bf w}_1    \\
{\bf w}'_{1}    \\
\vdots     \\
{\bf w}_n    \\
{\bf w}'_n    \\
\end{pmatrix}.
$$
Then we have
$$Q_1\Omega_nQ_1^T=Q_2\Omega_nQ_2^T={\rm diag}(\underbrace{J_{2},J_{2},\ldots,J_{2}}_{n}).$$
By Proposition \ref{prop-rank}, there exists a nonsingular matrix $A$ such that
$$A\Omega_nA^T={\rm diag}(\underbrace{J_{2},J_{2},\ldots,J_{2}}_{n}).$$
Hence we have
$$A^{-1}Q_1\Omega_nQ_1^TA^{-T}=A^{-1}Q_2\Omega_nQ_2^TA^{-T}=A^{-1}{\rm diag}(\underbrace{J_{2},J_{2},\ldots,J_{2}}_{n})A^{-T}=\Omega_n.$$
Hence $A^{-1}Q_1,A^{-1}Q_2\in \mathbb{S}\mathrm{p}^1_{2n}(q)$. Let $Q=Q_1^{-1}Q_2$. Then
$Q=Q_1^{-1}Q_2=Q_1^{-1}AA^{-1}Q_2=(A^{-1}Q_1)^{-1}(A^{-1}Q_2)\in \mathbb{S}\mathrm{p}^1_{2n}(q)$ is a symplectic matrix with respect to $\Omega_n$. Further, it is easy to see that ${\bf c}_iQ={\bf w}_i$ and ${\bf c}'_iQ={\bf w}'_i$ for $1\leq i\leq k$. Thus, $C_1Q=C_2$.

Conversely, let $C$ be a symplectic LCD $[2n,2k]_q$ code and let $Q\in \mathbb{S}\mathrm{p}^1_{2n}(q)$ be a symplectic matrix with respect to $\Omega_n$. Let $G$ be a generator matrix of $C$. Then $GQ$ is a generator matrix of $CQ$. Note that
$(GQ)\Omega_n(GQ)^T=G(Q\Omega_nQ^T)G^T=G\Omega_nG^T.$
This implies that $CQ$ is also a symplectic LCD $[2n,2k]_q$ code.
\end{proof}

Theorem \ref{thm-C1C2} shows that the symplectic group $\mathbb{S}\mathrm{p}^1_{2n}(q)$ acts on SLCD$[2n,2k]$ by $(C,Q)\longmapsto CQ$, where $C\in {\rm SLCD}[2n,2k]_q$ and $Q\in \mathbb{S}\mathrm{p}^1_{2n}(q)$. Under this action, all symplectic LCD codes are on a unique orbit. Note that the code $C_0$ generated by the matrix $G_0$ defined in Equation (\ref{G0}) is a symplectic LCD $[2n,2k]_{q}$ code. Hence we have
\begin{align}\label{Eq-sLCD}
  {\rm SLCD}[2n,2k]_q&=C_0\mathbb{S}\mathrm{p}^1_{2n}(q).
\end{align}
Hence, to determine the size of ${\rm SLCD}[2n,2k]_q$, it is sufficient to determine the size of the stabilizer of $C_0$ in the symplectic group $\mathbb{S}\mathrm{p}^1_{2n}(q)$. The stabilizer of a linear $[2n,k]_q$ code $C$ in the symplectic group $\mathbb{S}\mathrm{p}^1_{2n}(q)$ is defined by
$$\mathbb{S}{\rm t}_s(C)=\{Q\in \mathbb{S}\mathrm{p}^1_{2n}(q)~|~CQ=C\}.$$

\begin{prop}\label{lem-St}
Let $C$ be a symplectic LCD $[2n,2k]_q$ code. Let $G$ be a generator matrix of $C$ and let $H$ be a generator matrix of $C^{\perp_s}$. Then $Q\in \mathbb{S}{\rm t}_s(C)$ if and only if
$$Q=\begin{pmatrix}
 G \\
 H
 \end{pmatrix}^{-1}
\begin{pmatrix}
 Q_1&O \\
 O& Q_2
 \end{pmatrix}
\begin{pmatrix}
 G \\
 H
 \end{pmatrix},$$
 where $Q_1\in \mathbb{G}\mathbb{L}_{2k}(q)$, $Q_2\in \mathbb{G}\mathbb{L}_{2n-2k}(q)$ such that $Q_1(G\Omega_nG^T)Q_1^T=G\Omega_nG^T$ and $Q_2(H\Omega_nH^T)Q_2^T=H\Omega_nH^T$.
\end{prop}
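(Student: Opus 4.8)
The plan is to follow the proof of Proposition~\ref{prop-St} almost verbatim, replacing the Hermitian form $G\overline{G}^T$ by the symplectic form $G\Omega_nG^T$ throughout. Two structural facts will be used at the outset. Since $C$ is symplectic LCD, $C\cap C^{\perp_{\rm s}}=\{{\bf 0}\}$, and comparing dimensions gives $C\oplus C^{\perp_{\rm s}}=\F_q^{2n}$; hence the stacked matrix $\begin{pmatrix} G \\ H \end{pmatrix}$ is nonsingular and $\begin{pmatrix} G \\ H \end{pmatrix}^{-1}$ is well defined. Moreover, the very definition of $C^{\perp_{\rm s}}$ says $\langle{\bf c},{\bf w}\rangle_{\rm s}=0$ for all ${\bf c}\in C$, ${\bf w}\in C^{\perp_{\rm s}}$, which at the level of generator matrices reads $G\Omega_nH^T=O$ (and, transposing, $H\Omega_nG^T=O$); this is what kills the off-diagonal blocks below.

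For the ``if'' direction, suppose $Q=\begin{pmatrix} G \\ H \end{pmatrix}^{-1}\begin{pmatrix} Q_1 & O \\ O & Q_2 \end{pmatrix}\begin{pmatrix} G \\ H \end{pmatrix}$ with $Q_1,Q_2$ as stated. Then $\begin{pmatrix} G \\ H \end{pmatrix}Q=\begin{pmatrix} Q_1 G \\ Q_2 H \end{pmatrix}$, so $GQ=Q_1G$ and $HQ=Q_2H$; since $Q_1$ is invertible the rows of $Q_1G$ span $C$, hence $CQ=C$. To see $Q\in\mathbb{S}\mathrm{p}^1_{2n}(q)$, I compute $\begin{pmatrix} G \\ H \end{pmatrix}(Q\Omega_nQ^T)\begin{pmatrix} G \\ H \end{pmatrix}^T=\begin{pmatrix} Q_1 G \\ Q_2 H \end{pmatrix}\Omega_n\begin{pmatrix} Q_1 G \\ Q_2 H \end{pmatrix}^T$; expanding the block product, the cross terms equal $Q_1(G\Omega_nH^T)Q_2^T=O$, so this is ${\rm diag}\left(Q_1(G\Omega_nG^T)Q_1^T,\,Q_2(H\Omega_nH^T)Q_2^T\right)={\rm diag}(G\Omega_nG^T,\,H\Omega_nH^T)=\begin{pmatrix} G \\ H \end{pmatrix}\Omega_n\begin{pmatrix} G \\ H \end{pmatrix}^T$, the last step again using $G\Omega_nH^T=O$. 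Cancelling the invertible matrix $\begin{pmatrix} G \\ H \end{pmatrix}$ on the left and its transpose on the right yields $Q\Omega_nQ^T=\Omega_n$, so $Q\in\mathbb{S}{\rm t}_s(C)$.

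For the ``only if'' direction, let $Q\in\mathbb{S}{\rm t}_s(C)$. From $CQ=C$ there is $Q_1\in\mathbb{G}\mathbb{L}_{2k}(q)$ with $GQ=Q_1G$. Next I would check $C^{\perp_{\rm s}}Q=C^{\perp_{\rm s}}$: for ${\bf w}\in C^{\perp_{\rm s}}$ and any ${\bf c}\in C$, write ${\bf c}={\bf c}'Q$ with ${\bf c}'\in C$ (possible since $CQ=C$); then $\langle{\bf c},{\bf w}Q\rangle_{\rm s}=\langle{\bf c}'Q,{\bf w}Q\rangle_{\rm s}=\langle{\bf c}',{\bf w}\rangle_{\rm s}=0$, so ${\bf w}Q\in C^{\perp_{\rm s}}$, and equality holds by dimension. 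Hence there is $Q_2\in\mathbb{G}\mathbb{L}_{2n-2k}(q)$ with $HQ=Q_2H$, so $\begin{pmatrix} G \\ H \end{pmatrix}Q=\begin{pmatrix} Q_1 & O \\ O & Q_2 \end{pmatrix}\begin{pmatrix} G \\ H \end{pmatrix}$, which is exactly the claimed formula for $Q$. Finally, since $Q\Omega_nQ^T=\Omega_n$, we get $G\Omega_nG^T=G(Q\Omega_nQ^T)G^T=(GQ)\Omega_n(GQ)^T=Q_1(G\Omega_nG^T)Q_1^T$ and likewise $H\Omega_nH^T=Q_2(H\Omega_nH^T)Q_2^T$, giving the two conditions on $Q_1,Q_2$.

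I do not expect a genuine obstacle here: the argument is a routine transcription of the proof of Proposition~\ref{prop-St}. The only points that need a word of care are the nonsingularity of $\begin{pmatrix} G \\ H \end{pmatrix}$ (which uses that $C$ is symplectic LCD, via the direct-sum decomposition $C\oplus C^{\perp_{\rm s}}=\F_q^{2n}$), the vanishing of the off-diagonal blocks through $G\Omega_nH^T=O$, and the stability $C^{\perp_{\rm s}}Q=C^{\perp_{\rm s}}$ under the symplectic action — each handled in one line above.
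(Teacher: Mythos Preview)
Your proof is correct and follows essentially the same route as the paper's own proof: both directions are handled identically, with the block computation of $\begin{pmatrix}G\\H\end{pmatrix}Q\Omega_nQ^T\begin{pmatrix}G\\H\end{pmatrix}^T$ for the ``if'' part and the stability $C^{\perp_{\rm s}}Q=C^{\perp_{\rm s}}$ (via preservation of the symplectic form) for the ``only if'' part. Your added remarks on the invertibility of $\begin{pmatrix}G\\H\end{pmatrix}$ and the vanishing of $G\Omega_nH^T$ make explicit what the paper leaves implicit, but the argument is otherwise the same.
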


\begin{proof}
Let $Q=\begin{pmatrix}
 G \\
 H
 \end{pmatrix}^{-1}
\begin{pmatrix}
 Q_1&O \\
 O& Q_2
 \end{pmatrix}
\begin{pmatrix}
 G \\
 H
 \end{pmatrix},$
 where $Q_1\in \mathbb{G}\mathbb{L}_{2k}(q)$, $Q_2\in \mathbb{G}\mathbb{L}_{2n-2k}(q)$ such that $Q_1(G\Omega_nG^T)Q_1^T=G\Omega_nG^T$ and $Q_2(H\Omega_nH^T)Q_2^T=H\Omega_nH^T$. Then
 $$\begin{pmatrix}
 G \\
 H
 \end{pmatrix}Q=
\begin{pmatrix}
 Q_1&O \\
 O& Q_2
 \end{pmatrix}
 \begin{pmatrix}
 G \\
 H
 \end{pmatrix}=\begin{pmatrix}
 Q_1G \\
 Q_2H
 \end{pmatrix}.$$
 Hence $GQ=Q_1G$ and $HQ=Q_2H$, which implies that $CQ=C$. Note that
\begin{align*}
\begin{pmatrix}
 G \\
 H
 \end{pmatrix}Q\Omega_nQ^T\begin{pmatrix}
 G \\
 H
 \end{pmatrix}^T & =\left(\begin{pmatrix}
 G \\
 H
 \end{pmatrix}Q\right)\Omega_n\left(\begin{pmatrix}
 G \\
 H
 \end{pmatrix}Q\right)^T \\
   & =\begin{pmatrix}
 Q_1G \\
 Q_2H
 \end{pmatrix}\Omega_n\begin{pmatrix}
 Q_1G \\
 Q_2H
 \end{pmatrix}^T\\
 &=\begin{pmatrix}
 Q_1(G\Omega_nG^T)Q_1^T &O\\
 O& Q_2(H\Omega_nH^T)Q_2^T
 \end{pmatrix}\\
 &=\begin{pmatrix}
G\Omega_nG^T &O\\
 O& H\Omega_nH^T
 \end{pmatrix}\\
 &=\begin{pmatrix}
 G \\
 H
 \end{pmatrix}\Omega_n\begin{pmatrix}
 G \\
 H
 \end{pmatrix}^T.
\end{align*}
It follows that
$$Q\Omega_nQ^T=\begin{pmatrix}
 G \\
 H
 \end{pmatrix}^{-1}\left(\begin{pmatrix}
 G \\
 H
 \end{pmatrix}\Omega_n\begin{pmatrix}
 G \\
 H
 \end{pmatrix}^T\right)\begin{pmatrix}
 G \\
 H
 \end{pmatrix}^{-T}=\Omega_n,$$
which implies that $Q\in \mathbb{S}\mathrm{p}^1_{2n}(q)$. Thus, $Q\in \mathbb{S}{\rm t}_s(C)$.

Conversely, let $Q\in \mathbb{S}{\rm t}_s(C)$, i.e., $Q\in \mathbb{S}\mathrm{p}^1_{2n}(q)$ and $CQ=C$. Then there exists a matrix $Q_1\in \mathbb{G}\mathbb{L}_{2n}(q)$ such that
$GQ=Q_1G$. For any ${\bf c}\in C$ and any ${\bf w}\in C^{\perp_s}$, we have
$$\langle{\bf c}Q,{\bf w}Q\rangle_{\rm s}={\bf c}Q\Omega_n({\bf w}Q)^T={\bf c}(Q\Omega_nQ^T){\bf w}^T={\bf c}\Omega_n{\bf w}^T=\langle{\bf c},{\bf w}\rangle_{\rm s}.$$
Then $C^{\perp_{\rm s}}Q=C^{\perp_{\rm s}}$. Hence there exists a matrix $Q_2\in \mathbb{G}\mathbb{L}_{2n}(q)$ such that $HQ=Q_2H$. It follows that
$$\begin{pmatrix}
 G \\
 H
 \end{pmatrix}Q=
\begin{pmatrix}
 Q_1&O \\
 O& Q_2
 \end{pmatrix}
\begin{pmatrix}
 G \\
 H
 \end{pmatrix},~{\rm i.e.},~Q=\begin{pmatrix}
 G \\
 H
 \end{pmatrix}^{-1}
\begin{pmatrix}
 Q_1&O \\
 O& Q_2
 \end{pmatrix}
\begin{pmatrix}
 G \\
 H
 \end{pmatrix}.$$
Since $Q\in \mathbb{S}{\rm p}^1_{2n}(q)$, $Q\Omega_nQ=\Omega_n$ and we have
\begin{align*}
  G\Omega_nG^T&=G(Q\Omega_nQ^T)G^T=(GQ)\Omega_n(GQ)^T=
(Q_1G)\Omega_n(Q_1G)^T=Q_1(G\Omega_nG^T)Q_1^T,\\
  H\Omega_nH^T&=H(Q\Omega_nQ^T)H^T=(HQ)\Omega_n(HQ)^T=
(Q_2H)\Omega_n(Q_2H)^T=Q_2(H\Omega_nH^T)Q_2^T.
\end{align*}
This completes the proof.
\end{proof}

\begin{theorem}\label{thm-mass}
Assume that $n$ and $k$ are two positive integers with $0<k<n$. Then the number of distinct symplectic LCD $[2n,2k]$ codes is
$$|{\rm SLCD}[2n,2k]|=q^{2k(n-k)}\begin{bmatrix}
                     n \\
                     k
                   \end{bmatrix}_{q^2}.$$
\end{theorem}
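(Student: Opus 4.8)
The plan is to rerun the orbit--stabilizer argument that proved Theorem~\ref{thm-mass-HHull} for Hermitian LCD codes, now for the action of $\mathbb{S}\mathrm{p}^1_{2n}(q)$ on ${\rm SLCD}[2n,2k]$ established in Theorem~\ref{thm-C1C2}. By Equation~(\ref{Eq-sLCD}) every symplectic LCD $[2n,2k]$ code lies on the single orbit of the reference code $C_0$, so $|{\rm SLCD}[2n,2k]|=|\mathbb{S}\mathrm{p}^1_{2n}(q)|/|\mathbb{S}{\rm t}_s(C_0)|$, and the whole problem reduces to computing the stabilizer of $C_0$. I would first observe that the rows of the stacked matrix $\begin{pmatrix} G_0 \\ H_0 \end{pmatrix}$, with $G_0$ and $H_0$ as in~(\ref{G0}) and~(\ref{H0}), are exactly the $2n$ standard basis vectors of $\F_q^{2n}$ in some order; hence this matrix is a permutation matrix, in particular invertible with $\begin{pmatrix} G_0 \\ H_0 \end{pmatrix}^{-1}=\begin{pmatrix} G_0 \\ H_0 \end{pmatrix}^T$, so Proposition~\ref{lem-St} applies to $C_0$.

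Next I would evaluate $G_0\Omega_nG_0^T$ and $H_0\Omega_nH_0^T$ directly from $\langle{\bf x},{\bf y}\rangle_{\rm s}={\bf x}\Omega_n{\bf y}^T$: the $(2i-1)$st and $2i$th rows of $G_0$ are $({\bf e}_i,{\bf 0}_n)$ and $({\bf 0}_n,{\bf e}_i)$, whose symplectic inner products are $1$ and $-1$, while all other pairs of rows are symplectically orthogonal, so $G_0\Omega_nG_0^T={\rm diag}(\underbrace{J_{2},\ldots,J_{2}}_{k})$, and the identical computation gives $H_0\Omega_nH_0^T={\rm diag}(\underbrace{J_{2},\ldots,J_{2}}_{n-k})$. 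Feeding this into Proposition~\ref{lem-St}, and using that $\begin{pmatrix} G_0 \\ H_0 \end{pmatrix}$ is a permutation matrix (so conjugation by it is a bijection on matrices), I get that $\mathbb{S}{\rm t}_s(C_0)$ is in bijection with the set of block-diagonal matrices $\begin{pmatrix} Q_1 & O \\ O & Q_2 \end{pmatrix}$ such that $Q_1\in\mathbb{G}\mathbb{L}_{2k}(q)$ preserves ${\rm diag}(\underbrace{J_{2},\ldots,J_{2}}_{k})$ and $Q_2\in\mathbb{G}\mathbb{L}_{2n-2k}(q)$ preserves ${\rm diag}(\underbrace{J_{2},\ldots,J_{2}}_{n-k})$ --- that is, $Q_1\in\mathbb{S}\mathrm{p}^2_{2k}(q)$ and $Q_2\in\mathbb{S}\mathrm{p}^2_{2(n-k)}(q)$. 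Hence, invoking the one-to-one correspondence between $\mathbb{S}\mathrm{p}^1$ and $\mathbb{S}\mathrm{p}^2$ from Section~\ref{sec-2}, $|\mathbb{S}{\rm t}_s(C_0)|=|\mathbb{S}\mathrm{p}^1_{2k}(q)|\cdot|\mathbb{S}\mathrm{p}^1_{2(n-k)}(q)|$.

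Finally I would substitute the size formula~(\ref{eq-sp}) to obtain
$$|{\rm SLCD}[2n,2k]|=\frac{|\mathbb{S}\mathrm{p}^1_{2n}(q)|}{|\mathbb{S}\mathrm{p}^1_{2k}(q)|\cdot|\mathbb{S}\mathrm{p}^1_{2(n-k)}(q)|}=\frac{q^{n^2}\prod_{i=1}^n(q^{2i}-1)}{q^{k^2}\prod_{i=1}^k(q^{2i}-1)\cdot q^{(n-k)^2}\prod_{i=1}^{n-k}(q^{2i}-1)},$$
and then simplify: the exponent of $q$ is $n^2-k^2-(n-k)^2=2k(n-k)$, while $\frac{\prod_{i=1}^n(q^{2i}-1)}{\prod_{i=1}^k(q^{2i}-1)\prod_{i=1}^{n-k}(q^{2i}-1)}=\begin{bmatrix} n \\ k \end{bmatrix}_{q^2}$ by the standard product expression for the Gaussian binomial coefficient with base $q^2$. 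This gives the stated value $q^{2k(n-k)}\begin{bmatrix} n \\ k \end{bmatrix}_{q^2}$.

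I do not expect a genuine obstacle: the argument is structurally identical to the Hermitian case, and the only substantive inputs are the two explicit evaluations $G_0\Omega_nG_0^T={\rm diag}(J_2,\ldots,J_2)$ and $H_0\Omega_nH_0^T={\rm diag}(J_2,\ldots,J_2)$, together with the identification of the diagonal blocks of the stabilizer as the smaller symplectic groups $\mathbb{S}\mathrm{p}^2_{2k}(q)$ and $\mathbb{S}\mathrm{p}^2_{2(n-k)}(q)$; both follow immediately from the definitions. The only points needing mild care are bookkeeping in the final simplification and remembering to pass through the $\mathbb{S}\mathrm{p}^1\leftrightarrow\mathbb{S}\mathrm{p}^2$ correspondence so that~(\ref{eq-sp}) can be applied to the blocks.
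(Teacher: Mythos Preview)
Your proposal is correct and follows essentially the same route as the paper's own proof: both compute $G_0\Omega_nG_0^T$ and $H_0\Omega_nH_0^T$ as block-diagonal $J_2$ matrices, invoke Proposition~\ref{lem-St} to identify $\mathbb{S}{\rm t}_s(C_0)$ with $\mathbb{S}\mathrm{p}^2_{2k}(q)\times\mathbb{S}\mathrm{p}^2_{2(n-k)}(q)$, and then apply the orbit--stabilizer formula together with~(\ref{eq-sp}). Your write-up simply makes the final simplification to $q^{2k(n-k)}\begin{bmatrix} n \\ k \end{bmatrix}_{q^2}$ more explicit than the paper does.
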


\begin{proof}
Let $C_0$ be the symplectic LCD $[2n,2k]_{q}$ code with the generator matrix $G_0$ defined in Equation (\ref{G0}). Then the matrix $H_0$ defined in Equation (\ref{H0}) is a generator matrix of $C_0^{\perp_{\rm s}}$. Note that
$$G_0\Omega_nG_0^T={\rm diag}(\underbrace{J_{2},J_{2},\ldots,J_{2}}_{k})~
{\rm and}~
H_0\Omega_nH_0^T={\rm diag}(\underbrace{J_{2},J_{2},\ldots,J_{2}}_{n-k}).$$
Together with Proposition \ref{lem-St}, we have
$$\mathbb{S}{\rm t}_s(C_0)=\left\{\begin{pmatrix}
 G_0 \\
 H_0
 \end{pmatrix}^{-1}
\begin{pmatrix}
 Q_1&O \\
 O& Q_2
 \end{pmatrix}
 \begin{pmatrix}
 G_0 \\
 H_0
 \end{pmatrix}~\bigg|~Q_1\in \mathbb{S}{\rm p}^2_{2k}(q), Q_2\in \mathbb{S}{\rm p}^2_{2n-2k}(q)\right\}.$$
From Equation (\ref{Eq-sLCD}), we have
$$|{\rm SLCD}[2n,2k]_q|=\frac{|\mathbb{S}\mathrm{p}^1_{2n}(q)|}{|\mathbb{S}{\rm t}_s(C_0)|}=\frac{|\mathbb{S}\mathrm{p}^1_{2n}(q)|}
{|\mathbb{S}\mathrm{p}^2_{2k}(q)|\cdot |\mathbb{S}\mathrm{p}^2_{2n-2k}(q)|}.$$
We obtain the desired result by Equation $(\ref{eq-sp})$.
\end{proof}

\subsection{A mass formula for linear codes with prescribed symplectic hull dimension}

Jin and Xing \cite{ISIT-Jin} presented a mass formula for symplectic self-orthogonal codes in order to obtain the Gilbert-Varshamov bound for QECCs. In this subsection, we obtain a mass formula for linear codes with prescribed symplectic hull dimension. Let ${\rm SHull}[2n,k]_q^{\ell}$ denote the set of all linear $[2n,k]_q$ codes with $\ell$-dimensional symplectic hull. By Proposition \ref{cor-k-l-even}, $k-\ell$ is even.

\begin{theorem}\label{thm-sym-mass-hull}
Assume that $n$, $k$ and $\ell$ are three nonnegative integers with $\ell\leq k\leq 2n$ and $k=2k_0+\ell$. Then
$$|{\rm SHull}[2n,k]_q^{\ell}|=\left(\prod_{i=1}^{\ell} \frac{q^{2n-k-i+2}-1}{q^{k}-q^{2k_0}}\right) q^{2k_0(n-k_0)}\begin{bmatrix}
                     n \\
                     k_0
                   \end{bmatrix}_{q^2}.$$
\end{theorem}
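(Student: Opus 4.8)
The strategy mirrors the inductive argument used in the proof of Theorem~\ref{Thm-mass-Hhull} for the Hermitian case, and it rests on the same ``add one self-orthogonal codeword at a time'' recursion, but now with respect to the symplectic inner product. First I would set up a symplectic analogue of Proposition~\ref{Prop-HHull}: for a linear $[2n,k]_q$ code $C$ with $\ell$-dimensional symplectic hull, I need to count the symplectic self-orthogonal codewords in $C\backslash {\rm Hull_s}(C)$. Using the direct-sum decomposition $C={\rm Hull_s}(C)\oplus C_2$ (obtained exactly as in the proof of Proposition~\ref{cor-k-l-even}, with $C_2$ a symplectic LCD $[2n,k-\ell]_q=[2n,2k_0]_q$ code), any ${\bf c}={\bf c}_1+{\bf c}_2$ satisfies $\langle {\bf c},{\bf c}\rangle_{\rm s}=\langle {\bf c}_2,{\bf c}_2\rangle_{\rm s}=0$ automatically, since $\langle {\bf x},{\bf x}\rangle_{\rm s}=0$ for \emph{every} ${\bf x}\in \F_q^{2n}$. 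So in fact \emph{every} codeword of $C$ is symplectic self-orthogonal; the count of self-orthogonal codewords in $C\backslash {\rm Hull_s}(C)$ is simply $q^k-q^\ell=(q^{2k_0}-1)q^\ell$. This is the key structural simplification over the Hermitian case: there is no diagonal-equation count, only a trivial one.

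Next I would run the extension/counting recursion. A code $C_{2n,k-1}^{\ell-1}$ (meaning a $[2n,k-1]_q$ code with $(\ell-1)$-dimensional symplectic hull) is extended to a $C_{2n,k}^{\ell}$ by adjoining a symplectic self-orthogonal codeword of $(C_{2n,k-1}^{\ell-1})^{\perp_{\rm s}}\backslash {\rm Hull_s}(C_{2n,k-1}^{\ell-1})$; by the paragraph above this is a choice among $q^{2n-k+1}-q^{\ell-1}$ vectors (the ambient dual has dimension $2n-(k-1)$, the hull has dimension $\ell-1$). Conversely, each target $C_{2n,k}^{\ell}$ arises in $(q^\ell-1)q^{k-1}$ ways, by the same argument as in \cite[Proposition 5.5]{2023IT-hull} adapted to the symplectic setting (pick a nonzero hull vector and complete it inside $C$). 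This gives the recursion
\[
|{\rm SHull}[2n,k]_q^{\ell}|=\frac{\left(q^{2n-k+1}-q^{\ell-1}\right)}{(q^\ell-1)q^{k-1}}\,|{\rm SHull}[2n,k-1]_q^{\ell-1}|=\frac{q^{2n-k+2}-q^{\ell}}{q^{k+\ell}-q^{k}}\,|{\rm SHull}[2n,k-1]_q^{\ell-1}|,
\]
and after $\ell$ iterations (noting $k-\ell=2k_0$ stays fixed, so the length/dimension parities are preserved and the intermediate codes genuinely exist), I reach $|{\rm SHull}[2n,2k_0]_q^{0}|=|{\rm SLCD}[2n,2k_0]_q|$, which is given by Theorem~\ref{thm-mass}.

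Finally I would assemble the telescoping product. Re-indexing the $\ell$ factors so that the $i$th step ($i=1,\dots,\ell$) contributes $\frac{q^{2n-k-i+2}-1}{q^{k-i+\ell-i+2}-q^{k-i+1}}$ — here I must be careful to track how the exponents $k$ and $\ell$ each drop by one at every stage — and simplifying the denominators to the uniform form $q^{k}-q^{2k_0}$ claimed in the statement, I get exactly $\prod_{i=1}^{\ell}\frac{q^{2n-k-i+2}-1}{q^{k}-q^{2k_0}}$ times $|{\rm SLCD}[2n,2k_0]_q|=q^{2k_0(n-k_0)}\begin{bmatrix} n \\ k_0 \end{bmatrix}_{q^2}$. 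The main obstacle is purely bookkeeping: getting the index shifts in the iterated recursion right so that the denominators all collapse to the single expression $q^k-q^{2k_0}$ rather than a product of distinct powers, and double-checking that the extension count $q^{2n-k+1}-q^{\ell-1}$ and the multiplicity $(q^\ell-1)q^{k-1}$ are correct at each intermediate dimension. Once those combinatorial constants are verified against the $[2n,k-1]_q^{\ell-1}$ base, the formula follows by a routine induction on $\ell$.
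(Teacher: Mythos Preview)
Your approach is exactly the paper's: the same extension recursion from $(k-1,\ell-1)$ to $(k,\ell)$, the same count $q^{2n-(k-1)}-q^{\ell-1}$ of admissible vectors in the dual minus the hull (since every vector is symplectically self-orthogonal), the same multiplicity $(q^\ell-1)q^{k-1}$, and the same base case $|{\rm SLCD}[2n,2k_0]_q|$ from Theorem~\ref{thm-mass}.

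One warning about the step you flagged as the ``main obstacle'': your instinct is right and the denominators do \emph{not} collapse to the uniform $q^{k}-q^{2k_0}$ printed in the statement. The paper's own proof in fact derives
\[
|{\rm SHull}[2n,k]_q^{\ell}|=\left(\prod_{i=1}^{\ell} \frac{q^{2n-2k_0-2i+2}-1}{q^{2k_0+i}-q^{2k_0}}\right) q^{2k_0(n-k_0)}\begin{bmatrix} n \\ k_0 \end{bmatrix}_{q^2},
\]
with the denominator varying in $i$, and this is the version that specializes correctly to Corollary~\ref{Cor-SSO} and to the $40$ ternary $[4,2]$ symplectic self-orthogonal codes mentioned in the paper's remark. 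The displayed formula in the theorem statement appears to be a typo (for $\ell\ge 2$ it is not even an integer in small cases). So do not try to force the product into the stated form; your recursion already gives the correct answer.
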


\begin{proof}
We use $C^{\ell}_{2n,k}$ to denote a linear $[2n,k]_q$ code with $\ell$-dimensional symplectic hull. Then for $k<n$, a given
$C^{\ell-1}_{2n,k-1}$ can be extended to a $C^{\ell}_{2n,k}$ by adding a vector ${\bf c}\in (C^{\ell-1}_{2n,k-1})^{\perp_{\rm s}}\backslash {\rm Hull_s}(C^{\ell-1}_{2n,k-1})$.
Note that every vector in $\F_q^{2n}$ is orthogonal to itself with the symplectic inner product.
Thus, we can obtain $q^{2n-(k-1)}-q^{\ell-1}$ linear $[2n,k]_q$ codes with $\ell$-dimensional symplectic hull from $C^{\ell-1}_{2n,k-1}$ through this way. Note that some of these codes may be the same.
On the other hand, for any $C^{\ell}_{2n,k}$, there are $(q^{\ell}-1)q^{k-1}$ ways to obtain $C^{\ell}_{2n,k}$ through the above way by a similar discussion as \cite[Proposition 5.5]{2023IT-hull}. Thus, we obtain
\begin{align*}
 |{\rm SHull}[2n,k]_q^{\ell}| & =\frac{q^{2n-(k-1)}-q^{\ell-1}}{(q^{\ell}-1)q^{k-1}}\cdot |{\rm SHull}[2n,k-1]_q^{\ell-1}| \\
   & =\frac{q^{2n-2k_0-2\ell+2}-1}{q^{2k_0+\ell}-q^{2k_0}}\cdot |{\rm SHull}[2n,k-1]_q^{\ell-1}|.
\end{align*}
Repeating the above operation $\ell$ times, we obtain
$$|{\rm SHull}[2n,k]_q^{\ell}|=\left(\prod_{i=1}^{\ell} \frac{q^{2n-2k_0-2i+2}-1}{q^{2k_0+i}-q^{2k_0}}\right) \cdot |{\rm SHull}[2n,2k_0]^0_q|.$$
Note that ${\rm SHull}[2n,k_0]^0_{q}={\rm SLCD}[2n,k_0]_{q}$. Hence the result holds by Theorem \ref{thm-mass}.
\end{proof}

In order to obtain the Gilbert-Varshamov bound for QECCs, Jin and Xing \cite{ISIT-Jin} gave a formula for the number of distinct symplectic self-orthogonal codes. Unfortunately,  there is a small typo in the formula. Later, Jin {\em et al.} \cite{IT-Jin} gave a corrected mass formula for symplectic self-orthogonal codes. Next, we give the mass formula for symplectic self-orthogonal codes, which can serve as a special case of Theorem \ref{thm-sym-mass-hull}.

\begin{cor}{\rm\cite[Lemma 3.5]{IT-Jin}}\label{Cor-SSO}
Assume that $n$ and $k$ are two integers with $1\leq k\leq n$. Then the number of distinct symplectic self-orthogonal $[2n,k]_q$ codes is
$$|{\rm SHull}[2n,k]_q^k|=\frac{(q^{2n-2k+2}-1)(q^{2n-2k+4}-1)\cdots(q^{2n}-1)}
{(q^k-1)(q^{k-1}-1)\cdots(q-1)}.$$
\end{cor}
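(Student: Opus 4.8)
The plan is to derive Corollary \ref{Cor-SSO} as the special case $\ell = k$ of Theorem \ref{thm-sym-mass-hull}. First I would observe that a linear $[2n,k]_q$ code $C$ is symplectic self-orthogonal precisely when $C \subseteq C^{\perp_{\rm s}}$, which is equivalent to ${\rm Hull_s}(C) = C$, i.e., $C$ has $\ell$-dimensional symplectic hull with $\ell = k$. Hence ${\rm SHull}[2n,k]_q^k$ is exactly the set of symplectic self-orthogonal $[2n,k]_q$ codes, and by Proposition \ref{cor-k-l-even} such codes exist only when $k - \ell = 0$ is even, which it trivially is. Setting $\ell = k$ forces $k_0 = (k - \ell)/2 = 0$ in the notation of Theorem \ref{thm-sym-mass-hull}.

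Next I would substitute $k_0 = 0$ and $\ell = k$ into the formula of Theorem \ref{thm-sym-mass-hull}. The prefactor $q^{2k_0(n-k_0)}$ becomes $q^0 = 1$, and the Gaussian binomial coefficient $\begin{bmatrix} n \\ k_0 \end{bmatrix}_{q^2}$ becomes $\begin{bmatrix} n \\ 0 \end{bmatrix}_{q^2} = 1$ by the convention fixed in Section \ref{sec-2}. Therefore
$$|{\rm SHull}[2n,k]_q^k| = \prod_{i=1}^{k} \frac{q^{2n-k-i+2}-1}{q^{k}-q^{0}} = \prod_{i=1}^{k} \frac{q^{2n-k-i+2}-1}{q^{k}-1}.$$
It then remains to check that this product equals the claimed closed form. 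The denominator contributes $(q^k-1)^k$; I would instead reindex the remaining ``missing'' factors. A cleaner route is to go back one step in the proof of Theorem \ref{thm-sym-mass-hull}: with $k_0 = 0$ the recursion reads $|{\rm SHull}[2n,j]_q^{j}| = \frac{q^{2n-2j+2}-1}{q^{j}-1}\,|{\rm SHull}[2n,j-1]_q^{j-1}|$ for $j = 1, \ldots, k$, starting from $|{\rm SHull}[2n,0]_q^0| = |{\rm SLCD}[2n,0]_q| = 1$. Unrolling gives
$$|{\rm SHull}[2n,k]_q^k| = \prod_{j=1}^{k} \frac{q^{2n-2j+2}-1}{q^{j}-1} = \frac{(q^{2n}-1)(q^{2n-2}-1)\cdots(q^{2n-2k+2}-1)}{(q-1)(q^2-1)\cdots(q^k-1)},$$
which is exactly the asserted expression after reversing the order of the numerator factors.

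The only subtlety — and the single point I would be careful about — is the consistency of the two indexings of the recursion inside Theorem \ref{thm-sym-mass-hull}: there the step from $C^{\ell-1}_{2n,k-1}$ to $C^{\ell}_{2n,k}$ raises both $k$ and $\ell$ by one, so after $\ell$ steps one lands at $|{\rm SHull}[2n,2k_0]_q^0|$, and when $k_0 = 0$ the base case is $|{\rm SHull}[2n,0]_q^0|$, which must be interpreted as $1$ (the zero code is vacuously symplectic LCD, consistent with Theorem \ref{thm-mass} read at $k = 0$). Granting this, the corollary is immediate. I do not anticipate any genuine obstacle here; the entire content is a substitution $\ell = k$, $k_0 = 0$ into Theorem \ref{thm-sym-mass-hull} together with a routine reindexing of the product, and the agreement with the independently established formula of Jin \emph{et al.} \cite{IT-Jin} serves as a consistency check.
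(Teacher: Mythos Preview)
Your approach is correct and matches the paper's: both specialize Theorem~\ref{thm-sym-mass-hull} to $\ell=k$, $k_0=0$, and read off $\prod_{i=1}^{k}\frac{q^{2n-2i+2}-1}{q^{i}-1}$, which is exactly the claimed formula after reordering. One caution: the first displayed product you write down, $\prod_{i=1}^{k}\frac{q^{2n-k-i+2}-1}{q^{k}-1}$, obtained by plugging directly into the \emph{statement} of Theorem~\ref{thm-sym-mass-hull}, is \emph{not} equal to the target (for $n=k=2$ it is not even an integer); the displayed statement of that theorem carries a typo, and the correct product $\prod_{i=1}^{\ell}\frac{q^{2n-2k_0-2i+2}-1}{q^{2k_0+i}-q^{2k_0}}$ appears inside its proof. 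Your ``cleaner route'' via the one-step recursion recovers precisely this correct form, so your argument goes through, but drop the first substitution rather than trying to ``reindex'' it.
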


\begin{proof}
Note that a symplectic self-orthogonal $[2n,k]_q$ code is a linear $[2n,k]_q$ code with $k$-dimensional symplectic hull. Then $k_0=0$. By
Theorem \ref{thm-sym-mass-hull}, we have
$$|{\rm SHull}[2n,k]_q^k|=\left(\prod_{i=1}^{k} \frac{q^{2n-2i+2}-1}{q^{i}-1}\right)=\frac{(q^{2n-2k+2}-1)(q^{2n-2k+4}-1)\cdots(q^{2n}-1)}
{(q^k-1)(q^{k-1}-1)\cdots(q-1)}.$$
This completes the proof.
\end{proof}

In order to obtain the Gilbert-Varshamov bound for QECCs, they \cite{ISIT-Jin} also gave a formula for the number of distinct symplectic self-orthogonal codes containing a given nonzero vector. Unfortunately, we also find that there is a small typo in the formula. Next, we give a corrected formula.

\begin{lem}\label{lem-SSO}
Assume that $n$ and $k$ are two integers with $1\leq k\leq n$. Given a nonzero vector ${\bf u}\in \F_q^{2n}$, the number of distinct symplectic self-orthogonal $[2n,k]_q$ codes containing ${\bf u}$ is
$$\frac{(q^{2n-2k+2}-1)(q^{2n-2k+4}-1)\cdots(q^{2n-2}-1)}
{(q^{k-1}-1)\cdots(q-1)}.$$
\end{lem}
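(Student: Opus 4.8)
The plan is to count, via a double-counting (orbit-counting) argument, the set of pairs $(\mathbf{u}, C)$ where $C$ is a symplectic self-orthogonal $[2n,k]_q$ code and $\mathbf{u}$ is a fixed nonzero vector lying in $C$, and then divide by the number of choices of such $C$ for \emph{one} fixed $\mathbf{u}$, which by symmetry is the quantity we want. The key symmetry input is Theorem \ref{thm-C1C2} together with the fact, used throughout the section, that the symplectic group $\mathbb{S}\mathrm{p}^1_{2n}(q)$ acts transitively on the set of nonzero vectors of $\F_q^{2n}$ (since $\langle \mathbf{u},\mathbf{u}\rangle_{\rm s}=0$ for all $\mathbf{u}$, every nonzero vector is isotropic, and Witt's theorem gives transitivity). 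Hence the number of symplectic self-orthogonal $[2n,k]_q$ codes containing a given nonzero $\mathbf{u}$ does not depend on $\mathbf{u}$; call it $N$. Summing over all $q^{2n}-1$ nonzero vectors counts each self-orthogonal $[2n,k]_q$ code once for each of its $q^k-1$ nonzero codewords, so
$$
N\cdot(q^{2n}-1)=(q^k-1)\cdot|{\rm SHull}[2n,k]_q^k|,
$$
and Corollary \ref{Cor-SSO} then pins down $N$.

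The first step I would carry out is to establish the transitivity claim cleanly: any nonzero $\mathbf{u}$ can be completed to a symplectic basis of $\F_q^{2n}$ (a standard fact, and in any case extractable from the machinery in Theorem \ref{thm-k-even} applied to a symplectic LCD line — but a one-dimensional code is never LCD when $\mathbf{u}$ is isotropic, so I would instead invoke the standard symplectic basis extension theorem), and since $\mathbb{S}\mathrm{p}^1_{2n}(q)$ acts transitively on ordered symplectic bases it acts transitively on nonzero vectors. Then, for $Q\in\mathbb{S}\mathrm{p}^1_{2n}(q)$, the map $C\mapsto CQ$ sends symplectic self-orthogonal codes to symplectic self-orthogonal codes (because $\langle\mathbf{x}Q,\mathbf{y}Q\rangle_{\rm s}=\langle\mathbf{x},\mathbf{y}\rangle_{\rm s}$) and sends codes containing $\mathbf{u}$ to codes containing $\mathbf{u}Q$. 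So the number of self-orthogonal $[2n,k]_q$ codes through $\mathbf{u}$ equals the number through $\mathbf{u}Q$; as $Q$ ranges over the group, $\mathbf{u}Q$ ranges over all nonzero vectors, giving independence of $\mathbf{u}$.

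Second, the double count: form $\mathcal{S}=\{(\mathbf{u},C):\mathbf{u}\in\F_q^{2n}\setminus\{\mathbf 0\},\ C\in{\rm SHull}[2n,k]_q^k,\ \mathbf{u}\in C\}$ and count $|\mathcal{S}|$ two ways. Fixing $\mathbf{u}$ first gives $|\mathcal{S}|=(q^{2n}-1)N$; fixing $C$ first gives $|\mathcal{S}|=(q^k-1)\,|{\rm SHull}[2n,k]_q^k|$. Solving for $N$ and substituting the value from Corollary \ref{Cor-SSO},
$$
N=\frac{q^k-1}{q^{2n}-1}\cdot\frac{(q^{2n-2k+2}-1)(q^{2n-2k+4}-1)\cdots(q^{2n}-1)}{(q^k-1)(q^{k-1}-1)\cdots(q-1)}
=\frac{(q^{2n-2k+2}-1)\cdots(q^{2n-2}-1)}{(q^{k-1}-1)\cdots(q-1)},
$$
where the factor $q^k-1$ cancels and the factor $q^{2n}-1$ cancels the top term of the telescoping product in the numerator. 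This is exactly the claimed formula.

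The main obstacle is the transitivity of $\mathbb{S}\mathrm{p}^1_{2n}(q)$ on nonzero vectors; everything else is bookkeeping. This fact is standard symplectic linear algebra, but the paper has not stated it, so I would include a short self-contained argument: given nonzero $\mathbf{u},\mathbf{v}$, pick $\mathbf{u}',\mathbf{v}'$ with $\langle\mathbf{u},\mathbf{u}'\rangle_{\rm s}=\langle\mathbf{v},\mathbf{v}'\rangle_{\rm s}=1$ (possible since the symplectic form is nondegenerate), extend $\{\mathbf{u},\mathbf{u}'\}$ and $\{\mathbf{v},\mathbf{v}'\}$ to symplectic bases using Proposition \ref{prop-rank}-type reasoning, and let $Q$ be the change-of-basis matrix; then $Q\in\mathbb{S}\mathrm{p}^1_{2n}(q)$ and $\mathbf{u}Q=\mathbf{v}$. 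One minor point to be careful about: we also need ${\rm SHull}[2n,k]_q^k$ nonempty and every such code genuinely to contain $q^k-1$ nonzero vectors (immediate, as it has dimension $k$), and we need $1\le k\le n$ so that symplectic self-orthogonal $[2n,k]_q$ codes exist, which is exactly the hypothesis.
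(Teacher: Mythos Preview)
Your argument is correct, but it proceeds along a genuinely different route from the paper's. The paper proves the lemma by a direct recursion on $k$: it extends each self-orthogonal $[2n,k-1]_q$ code through $\mathbf{u}$ to codes of dimension $k$ by adjoining a vector of $(C)^{\perp_{\rm s}}\setminus C$, counts the overcount, and unwinds the recursion from the base case $B_1(\mathbf{u})=1$. Your approach instead invokes the transitivity of $\mathbb{S}\mathrm{p}^1_{2n}(q)$ on nonzero vectors to show the count is independent of $\mathbf{u}$, and then double-counts pairs $(\mathbf{u},C)$ to reduce immediately to the already-proven Corollary~\ref{Cor-SSO}. Your method is shorter and more conceptual, and it explains \emph{why} the answer is independent of $\mathbf{u}$ (in the paper this independence emerges only a posteriori from the recursion). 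The trade-off is that you must import the transitivity fact, which is standard symplectic linear algebra but not stated in the paper; your sketch via symplectic-basis extension is adequate. One small quibble: your opening sentence cites Theorem~\ref{thm-C1C2} as a ``key symmetry input,'' but that theorem concerns symplectic LCD codes and is not actually used in your argument---all you need is that $Q\in\mathbb{S}\mathrm{p}^1_{2n}(q)$ preserves the symplectic form (hence maps self-orthogonal codes to self-orthogonal codes), together with the transitivity on nonzero vectors.
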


\begin{proof}
We denote by $C_{2n,k}({\bf u})$ a symplectic self-orthogonal $[2n,k]_q$ code containing ${\bf u}$ and denote by $B_k({\bf u})$ the number of such codes. Then a given $C_{2n,k-1}({\bf u})$ can be extended to some $C_{2n,k}({\bf u})$ by adding a vector ${\bf c}\in (C_{2n,k-1})^{\perp_{\rm s}}\backslash {\rm Hull_s}(C_{2n,k-1})$.
Note that every vector in $\F_q^{2n}$ is orthogonal to itself with symplectic inner product.
Thus, we can obtain $q^{2n-(k-1)}-q^{k-1}$ symplectic self-orthogonal $[2n,k]_q$ codes containing ${\bf u}$ from $C_{2n,k-1}({\bf u})$ through this way. Note that some of these codes may be the same.

For any $C_{2n,k}({\bf u})$, we claim that there are $(q^{k-1}-1)q^{k-1}$ ways to obtain $C_{2n,k}({\bf u})$ through the above way.
It is obvious that there is a subcode $C_{2n,k-1}({\bf u})$ of $C_{2n,k}({\bf u})$ and a vector ${\bf x}$ such that
$$C_{2n,k}({\bf u})=C_{2n,k-1}({\bf u})\oplus \langle {\bf x}\rangle.$$
Now it is sufficient to count the number of ${\bf x}$'s and $C_{2n,k-1}({\bf u})$'s that satisfy the above equation.
By \cite[Theorem 4, p.698]{MacWilliams}, the number of $C_{2n,k-1}({\bf u})$'s is
$N=\begin{bmatrix}
k-1 \\
k-2 \\
\end{bmatrix}_q=\frac{q^{k-1}-1}{q-1}.$
For a fixed $C_{2n,k-1}({\bf u})$, the number of ${\bf x}$'s is $N_{\bf x}=|C_{2n,k}({\bf u})\backslash C_{2n,k-1}({\bf u})|=q^k-q^{k-1}$.
It follows that there are $N\cdot N_{\bf x}=(q^{k-1}-1)q^{k-1}$ ways to obtain $C_{2n,k}({\bf u})$ through the above way. Hence the claim is valid.

Thus, we obtain
\begin{align*}
 B_k({\bf u}) & =\frac{(q^{2n-(k-1)}-q^{k-1})\cdot B_{k-1}({\bf u})}{(q^{k-1}-1)q^{k-1}}=\frac{q^{2n-2(k-1)}-1}{q^{k-1}-1}\cdot B_{k-1}({\bf u}).
\end{align*}
Repeating the above operation $k-1$ times, we obtain
$$B_k({\bf u})=\left(\prod_{i=1}^{k-1}\frac{q^{2n-2i}-1}{q^{i}-1}\right) \cdot B_1({\bf u}).$$
Combining with the fact that $B_1({\bf u})=1$, the result follows.
\end{proof}

Finally, we end this subsection with a remark.
\begin{remark}
By exhaustive search \cite{magma}, we find that there are $40$ distinct ternary symplectic self-orthogonal $[4,2]$ codes, and four of them are symplectic self-orthogonal $[4,2]$ codes containing the codeword $(1,0,0,0)$. This is precisely confirmed through the formulas obtained in Corollary \ref{Cor-SSO} and Lemma \ref{lem-SSO}.

In fact, a linear $[2n,k]_q$ code is equivalent to an additive $(n,q^k)_{q^2}$ code. There is also a correspondence between inner products. We do not introduce them here and one can refer to \cite{IT-qc,quantum-codes-IT-2,FFA-Sym,Dcc-EAqecc} for this topic. With arguments similar to these in the Hermitian case, we can see that
the mass formula in Theorem \ref{thm-sym-mass-hull} is useful for classifying inequivalent additive codes over $\F_{q^2}$ with prescribed hull dimension.
\end{remark}

\subsection{The asymptotic behavior}

In the following, we will analyze the asymptotic behavior of the size of the orbits $C_0\mathbb{S}\mathrm{p}^1_{2n}(q)$. Furthermore, we investigate the asymptotic behavior of linear codes with prescribed symplectic hull dimension. Recall that the sequence $g_{q,1}, g_{q,2}, \ldots$ is strictly decreasing and has positive limit, which is denoted by $g_{q,\infty}$.

\begin{theorem}
Assume that $n$, $k$ and $\ell$ are three nonnegative integers with $\ell\leq k\leq 2n$ and $k=2k_0+\ell$.
Then for fixed $\ell$, we have
$$\lim_{{\tiny\begin{array}{c}
          k\rightarrow \infty \\
          (2n-k)\rightarrow\infty
        \end{array}}
}\frac{|{\rm SHull}[2n,k]_q^\ell|}{\begin{bmatrix}
                             2n \\
                             k \\
                           \end{bmatrix}_q}
=\frac{q^{\ell}\cdot g_{q,\infty}}{h_{q,\ell}\cdot g_{q^2,\infty}},$$
where $h_{q,\ell}=\prod_{i=1}^{\ell}(q^i-1)$ and $g_{q,n}=\prod_{i=1}^n\left(1-\frac{1}{q^i}\right).$
In particular, if $\ell=0$, then
$$\lim_{{\tiny\begin{array}{c}
          k\rightarrow \infty \\
          (2n-k)\rightarrow\infty
        \end{array}}
}\frac{|{\rm SHull}[2n,k]^0_q|}{\begin{bmatrix}
                             2n \\
                             k \\
                           \end{bmatrix}_q}=
\lim_{{\tiny\begin{array}{c}
          k_0\rightarrow \infty \\
          (n-k_0)\rightarrow\infty
        \end{array}}
}\frac{|{\rm SLCD}[2n,2k_0]_q|}{\begin{bmatrix}
                             2n \\
                             2k_0 \\
                           \end{bmatrix}_q}
=\frac{g_{q,\infty}}{g_{q^2,\infty}}.$$
\end{theorem}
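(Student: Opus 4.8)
The plan is to reduce the asymptotic computation to the mass formula already established in Theorem~\ref{thm-sym-mass-hull} and then carefully extract the limiting behaviour of each factor. First I would write $k=2k_0+\ell$ and split the Gaussian binomial coefficient $\begin{bmatrix} 2n \\ k \end{bmatrix}_q$ analogously to the Hermitian case: separating the $\ell$ extra columns gives
\begin{align*}
\begin{bmatrix} 2n \\ k \end{bmatrix}_q=\left(\prod_{i=1}^{\ell}\frac{q^{2n-2k_0-i+1}-1}{q^{2k_0+i}-1}\right)\begin{bmatrix} 2n \\ 2k_0 \end{bmatrix}_q.
\end{align*}
Dividing $|{\rm SHull}[2n,k]_q^\ell|$ from Theorem~\ref{thm-sym-mass-hull} by this, the $\ell$ "extra" factors pair up: for each $i$ one compares $\frac{q^{2n-2k_0-2i+2}-1}{q^{2k_0+i}-q^{2k_0}}$ against $\frac{q^{2n-2k_0-i+1}-1}{q^{2k_0+i}-1}$. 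As $k\to\infty$ and $2n-k\to\infty$ (equivalently $k_0\to\infty$ and $n-k_0\to\infty$), the dominant powers of $q$ in numerators agree, and the denominators differ only by $q^{2k_0+i}-q^{2k_0}=q^{2k_0}(q^i-1)$ versus $q^{2k_0+i}-1\sim q^{2k_0+i}$; so the ratio of these two factors tends to $\frac{q^{i}}{q^i-1}$ divided by... — more precisely each such factor contributes $\frac{q}{q^i-1}$ in the limit after accounting for the $q^{2k_0}$, exactly mirroring the bound $\lim \frac{A_{n,k_0,i}}{\cdots}=\frac{q}{q^{2i}-1}$ used in the proof of Theorem~\ref{thm-asm-1}. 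Collecting the $\ell$ factors yields the prefactor $\frac{q^\ell}{h_{q,\ell}}$.

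Next I would handle the remaining ratio $|{\rm SLCD}[2n,2k_0]_q| / \begin{bmatrix} 2n \\ 2k_0 \end{bmatrix}_q$. By Theorem~\ref{thm-mass} this equals
\begin{align*}
\frac{q^{2k_0(n-k_0)}\begin{bmatrix} n \\ k_0 \end{bmatrix}_{q^2}}{\begin{bmatrix} 2n \\ 2k_0 \end{bmatrix}_q}.
\end{align*}
Using the rewriting \eqref{eq-n-k}, $\begin{bmatrix} n \\ k_0 \end{bmatrix}_{q^2}=q^{2k_0(n-k_0)}\frac{g_{q^2,n}}{g_{q^2,k_0}g_{q^2,n-k_0}}$ and $\begin{bmatrix} 2n \\ 2k_0 \end{bmatrix}_q=q^{2k_0(2n-2k_0)}\frac{g_{q,2n}}{g_{q,2k_0}g_{q,2n-2k_0}}$. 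The powers of $q$ cancel: $q^{2k_0(n-k_0)}\cdot q^{2k_0(n-k_0)}=q^{4k_0(n-k_0)}=q^{2k_0(2n-2k_0)}$. So the ratio becomes $\frac{g_{q^2,n}\,g_{q,2k_0}\,g_{q,2n-2k_0}}{g_{q^2,k_0}\,g_{q^2,n-k_0}\,g_{q,2n}}$, and since each $g_{q^a,m}\to g_{q^a,\infty}$ as $m\to\infty$, while $g_{q,2k_0}/g_{q^2,k_0}$ and $g_{q,2n-2k_0}/g_{q^2,n-k_0}$ and $g_{q^2,n}/g_{q,2n}$ all converge to $g_{q,\infty}/g_{q^2,\infty}$-type expressions — more carefully, $g_{q,2m}=g_{q^2,m}\cdot\prod_{j=1}^{m}(1-q^{-(2j-1)})$, so $g_{q,2m}/g_{q^2,m}\to \prod_{j\ge 1}(1-q^{-(2j-1)})=g_{q,\infty}/g_{q^2,\infty}$. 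Tracking all three such quotients and simplifying leaves $\frac{g_{q,\infty}}{g_{q^2,\infty}}$, which is the claimed $\ell=0$ limit. Multiplying by the prefactor $\frac{q^\ell}{h_{q,\ell}}$ gives the general formula.

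The main obstacle I anticipate is the bookkeeping in the first step: getting the precise limit of the $\ell$ correction factors right, in particular correctly tracking the $q^{2k_0}$ in the denominator $q^{2k_0+i}-q^{2k_0}$ and confirming it combines with the exponent shift between $\begin{bmatrix} 2n \\ k \end{bmatrix}_q$ and $\begin{bmatrix} 2n \\ 2k_0 \end{bmatrix}_q$ so that no stray power of $q$ survives. One must verify that $\lim \frac{(q^{2n-2k_0-2i+2}-1)/(q^{2k_0}(q^i-1))}{(q^{2n-2k_0-i+1}-1)/(q^{2k_0+i}-1)}$ simplifies to exactly $\frac{q}{q^i-1}$, relying on $q^{2n-2k_0-2i+2}-1\sim q^{2n-2k_0-2i+2}$, $q^{2n-2k_0-i+1}-1\sim q^{2n-2k_0-i+1}$ and $q^{2k_0+i}-1\sim q^{2k_0+i}$; the exponent arithmetic $2n-2k_0-2i+2 - (2k_0) = 2n-4k_0-2i+2$ against $2n-2k_0-i+1-(2k_0+i)=2n-4k_0-2i+1$ indeed leaves a factor $q^1$, confirming the $q$ in the numerator of $\frac{q}{q^i-1}$. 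Once this is pinned down, the rest is routine manipulation of the $g$-sequences, exactly as in the proof of Theorem~\ref{thm-asm-1}, so I would present it compactly by analogy.
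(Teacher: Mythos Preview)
Your proposal is correct and follows essentially the same approach as the paper: you split off the $\ell$ extra factors from both $|{\rm SHull}[2n,k]_q^\ell|$ (via Theorem~\ref{thm-sym-mass-hull}) and $\begin{bmatrix}2n\\k\end{bmatrix}_q$, show each paired ratio tends to $\frac{q}{q^i-1}$, and then handle the remaining $\ell=0$ ratio by rewriting both Gaussian binomials via \eqref{eq-n-k} so that the powers of $q$ cancel and only a ratio of $g$-products survives. Your exponent bookkeeping and the identity $g_{q,2m}=g_{q^2,m}\prod_{j=1}^m(1-q^{-(2j-1)})$ are more explicit than the paper's presentation, but the argument is the same.
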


\begin{proof}
From Theorem \ref{thm-mass} and Equation $(\ref{eq-n-k})$,
\begin{align*}
  \lim_{{\tiny\begin{array}{c}
          k_0\rightarrow \infty \\
          (n-k_0)\rightarrow\infty
        \end{array}}
}\frac{|{\rm SLCD}[2n,2k_0]_q|}{\begin{bmatrix}
                             2n \\
                             2k_0 \\
                           \end{bmatrix}_q} &=\lim_{{\tiny\begin{array}{c}
          k_0\rightarrow \infty \\
          (n-k_0)\rightarrow\infty
        \end{array}}
}\frac{q^{2k_0(n-k_0)}\begin{bmatrix}
                     n \\
                     k_0
                   \end{bmatrix}_{q^2}}{\begin{bmatrix}
                             2n \\
                             2k_0 \\
                           \end{bmatrix}_q} \\
   &= \lim_{{\tiny\begin{array}{c}
          k_0\rightarrow \infty \\
          (n-k_0)\rightarrow\infty
        \end{array}}
}\frac{q^{2k_0(n-k_0)}\cdot q^{2k_0(n-k_0)}\frac{g_{q^2,n}}{g_{q^2,k_0}\cdot g_{q^2,n-k_0}}}{q^{2k_0(2n-2k_0)}\frac{g_{q,2n}}{g_{q,2k_0}\cdot g_{q,2n-2k_0}}}\\
   &=\lim_{{\tiny\begin{array}{c}
          k_0\rightarrow \infty \\
          (n-k_0)\rightarrow\infty
        \end{array}}
}\frac{g_{q^2,n}\cdot g_{q,2k_0}\cdot g_{q,2n-2k_0}}{g_{q,2n}\cdot g_{q^2,k_0}\cdot g_{q^2,n-k_0}}\\
&=\frac{g_{q,\infty}}{g_{q^2,\infty}}.
\end{align*}
This completes the proof of the latter part. Next, let us consider the first part.
It is observed that
{\small  $$\begin{bmatrix}
2n \\
  k \\
  \end{bmatrix}_q  =\frac{q^{2n-k+1}-1}{q^k-1}\begin{bmatrix}
                             2n \\
                             k-1 \\
                           \end{bmatrix}_q \\
    =\left(\prod_{i=2k_0+1}^{k}\frac{q^{2n-i+1}-1}{q^i-1}\right)\begin{bmatrix}
                             2n \\
                             2k_0 \\
                           \end{bmatrix}_q\\
   =
\left(\prod_{i=1}^{\ell}\frac{q^{2n-i-2k_0+1}-1}{q^{2k_0+i}-1}\right)\begin{bmatrix}
                             2n \\
                             2k_0 \\
                           \end{bmatrix}_q.$$}
For $1\leq i\leq \ell$, it can be checked that
\begin{align*}
  \lim_{{\tiny\begin{array}{c}
          k\rightarrow \infty \\
          (2n-k)\rightarrow\infty
        \end{array}}
}\frac{\frac{q^{2n-k-i+2}-1}{q^{k}-q^{2k_0}}}{\frac{q^{2n-i-2k_0+1}-1}{q^{2k_0+i}-1}} &  =\frac{q}{q^i-1}.
\end{align*}
From Theorem \ref{thm-sym-mass-hull} and the above argument, we have
\begin{align*}
  \lim_{{\tiny\begin{array}{c}
          k\rightarrow \infty \\
          (2n-k)\rightarrow\infty
        \end{array}}
}\frac{|{\rm SHull}[2n,k]_q^\ell|}{\begin{bmatrix}
                             2n \\
                             k \\
                           \end{bmatrix}_q} & =  \lim_{{\tiny\begin{array}{c}
          k_0\rightarrow \infty \\
          (n-k_0)\rightarrow\infty
        \end{array}}
} \frac{\left(\prod_{i=1}^{\ell}  \frac{q^{2n-k-i+2}-1}{q^{k}-q^{2k_0}}\right)|{\rm SLCD}[2n,2k_0]_q|}
   {\left(\prod_{i=1}^{\ell}\frac{q^{2n-i-2k_0+1}-1}{q^{2k_0+i}-1}\right)\begin{bmatrix}
                             2n \\
                             2k_0 \\
                           \end{bmatrix}_q}\\
   &=\frac{q^{\ell}\cdot g_{q,\infty}}{h_{q,\ell}\cdot g_{q^2,\infty}}.
\end{align*}
This completes the proof.
\end{proof}

\section{Conclusion}\label{sec-5}
In this paper, we have pushed further the study of mass formulas for linear codes with prescribed Hermitian and symplectic hull dimensions. We have obtained mass formulas for Hermitian (resp. symplectic) LCD codes by considering the action of the unitary (resp. symplectic) group on the set of all Hermitian (resp. symplectic) LCD codes. As a consequence, we have obtained mass formulas for linear codes with prescribed Hermitian and symplectic hull dimensions, and have further studied their asymptotic behavior. In addition, we have also provided the corrections for the formulas obtained by Jin and Xing \cite{ISIT-Jin} for symplectic self-orthogonal codes.\\




\end{sloppypar}

\end{document}